\newtheorem{defeng}{Definition}[section]
\newtheorem{theorem}[defeng]{Theorem}
\newtheorem{lemma}[defeng]{Lemma}
\newtheorem{conjecture}[defeng]{Conjecture}
\newtheorem{remark}[defeng]{Remark}}
\newtheorem{question}[defeng]{Question}}
\theoremstyle{break}\theorembodyfont{\rmfamily} }
\theoremstyle{break}\theorembodyfont{\rmfamily} }
\newcommand{\m}{\mathcal}
\newcommand{\mr}{\mathring}
\newcounter{claim}
\newenvironment{proof}[1][]%
 {\noindent {\setcounter{claim}{0}\sc proof ---
   }{#1}{}}{\hfill$\Box$\vspace{2ex}} 
\newenvironment{claim}[1][]%
{\refstepcounter{claim}\vspace{1ex}\noindent{(\it\arabic{claim}){#1}{}}\it}{\vspace{1ex}}
\newenvironment{proofclaim}[1][]%
	{\noindent {}{#1}{}}{ This proves~(\arabic{claim}).\vspace{1ex}}
\newcommand{\sm}{\setminus} 
\title{Excluding cycles with a fixed number of chords}
\author{Pierre Aboulker\thanks{LIAFA, Universit\'e Paris 7 --
     Paris Diderot (France),\ email: pierreaboulker@liafa.jussieu.fr.
    Partially supported by \emph{Agence Nationale de la Recherche} under reference
    \textsc{anr 10 jcjc 0204 01}.}
    \and Nicolas Bouquet\thanks{LIRMM, Universit\'e Montpellier II (France), \ email:bousquet@lirmm.fr}
}
\begin{document}

\maketitle

\begin{abstract}
Trotignon and Vu\v{s}kovi\'c completely characterized graphs that do not contain cycles with exactly one chord. In particular, they show that such a graph $G$ has chromatic number at most $\max(3,\omega(G))$. We generalize this result to the class of  graphs that do not contain cycles with exactly two chords and the class of graphs that do not contain cycles with exactly three chords. \\
More precisely we prove that graphs with no cycle with exactly two chords have chromatic number at most $6$. And a graph $G$ with no cycle with exactly three chords have chromatic number at most $\max(96,\omega(G)+1)$.
\end{abstract}

\section{Introduction}

The \emph{chromatic number} of a graph $G$, denoted by $\chi(G)$, is the minimum number of colors needed to vertex-color $G$ such that two adjacent vertices receive distinct colors. 
A clique is a graph such that every two vertices are adjacent. 
The \emph{clique number} of a graph $G$, denoted by $\omega(G)$, is the number of vertices of the largest clique in $G$.
A class of graphs is \emph{hereditary} if, for any graph $G$ in the class, every subgraph of $G$ is in the class.
We say that a graph $G$ is {\em $H$-free} if  $G$ does not contain the graph $H$ as an induced subgraph. 
If $\mathcal H$ is a class of graphs we say that a graph $G$ is $\mathcal H$-free if for any $H \in \mathcal H$, $G$ is $H$-free. 
Classes of graphs defined by forbidding some graphs as induced subgraphs are clearly hereditary.

It is clear that $\omega(G)$ is  a lower bound of $\chi(G)$ since vertices of a clique are colored with pairwise distinct colors. 
Gy\'arf\'as introduced the following notion \cite{gyarfas:perfect}: a graph is said to be \emph{$\chi$-bounded} if there exists a function $f$ such that for every subgraph $H$ of $G$, $\chi(H) \leq f(\omega(H))$. 
A class of graphs $\m C$ is said to be \emph{$\chi$-bounded} is every graph in the class is $\chi$-bounded. 
Observe that, in order to prove that a hereditary class $\m C$ is $\chi$-bounded by a function $f$, it is enough to prove that for any graph $G$ in $\m C$, $\chi(G) \le f(\omega(G))$. 
For instance, graphs $\chi$-bounded by the function $f(x)=x$ are known as \emph{perfect graphs}. M.~Chudnovsky, N.~Robertson, P.~Seymour, and R.~Thomas proved \cite{chudnovsky.r.s.t:spgt} that perfect graphs are exactly the graphs that do not admit odd cycles of length at least $5$ nor complement of odd cycles of length at least $5$, solving the famous strong perfect graph conjecture proposed by C.~Berge \cite{perfectConj}.
So, a natural question arises:

\begin{question}
What kind of induced structure is needed to be forbidden in order to get a $\chi$-bounded class?
\end{question}

Let us now survey some results on $\chi$-boundedness by emphasizing on what different meanings ``structure'' can take.

\medskip

If $H$ is a graph, we denote by \textit{Forb(H)} the class of $H$-free graphs.
A first way to tackle the problem is to determine for which graphs $H$, Forb($H$) is $\chi$-bounded. 
For example, it is proved in \cite{gyarfas:perfect} that Forb($P_k$) is $\chi$-bounded (where $P_k$ denotes the chordless path of length $k$).
In  \cite{erdos}, Erd\H{o}s proved that there exists graphs  with arbitrarily large chromatic number and arbitrarily large girth. 
So, if $H$ contains a cycle, Forb($H$) is not $\chi$-bounded.
It is actually conjectured in \cite{gyarfas:perfect} that Forb($H$) is $\chi$-bounded if and only if $H$ is a forest. 
The deeper results concerning this conjecture are certainly  results of Kierstead and Penrice \cite{radius2} and Kierstead and Zhu~\cite{radius3} proving that the conjecture holds for every tree of radius at most $2$ and several trees of radius $3$.
To get out from this conjecture, we need to forbid a class of graph $\mathcal H$ such that $\mathcal H$ contains graphs with arbitrarily large girth.

\medskip

A second way to forbid induced structures is the following: fix a graph $H$, and forbid every induced subdivision of $H$. 
We denote by \textit{Forb*($H$)} the class of graphs that do not contain induced subdivisions of $H$.
The class Forb*($H$) has been proved to be $\chi$-bounded for a number of examples.
The most beautiful one is certainly the proof by Scott \cite{scottTree} that for any forest $F$, Forb*($F$) is $\chi$-bounded.  
In the same paper he  conjectured that, for any graph $H$, Forb*($H$)  is $\chi$-bounded. 
Unfortunately, this conjecture has recently been disproved by Kozik \textit{et al.} \cite{polack}. 
Based on this work, Chalopin \textit{et al.}~\cite{liScott} gave a precise description of a number of graphs $H$ for which Forb*($H$) is not $\chi$-bounded.
There is no general conjecture on which $H$ has to be forbidden in order to ensure Forb*($H$) is $\chi$-bounded.

\medskip

A third way is to forbid a graph $H$ for which some edges can be subdivided but some cannot. 
More generally, to forbid a class of graphs $\m H$ such that, for each $H \in \m H$, some edges can be subdivided and some cannot. 
A few classes defined this way has  been studied (see \cite{aboulkerRTV:propeller} and \cite{nicolas.kristina:one} for instance).
In \cite{nicolas.kristina:one}, Trotignon and Vu\v{s}kovi\'c proved that the class of graphs that do not contain cycles with a unique chord is $\chi$-bounded by the function $max(3, \omega(G))$. 
Forbidding  cycles with a unique chord is equivalent to forbid a diamond (a  diamond is a cycle of length $4$ with a diagonal) such that every edge but the diagonal can be subdivided.

A \emph{k-cycle} is a chordless cycle with exactly $k$ chords. 
We call $\mathcal C_k$ the class of $k$-cycle-free graphs i.e.\ the class of graphs that do not contain cycles with exactly $k$ chords. 
So, the cited result on the class of graphs that do not contain a cycle with a unique chord may be rephrased as follows : $\mathcal C_1$ is $\chi$-bounded.

\medskip

The two main results of this paper are that both $\mathcal C_2$ (see Theorem \ref{2-cycle}) and $\mathcal C_3$ (see Theorem \ref{3-cycleclique4}) are $\chi$-bounded.
The statement of Theorem \ref{2-cycle} deals with a super-class of $\mathcal C_2$, see Section \ref{sec2-cycle} for more details. Since graphs which do not contain a $2$-cycle do not contain $K_4$ as an induced subgraph, proving $\chi$-boundedness is equivalent to prove that the chromatic number is bounded. We actually prove that the chromatic number in this case is at most $6$. An immediate lower bound on the chromatic number is $3$, given by odd cycles. Close the gap between these two values can be an interesting point.

The class $\m C_3$, contrary to $\m C_2$ that does not admit graphs with cliques larger than the triangle (because $K_4$ is a $2$-cycle), admits graphs containing arbitrarily large cliques. We prove that the chromatic number of a graph $G \in \m C_3$ is at most $\max(96,\omega(G)+1)$. 
In addition we provide examples of graphs $G$ with arbitrarily large clique such that $\chi(G) = \omega(G)+1$, showing that our bound is asymptotically tight.
Nevertheless, lower bound of $96$ is surely far away from an optimal bound for graphs in $\m C_3$ that do not contain large cliques.

\medskip

Here is an outline of the paper.
In Section \ref{notations}, we give some terminologies and  in Section \ref{secPrelCycle} we describe the general method used in the proofs. Section \ref{sec2-cycle} is concerned with the class $\m C_2$ and Section \ref{sec3-cycle}  is concerned with the class $\m C_3$. 
We also propose the following conjecture suggested by our results: 

\begin{conjecture}
For any integer $k \ge 4$, $\mathcal C_k$ is $\chi$-bounded.
\end{conjecture}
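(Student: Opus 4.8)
The plan is to attack the conjecture with the same kind of two-part strategy that is used in this paper for $\m C_2$ and $\m C_3$: a \emph{local} ingredient bounding the chromatic number of neighbourhoods, and a \emph{global} decomposition theorem for the sufficiently connected members of $\m C_k$, the two being glued together by the general method of Section~\ref{secPrelCycle}.

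First, the cutset reductions, which should carry over with no change. The class $\m C_k$ is hereditary, and no cycle crosses a clique cutset; so if $G \in \m C_k$ has a clique cutset then $\chi(G)$ equals the largest chromatic number among the blocks of the decomposition, each block again lies in $\m C_k$, and one inducts on $|V(G)|$. In the same spirit one tries to absorb the other ``flat'' cutsets that show up for small $k$ — $1$-cutsets, proper $2$-cutsets, and possibly $2$-joins — into a bounded additive or multiplicative loss. Hence it is enough to $\chi$-bound the sufficiently connected graphs in $\m C_k$.

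Second, the local ingredient, which is clean and uniform in $k$. If $v \in V(G)$ and $G[N(v)]$ contained an induced path $p_1 p_2 \cdots p_{k+2}$ on $k+2$ vertices, then the cycle $v\, p_1\, p_2 \cdots p_{k+2}\, v$ would have exactly $k$ chords — namely $v p_2, \dots, v p_{k+1}$, there being no $p_i p_j$ chord because the path is induced and $v$ being adjacent to every $p_i$ — contradicting $G \in \m C_k$. So $G[N(v)]$ is $P_{k+1}$-free, and by Gy\'arf\'as' theorem that Forb($P_{k+1}$) is $\chi$-bounded we obtain $\chi(G[N(v)]) \le g_k(\omega(G)-1)$ for some function $g_k$. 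A slightly more careful version of the same count shows that a long induced path of $G$ can receive only a bounded number of ``branching'' attachments before a cycle with exactly $k$ chords is forced, which is exactly the mechanism needed to run the method of Section~\ref{secPrelCycle}, and it survives for every fixed $k$.

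The real difficulty, and the step I expect to be the main obstacle, is the global one: proving a structure theorem to the effect that a sufficiently connected $G \in \m C_k$ with no clique cutset either lies in one of finitely many (depending on $k$) explicitly described basic classes — each $\chi$-bounded because it has, e.g., bounded tree-width, or bounded degree outside a small set, or is ``close to'' a $P_{k+1}$-free graph — or admits one of the allowed decompositions. As $k$ grows the forbidden object is large and flexible, so both the family of basic graphs and the repertoire of decompositions grow with $k$ and are not obviously of bounded complexity; obtaining such a theorem uniformly in $k$, rather than re-deriving it by hand as in the cases $k = 2, 3$, seems to require a genuinely new idea. One plausible way around a full structure theorem is an induction on $k$ powered by the local lemma: since $G[N(v)]$ is both $P_{k+1}$-free and constrained by membership in $\m C_k$, one might control the interaction of consecutive BFS layers tightly enough to run a Gy\'arf\'as-style layering argument directly, which would presumably give a bound of the form $\chi(G) \le \max(c_k, \omega(G) + O(1))$, consistent with the $k = 3$ result and with the extremal examples mentioned above.
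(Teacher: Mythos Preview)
This statement is a \emph{conjecture}, and the paper offers no proof of it; it is stated immediately after the outline of the paper as an open problem suggested by the results for $k\le 3$. So there is nothing in the paper to compare your argument against, and what you have written is (and you say so yourself) a strategy sketch rather than a proof.

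A couple of comments on the sketch itself.

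Your local observation is correct and worth recording: if $p_1\cdots p_{k+2}$ is an induced path inside $N(v)$, then $vp_1\cdots p_{k+2}v$ is a cycle whose only chords are $vp_2,\dots,vp_{k+1}$, i.e.\ exactly $k$ of them; hence every neighbourhood in a graph of $\m C_k$ is $P_{k+1}$-free and therefore $\chi$-bounded in terms of $\omega$ by Gy\'arf\'as' theorem. This is a genuine and uniform-in-$k$ constraint.

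Your justification of the clique-cutset reduction (``no cycle crosses a clique cutset'') is not the right reason and is in fact false as stated --- cycles can and do cross clique cutsets. The reduction works for the much simpler reason that $\m C_k$ is hereditary, so the two sides $G[C_1\cup K]$ and $G[C_2\cup K]$ are again in $\m C_k$, and one always has $\chi(G)=\max(\chi(G[C_1\cup K]),\chi(G[C_2\cup K]))$ across a clique cutset.

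More importantly, the dichotomy you set up --- ``structure theorem'' versus ``BFS layering'' --- does not quite match what the paper actually does for $k=2,3$. The only place a structure/decomposition result appears is Lemma~\ref{nodiamond} for $k=2$, and it is a very light one (reducing to diamond-free graphs). All the real work, in both Section~\ref{sec2-cycle} and Section~\ref{sec3-cycle}, is of the second kind: one shows by a long case analysis with unimodal paths that a BFS layer $S_i(z)$ cannot contain a specific small configuration (a $1$-cycle; a $V$-cycle; an $X$-cycle; a dragonfly; a butterfly), and then applies Remark~\ref{scott} to halve the chromatic number and recurse. Your final paragraph already points at this route, and it is the one the paper's evidence actually supports; the ``global structure theorem for $\m C_k$'' route you emphasise first has no analogue in the paper and, as you note, looks hopeless uniformly in $k$.

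The honest summary is: the local lemma is fine, the reductions are fine, but the central step --- showing that BFS layers of a graph in $\m C_k$ lie in some class already known to be $\chi$-bounded --- is missing, and nothing in your outline indicates how to carry out that case analysis for general $k$. That is exactly the gap the conjecture is naming.
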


\section{Terminologies and notations} \label{notations}

For standard definition on graphs, the reader should refer to classical books of graph theory, such as~\cite{diestel}.
Let $G$ be a graph, $x$ a vertex of $G$ and $S$ a subset of vertices of $G$.
We denote by $N(x)$ the set of neighbors of~$x$, by $N_S(x)$ the set of neighbors of $x$ in $S$, and by $N(S)$ the set of vertices of $V(G)\sm S$ that have a neighbor in $S$. We denote by $d(x)$ the degree of $x$ and by $d_S(x)$ the degree of $x$ in $S$, i.e.~the number of neighbors of $x$ in $S$.
We denote by $G[S]$ the subgraph of $G$ induced by $S$, and $G\setminus S$ denotes $G[V(G)\setminus S]$.
$S$ is a {\em cutset} of $G$ if
$G \setminus S$ has more than one connected component.
If $S$ induces a clique, then $S$ is a \textit{clique cutset}.
If $\{ x \}$ is a cutset of $G$, then $x$ is a {\em cutvertex}. Note that a cutvertex is a clique cutset.

A {\em path} $P$ is a sequence of distinct vertices $p_1p_2\ldots p_k$,
$k\geq 1$, such that $p_ip_{i+1}$ is an edge for every $1\leq i <k$.
For every $1\leq i <k$, the edge $p_ip_{i+1}$ is an {\em edges of $P$}.
Vertices $p_1$ and $p_k$ are the {\em endpoints} of $P$, and 
$p_2\ldots p_{k-1}$ is the {\em interior} of $P$.
$P$ is referred to as a {\em $p_1p_k$-path}. 
For $1\le i \le j \le k$, we write $p_jPp_i:=p_j \ldots p_i$, $\mathring P:=p_2 \dots p_{k-1}$, $\mathring p_j P\mathring p_i:=p_{j+1} \dots p_{i-1}$.

A cycle $C$ is a sequence of vertices $p_1p_2\ldots p_kp_1$, $k \geq 3$, such that $p_1\ldots p_k$ is a path and $p_1p_k$ is an edge.
Edges $p_ip_{i+1}$,  for  $1\leq i <k$, and edge $p_1p_k$ are called the {\em edges of $C$}.
Let $Q$ be  a path or a cycle. 
The {\em length} of $Q$ is the number of its edges.
An edge $e=uv$ of $G$ is a {\em chord} of $Q$ if $u,v\in V(Q)$, but $uv$ is
not an edge of $Q$. 
A path or a cycle $Q$ in a graph $G$ is {\em chordless} if no edge of $G$ is a chord of $Q$.



A graph $G$ is a {\em complete k-partite} if $V(G)$ can be partitioned into $k$ non-empty subsets $A_1, \dots, A_k$ such that, for $i=1, \dots, k$, $A_i$ is a stable set and, for any $\{i,j\} \subseteq \{1, \dots, k\}$, there is all possible edges between $A_i$ and $A_j$. Sets $A_i$ are called the \emph{partitions' set} of $G$. $G$ is noted $K_{a_1, \dots, a_k}$ where $a_i=|A_i|$ for $i=1, \dots, k$.
 If $k=2$ then $G$ is said to be a {\em complete bipartite graph} and if $k=3$, $G$ is said to be a {\em complete tripartite graph}. 
The graph  $K_{1,1,2}$  is called a {\em diamond}.

\section{Preliminaries} \label{secPrelCycle}

We mentioned that $\m C_1$ was already proved to be $\chi$-bounded, we use this result for graphs in $\m C_1$ that contain no $K_4$, which formally give:

\begin{theorem}[Trotignon and Vu\v{s}kovi\'c \cite{nicolas.kristina:one}] \label{1-cycle}
If $G \in \m C_1$ and $\omega(G) \le 3$, then $\chi(G) \le 3$.
\end{theorem}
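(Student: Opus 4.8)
The plan is to derive this from the decomposition theorem for $\m C_1$ (proved by Trotignon and Vu\v{s}kovi\'c in the same paper), arguing by induction on $|V(G)|$. That theorem says that every connected graph in $\m C_1$ is either \emph{basic} --- a complete graph, a hole, a complete bipartite or strongly $2$-bipartite graph, the Petersen graph or the Heawood graph --- or else has a clique cutset or a proper $2$-cutset. So I would take $G\in\m C_1$ with $\omega(G)\le 3$ and $\chi(G)>3$, minimal in $|V(G)|$, assume $G$ connected (otherwise colour the components separately), and run through these three outcomes.

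The basic case needs only a glance: with $\omega(G)\le 3$ a complete graph is one of $K_1,K_2,K_3$; holes and the Petersen graph have chromatic number at most $3$; and complete bipartite graphs, strongly $2$-bipartite graphs and the Heawood graph are bipartite. Each is $3$-colourable, contradicting the choice of $G$. Next, suppose $G$ has a clique cutset $S$, so $|S|\le\omega(G)\le 3$. Let $C_1,\dots,C_t$ be the components of $G\sm S$ and put $G_i=G[C_i\cup S]$. Each $G_i$ is a proper induced subgraph of $G$, hence lies in $\m C_1$, has $\omega(G_i)\le 3$, and by minimality satisfies $\chi(G_i)\le 3$. Since $S$ is a clique on at most three vertices, its vertices receive pairwise distinct colours in every $G_i$; permuting the three colours inside each $G_i$ I may assume all these $3$-colourings agree on $S$, and then their union is a proper $3$-colouring of $G$ (there is no edge between distinct $C_i$'s). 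Again a contradiction.

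The substantial case is that of a proper $2$-cutset $\{a,b\}$ with sides $X_1,X_2$. For $i=1,2$ I would form a block $G_i$ from $G[X_i\cup\{a,b\}]$ by reconnecting the non-adjacent pair $a,b$ through a short chordless $ab$-path running inside the far side (equivalently, through a new marker path of suitable length). The $\m C_1$-membership of the blocks is supplied by the decomposition theorem (and, for the ``real path'' variant, is immediate since $G_i$ is then an induced subgraph of $G$); one also checks $\omega(G_i)\le 3$ --- two adjacent common neighbours of $a$ and $b$ would induce a diamond, which is forbidden in $\m C_1$ --- and, because the cutset is proper, the block can be taken strictly smaller than $G$, so $\chi(G_i)\le 3$ by minimality. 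It then remains to glue a $3$-colouring of $G_1$ and one of $G_2$ along $\{a,b\}$, which reduces to showing that the colour requirements the two sides impose on the pair $\{a,b\}$ are compatible, i.e.\ that it never happens that one side forces $a,b$ to receive equal colours while the other forces distinct colours. This is exactly where the hypothesis bites: if $G[X_1\cup\{a,b\}]$ contained an $ab$-path with exactly one chord, then combining it with a chordless $ab$-path through $X_2$ would yield a cycle of $G$ with exactly one chord; and it is configurations of this type that would obstruct the recombination.

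I expect this last step to be the main obstacle. Establishing the compatibility of the colour constraints across a proper $2$-cutset --- keeping careful track of the exact definition of the blocks, of the induction measure (so that the blocks are genuinely smaller), and, if needed, strengthening the induction hypothesis to record which colour pairs the two cut-vertices may take --- is the technical heart of the argument; by contrast the basic and clique-cutset cases are routine.
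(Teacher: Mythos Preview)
The paper does not prove this statement at all: Theorem~\ref{1-cycle} is simply quoted from Trotignon and Vu\v{s}kovi\'c~\cite{nicolas.kristina:one} and used as a black box (see the beginning of Section~\ref{secPrelCycle} and the end of the proof of Theorem~\ref{2-cycle}). There is therefore no ``paper's own proof'' to compare your proposal against.

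That said, your plan is exactly the route taken in the cited paper~\cite{nicolas.kristina:one}: derive the chromatic bound from the structure theorem by induction on $|V(G)|$, handling basic graphs directly, clique cutsets by the standard gluing, and proper $2$-cutsets by colouring the two blocks and merging. Your treatment of the basic and clique-cutset cases is correct. For the proper $2$-cutset case you have correctly located the only real difficulty --- ensuring that the colour constraints the two sides impose on the non-adjacent pair $\{a,b\}$ are compatible --- but you have not actually resolved it; you only describe what would need to be shown. In~\cite{nicolas.kristina:one} this is handled by working with carefully defined blocks (using marker paths of controlled parity) and proving a lemma to the effect that in each block one can always realise a $3$-colouring giving $a$ and $b$ distinct colours; once that is in hand, the merge is immediate. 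Your remark that ``two adjacent common neighbours of $a$ and $b$ would induce a diamond'' is not quite the right justification for $\omega(G_i)\le 3$: if the block is built with a marker path of new vertices, $\omega$ cannot increase for trivial reasons, and if it is an induced subgraph of $G$ the bound is inherited. So your proposal is a correct outline of the original proof, with the genuine technical core (the proper $2$-cutset merge) left as a stated obstacle rather than carried out.
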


Let us now explain a classical tool to prove $\chi$-boundedness results for classes of graphs defined by forbidding induced structure and that is extensively use in this paper.

Let $G$ be a graph.
The \textit{distance} between two vertices $x$, $y$ of $G$ is the length of a shortest $xy$-path. 
Let $z$ be a vertex of $G$ and let $i$ be an integer. 
The \emph{$i$-th level} of $z$ is the set of vertices, denoted by $S_i(z,G)$, that are at distance exactly $i$ from $z$ in $G$. 
If no confusion is possible, we denote it by $S_i(z)$ in order to avoid too heavy notations. 
A \emph{father} of a vertex $x \in S_i(z)$ is a vertex in $S_{i-1}(z)$ adjacent to $x$. 
For every pair of vertices $x$, $y$ in $S_{i}(z)$, it is easy to see that there exists a chordless $xy$-path $Q$ with internal vertices included in $z \cup S_1(z) \cup \dots \cup S_{i-1}(z)$ such that only the endpoints of  $\mathring{Q}$ are in $S_{i-1}(z)$.
Note that, as a consequence, the endpoints of $\mathring{Q}$ are the only vertices that have neighbors in $S_i(z)$.
Such paths are called \textit{unimodal paths}
and are a key tool to find particular induced structures in a graph.
In the remaining of the paper, the letter $Q$ is reserved to denote unimodal paths and the following convention is followed: if $x$ and $y$ are two vertices in $S_i(z)$, $Q_{xy}$ denotes a unimodal path with endvertices $x$ and $y$.

The following general remark explains the reason why decomposing a graph into levels (as described above) is a very powerful tool to bound its chromatic number.

\begin{remark} [Folklore]\label{scott}
Let $G$ be a graph  and let $z$ be a vertex of $G$. 
There exists an integer $k$ such that $G[S_k(z)]$ has chromatic number at least $\lceil \chi(G) /2 \rceil$.
\end{remark}

\begin{proof}
For any $i \neq j$, there is no edges between a vertex of $S_{2i}$ (resp. of $S_{2i+1}$) and of $S_{2j}$ (resp. of $S_{2j+1}$). Indeed adjacent vertices are at distance one, so their level differ by at most one.
So, $$\chi(G) \le max_{i\ even} \chi(G[S_i]) + max_{j\ odd} \chi(G[S_j]).$$
The result follows.

\end{proof}



\section[Exactly two chords]{Graphs that do not contain a cycle with exactly two chords as induced subgraph} \label{sec2-cycle}

Let $C$ be a cycle with exactly two chords $e_1=a_1a_2$ and $e_2=b_1b_2$.
If $e_1$ and $e_2$ share an extremity, then $e_1$ and $e_2$ are said to be {\em V-chords} of $C$.
If $a_1$, $a_2$, $b_1$, $b_2$ are pairwise distinct and appear in the following order along $C$ : $a_i, b_j,a_k,b_l$ with $\{i,k\}=\{j,l\}=\{1,2\}$, then $e_1$ and $e_2$ are said to be {\em crossing chords} of $C$. 
A 2-cycle with V-chords (resp.\ crossing chords) is called a \emph{V-cycle} (resp.\ an \emph{X-cycle}) (see Figure \ref{X-cycleV-cycle}). 
A 2-cycle that is not a V-cycle nor an X-cycle is a \textit{parallel cycle}.

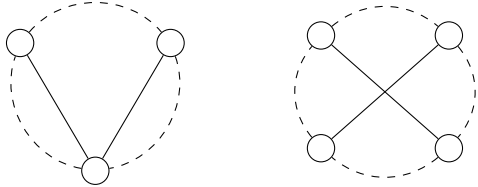
\begin{figure}[h]

\centering
\begin{tikzpicture} 

\node [draw,circle] (x) at (2,0) {};
\node [draw,circle] (x1) at (1,1.7) {};
\node [draw,circle] (x2) at (3,1.7) {};

\draw [dashed] (x) to[bend left=50] (x1);
\draw [dashed] (x1) to[bend left=50] (x2);
\draw [dashed] (x2) to[bend left=50] (x);

\draw  (x)-- (x1);
\draw  (x)-- (x2);


\node [draw,circle] (y1) at (5,1.8) {};
\node [draw,circle] (z1) at (6.7,1.8) {};
\node [draw,circle] (y2) at (6.7,0.3) {};
\node [draw,circle] (z2) at (5,0.3) {};

\draw [dashed] (y1) to[bend left=40] (z1);
\draw [dashed] (z1) to[bend left=40] (y2);
\draw [dashed] (y2) to[bend left=40] (z2);
\draw [dashed] (z2) to[bend left=40] (y1);

\draw  (z1)-- (z2);
\draw  (y1)-- (y2);

\end{tikzpicture}
 \caption{A $V$-cycle and an $X$-cycle.}
\label{X-cycleV-cycle}
\end{figure}

\vspace{2ex}

The main result of this section is concerned with the class of (X-cycle, V-cycle)-free graphs that of course strictly contains the class of graphs with no 2-cycles. Indeed if a graph is (X-cycle, V-cycle)-free, it can contains a $2$-cycle if the two chords of the cycle are parallel.

\begin{theorem} \label{2-cycle}
Every (X-cycle, V-cycle)-free graph $G$ satisfies $\chi(G) \le 6$.
\end{theorem}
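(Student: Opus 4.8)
The plan is to prove Theorem~\ref{2-cycle} by a standard BFS-layering argument together with a structural analysis of each level. Fix a vertex $z$ of $G$ and consider the levels $S_i(z)$. By Remark~\ref{scott}, if we can show that $\chi(G[S_i(z)]) \le 3$ for every $i$, then $\chi(G) \le 6$ follows immediately. So the whole task reduces to proving that each level, equipped with the edges of $G$ inside it, is $3$-colorable.

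The key step is to show that for every $i$, the graph $G[S_i(z)]$ lies in the class $\m C_1$ and has no $K_4$; then Theorem~\ref{1-cycle} gives $\chi(G[S_i(z)]) \le 3$. Since $G$ is (X-cycle, V-cycle)-free it contains no $K_4$ at all (a $K_4$ is a $4$-cycle with two crossing chords, i.e.\ an X-cycle, and anyway one of its triangulations is a V-cycle), so the clique bound is free. The real content is showing $G[S_i(z)] \in \m C_1$: suppose $G[S_i(z)]$ contains a cycle $C$ with exactly one chord $uv$. I would then use a unimodal path $Q_{xy}$ through the lower levels to close $C$ up into a longer cycle in $G$ and count its chords. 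The idea is that, because of the unimodal structure, the only vertices of $Q_{xy}$ with neighbors in $S_i(z)$ are its two endpoints $x,y \in S_{i-1}(z)$, so attaching $Q_{xy}$ along a subpath of $C$ between two suitably chosen vertices $x,y$ of $C$ — chosen so that the chord $uv$ survives — produces a cycle in $G$ whose only chords are $uv$ together with at most one new chord coming from a chord of $C$ incident to the attachment, landing us at a $1$-cycle or a $2$-cycle; in either case we contradict the hypothesis (recall $\m C_1$-freeness is not assumed on $G$, so I must actually reach a V-cycle or an X-cycle, which requires care about how the new chord and $uv$ sit relative to each other). One has to do a small case analysis on the relative position of $u$, $v$ and the two attachment points along $C$, and possibly re-route through the shortest path from $z$ instead of a generic unimodal path to control the extra chords.

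I expect the main obstacle to be exactly this chord-counting: a naive attachment of a unimodal path can create several new chords (the path's endpoints $x,y$ may each have many neighbors on $C$), so I would need to be clever about which two vertices of $C$ to connect and which unimodal path to use — likely choosing $x,y$ to be adjacent vertices of $C$ (or within the chord $uv$ itself), or choosing the path so that its endpoints have a controlled neighborhood on the relevant cycle, to guarantee that the resulting cycle has exactly one or exactly two chords of the forbidden type. There may also be a degenerate situation where $C$ is short or where the chord of $C$ interacts with the new attachment to form a V-cycle, which needs to be handled separately. Once the reduction ``$G[S_i(z)]\in\m C_1$ and $K_4$-free'' is secured, the rest is immediate: apply Theorem~\ref{1-cycle} to get $\chi(G[S_i(z)])\le 3$, then Remark~\ref{scott} to get $\chi(G)\le 2\cdot 3 = 6$.
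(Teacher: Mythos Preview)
Your high-level plan is exactly the paper's: layer from a root $z$, show each $G[S_k(z)]$ lies in $\m C_1$ with $\omega\le 3$, apply Theorem~\ref{1-cycle} to get $\chi(G[S_k(z)])\le 3$, and finish with Remark~\ref{scott}. Where you diverge from the paper---and where the real gap lies---is that you try to prove ``each level is in $\m C_1$'' directly for an arbitrary (X-cycle, V-cycle)-free graph. The paper does \emph{not} do this. It first proves a separate structural lemma (Lemma~\ref{nodiamond}): an (X-cycle, V-cycle)-free graph either has a clique cutset, or is complete tripartite, or is diamond-free. The clique-cutset and complete-tripartite cases are disposed of by minimality and $3$-colourability; only then does the paper run the level argument, under the additional hypothesis that $G$ is \emph{diamond-free} (Lemma~\ref{2-cycleStep2}).

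This reduction is not cosmetic. In the proof of Lemma~\ref{2-cycleStep2}, diamond-freeness is invoked repeatedly: to show that no vertex has four neighbours on an induced cycle, to force $P^l$ and $P^r$ to have length at least~$3$, and to separate configurations in the claims about fathers of $a$ and $b$. Your worry about chord-counting is well founded, and it is precisely diamond-freeness that tames it; without that hypothesis your ``small case analysis'' has no obvious termination. A second technical point: the unimodal paths in the paper are taken between \emph{fathers} $x,y\in S_{k-1}(z)$ of chosen vertices of $C$, not between vertices of $C$ itself, and the heart of the argument is a sequence of claims constraining $d_C$ of such fathers (at most~$3$ in general, at most~$2$ if adjacent to $a$ or $b$, no common father of $a$ and $b$, etc.). Your sketch, which speaks of attaching $Q_{xy}$ at two vertices $x,y$ of $C$, has the mechanics slightly off and underestimates the amount of structure one must establish before any single attachment produces a clean V- or X-cycle. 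So: right skeleton, but missing the diamond reduction and the careful father-degree analysis that makes the level step go through.
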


Note that we can bound the chromatic number by a constant here because $K_4$ is an X-cycle. 
The proof is built on two steps: Lemmas \ref{nodiamond} and \ref{2-cycleStep2}.

The first step consists in showing that we can decompose an (X-cycle, V-cycle)-free graph around a complete multipartite graph. More precisely we prove that  if a  graph is  (X-cycle, V-cycle)-free then either it has a clique cutset, or it is  a complete tripartite graph, or it is diamond-free.
In the second part, we prove that (diamond, V-cycle, X-cycle)-free graphs have chromatic number at most $6$ using Remark \ref{scott}. 
The proof is somehow based on an induction on the number of chords. The induction is based on the result of Trotignon and  Vu\v{s}kovi\'c about $\m C_1$ (Lemma~\ref{1-cycle}). 
We finally combine these two lemmas to prove Theorem~\ref{2-cycle}.

\medskip

Note that the first step actually gives us a decomposition theorem for (X-cycle, V-cycle)-free graphs, where the basic classes are complete multipartite graphs and (diamond, X-cycle, V-cycle)-free graphs. 
Anyway, to get a usable decomposition theorem, one should decompose (diamond, X-cycle, V-cycle)-free graphs that is a too complex class to be a basic class.

\begin{lemma} \label{nodiamond}
If  $G$ is an (X-cycle, V-cycle)-free graph, then either $G$ has a clique cutset, or $G$ is isomorphic to a complete tripartite graph, or $G$ is diamond-free.
\end{lemma}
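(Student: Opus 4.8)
The plan is to assume $G$ is an (X-cycle, V-cycle)-free graph that has no clique cutset and is not diamond-free, and then show $G$ must be a complete tripartite graph. So fix an induced diamond $D$ in $G$; write its vertex set as $\{a, b, x, y\}$ where $xy$ is the missing edge (the two ``tips'' of the diamond) and $a, b$ are the vertices adjacent to everything, so $ab, ax, ay, bx, by$ are all edges and $xy$ is not. First I would analyse how an arbitrary vertex $v \in V(G) \setminus \{a,b,x,y\}$ attaches to $D$: a careful case check on $N_D(v)$, using the hypothesis that $G$ contains no V-cycle and no X-cycle (equivalently, no $K_4$), should drastically restrict the possibilities. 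For instance $v$ cannot be complete to $\{a,b,x\}$ or to $\{a,b,y\}$ (that would be a $K_4$, hence an X-cycle), and having $v$ adjacent to exactly one or exactly two suitably-placed vertices of $D$ tends to create a short cycle on $\{v\} \cup D$ with exactly two chords of the forbidden type. The upshot I expect is that every vertex of $G$ lies in the ``closure'' of the diamond structure in a very rigid way.

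The key structural step is then to define a tripartition. The natural candidate is $A_1 = \{x\} \cup (\text{stuff behaving like } x)$, $A_2 = \{y\} \cup (\text{stuff like } y)$, and $A_3 = \{a, b\} \cup (\text{stuff like both } a \text{ and } b)$; more precisely I would let $A_3$ be a maximal set containing $a,b$ that is complete to $\{x,y\}$ (or rather to the eventual $A_1$ and $A_2$) and independent, and then show $V(G) = A_1 \cup A_2 \cup A_3$ with each $A_i$ a stable set and all edges present between distinct parts. To get that these three sets exhaust $V(G)$, I would use the absence of a clique cutset: if some vertex $w$ failed to join the pattern, then a suitable subset among $\{a,b\}$ (together with possibly $x$ or $y$) would separate $w$ from the rest — and since $\{a,b\}$, $\{a,b,x\}$, $\{a,b,y\}$ are cliques (the first is $K_2$; wait, $\{a,b,x\}$ is a triangle), one would exhibit a clique cutset, a contradiction. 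So connectivity forces the whole graph into the tripartite shape.

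The main obstacle, I expect, is the bookkeeping in the first step: showing that the neighbourhood of \emph{every} outside vertex relative to the diamond (and then relative to the partial tripartition as it grows) is forced, handling all the cases of $|N_D(v)| \in \{0,1,2,3\}$ and the positions of the neighbours, and in each bad case actually naming the chordless cycle and its two chords and checking they form a V-cycle or an X-cycle. A secondary subtlety is the interaction between two outside vertices: once I know each of $v, v'$ individually attaches like ``another $x$'' or ``another $a$'', I still need to check edges/non-edges between $v$ and $v'$ are the right ones (e.g. two ``$x$-type'' vertices are non-adjacent, an ``$x$-type'' and an ``$a$-type'' are adjacent), again by producing a forbidden 2-cycle otherwise. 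I would also need the small observation that the argument never accidentally forces a \emph{fourth} part — i.e. that the complete multipartite graph obtained really has exactly three parts, which is where the specific shape of the diamond (one non-edge, not more) is used. Once all vertices are placed, verifying that $G$ is exactly $K_{|A_1|,|A_2|,|A_3|}$ is immediate from how the parts were defined.
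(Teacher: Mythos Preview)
Your overall strategy --- grow a complete tripartite structure from the diamond and use the absence of clique cutsets to force $V(G)$ into it --- is in the right spirit, but there is a concrete error in how you set up the tripartition. In the diamond $\{a,b,x,y\}$ with missing edge $xy$, the non-adjacent pair $x,y$ must lie in the \emph{same} part of any complete tripartite graph containing $D$, while the adjacent pair $a,b$ must lie in \emph{different} parts. Your candidate $A_1 \ni x$, $A_2 \ni y$, $A_3 \supseteq \{a,b\}$ has this backwards; in particular your $A_3$ cannot be independent since $ab \in E(G)$. Concretely, a vertex $v$ with $N_D(v)=\{x,y\}$ --- which your $A_3$ would need in order to grow beyond $\{a,b\}$ --- is already forbidden: $vxabyv$ is an X-cycle with chords $xb$ and $ay$. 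The correct picture is $a \in P_1$, $b \in P_2$, $\{x,y\} \subseteq P_3$. This is fixable, but all the anticipated casework on $N_D(v)$ and the pairwise checks between outside vertices would have to be redone with the corrected roles.

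The paper avoids this bookkeeping by a different device: instead of growing the structure from a single diamond, it takes $H$ to be a \emph{maximum} complete tripartite subgraph of $G$ (automatically induced, since $G$ is $K_4$-free); because $G$ contains a diamond, $H$ has at least four vertices. Maximality then yields, in one short claim, that every $u \notin V(H)$ has at most one neighbour in $H$. Now $G \neq H$ (else $G$ is complete tripartite), and since $G$ has no clique cutset, some component $K$ of $G \setminus H$ has two non-adjacent vertices of $H$ in its neighbourhood; a minimal chordless path in $K$ between two such attachment points is then shown, by a brief case analysis, to close up into a V-cycle or X-cycle using vertices of $H$. This replaces your vertex-by-vertex placement and pairwise verification with a single path argument.
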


\begin{proof}
Assume by way of contradiction that $G$ does not admit a clique cutset, $G$ is not isomorphic to a complete tripartite graph and $G$ contains a diamond. 
Since $G$ has no clique cutsets, $G$ is 2-connected.
Let $H=K_{i,j,k}$ be a maximum (subject to its number of vertices) complete tripartite subgraph of $G$. 
Note $A=\{a_1, \dots, a_i\}$ (resp. $B=\{b_1, \dots, b_j\}$, resp. $C=\{c_1, \dots, c_k\}$) the set of the partition of $H$ of cardinality $i$ (resp. $j$, resp. $k$).
Note that since  $G$ contains a diamond, one of the integers $i$, $j$, $k$ is strictly greater than $1$. 
Moreover, since $G$ is $K_4$-free, $H$ is an induced subgraph of $G$  i.e.\ $A$, $B$ and $C$ are stable sets. 

\begin{claim} \label{diamonddeg1}
A vertex $u \notin V(H)$ has at most one neighbor in $H$.
\end{claim}

\begin{proofclaim}
Assume by way of contradiction that some vertex $u \notin V(H)$ satisfies $d_H(u) \geq 2$.
If $u$ has a neighbor in $A$, $B$ and $C$, say $a_1$, $b_1$ and $c_1$, then $u a_1 b_1 c_1$ is a $K_4$, a contradiction.
So we may assume w.l.o.g.\ that $u$ does not have any neighbor in $C$.
Assume that $u$ has a neighbor  in $A$ and a neighbor in $B$, say $a_1$ and $b_1$. 
By maximality of $H$, $u$ has at least one non-neighbor in $A \cup B$. 
Assume w.l.o.g. that $a_2$ is a non-neighbor of $u$.
Then $ua_1c_1a_2b_1u$ is a V-cycle with chords $b_1a_1$ and $b_1c_1$, a contradiction.
So we may assume w.l.o.g. that $u$ does not have any neighbor in $B$ and thus have at least two neighbors in $A$, say $a_1$ and $a_2$. 
Then $ua_1b_1c_1a_2u$ is an X-cycle with chords $a_1c_1$ and $a_2b_2$, a contradiction.
\end{proofclaim}

Note that $G \neq H$ since otherwise $G$ is a complete tripartite graph. Let $K$ be a connected component of $G \sm H$. 
By (\ref{diamonddeg1}), vertices of $K$ that have a neighbor in $H$, have a unique neighbor in $H$.
Since $G$ does not contain clique cutsets, $N_H(K)$ must contain two non-adjacent vertices. 
Therefore, $K$ contains a chordless path $P=p_1 \dots p_k$ such that the neighbors of $p_1$ and $p_k$ in $H$ are two non-adjacent vertices. 
Among all such paths, let $P$ be minimal.  
Assume w.l.o.g.\ that $a_1$ and $a_2$ are the neighbors of respectively $p_1$ and $p_k$ in $H$. 
By minimality of $P$, no interior vertex of $P$ has a neighbor in $A$.

If  no interior vertex of $P$ is adjacent to a vertex in $B$ or $C$, then $a_1Pa_2b_1c_1a_1$ is an X-cycle with chords $a_1b_1$ and $a_2c_1$, a contradiction. 
Let $i$ be the smallest integer such that $p_i$ has a neighbor in $B$ or $C$, say $p_i$ is adjacent to $b_1$. 
Then no interior vertices of $p_1 \dots p_i$ is adjacent to any vertices of $H$ and thus $a_1p_1Pp_ib_1a_2c_1a_1$ is a V-cycle with chords $b_1a_1$ and $b_1c_1$, a contradiction.
\end{proof}


\begin{lemma} \label{2-cycleStep2}
If $G$ is a (diamond, X-cycle, V-cycle)-free graph, then for any $z \in V(G)$ and for every integer $k$, $S_k(z) \in \m C_1$.
\end{lemma}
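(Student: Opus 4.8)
The plan is to argue by contradiction, pushing a cycle with a unique chord from a level $S_k(z)$ down into the lower levels by a unimodal path, and to show that the chord together with one edge reaching a lower level already forms two chords that either cross or share an endpoint, i.e.\ an $X$-cycle or a $V$-cycle in $G$. So suppose that for some $z$ and some $k$ the graph $G[S_k(z)]$ contains a cycle with exactly one chord, and choose such a cycle $C$ with chord $xy$, together with the unimodal path below (constructed next), so that the whole configuration is as small as possible. Write $A$ and $B$ for the two $xy$-paths with $A\cup B=C$; both have length at least $2$, and since $xy$ is the only edge of $G$ with both ends on $C$ besides the edges of $C$, the paths $A$ and $B$ are chordless and there is no edge between $\mathring A$ and $\mathring B$. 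If both $A$ and $B$ had length $2$, then $G[V(C)]$ would be a diamond, so we may assume $B$ has length at least $3$. Now take a chordless unimodal $xy$-path $Q$ with interior in $\{z\}\cup S_1(z)\cup\cdots\cup S_{k-1}(z)$ whose only interior vertices with a neighbour in $S_k(z)$ are its first and last interior vertices $q'$ and $q''$ (both in $S_{k-1}(z)$, with $q'=q''$ when $Q$ has length $2$); note $xq',yq''\in E(G)$. Since $K_4$ is an $X$-cycle and $G$ is diamond-free, when $q'\neq q''$ the set $\{x,y,q',q''\}$ induces neither $K_4$ nor a diamond, so we cannot have both $q'\sim y$ and $q''\sim x$; when $q'=q''$ we just get the triangle $q'xy$.

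\emph{Case 1: $q'$ or $q''$ has a neighbour in $\mathring A\cup\mathring B$.} Say $q^{\circ}\in\{q',q''\}$ is adjacent to $w\in\mathring A$ (the case $w\in\mathring B$ is symmetric). In the cycle $\Gamma=A\cup Q$ both $xy$ and $q^{\circ}w$ are chords, and along $\Gamma$ the four vertices occur in the cyclic order $x,w,y,q^{\circ}$, so $xy$ and $q^{\circ}w$ are crossing chords of $\Gamma$. Hence if these are the only two chords of $\Gamma$ we already have an $X$-cycle. In general the other possible chords of $\Gamma$ are $xq''$, $yq'$, and further edges from $q'$ or $q''$ to $\mathring A$; choosing $Q$ shortest, $w$ extremally along $A$, and using the minimality of the whole configuration, each such extra chord can be removed — passing to a shorter $X$-cycle, or to a $V$-cycle with apex $q'$ or $q''$, or to a diamond — a contradiction in every case.

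\emph{Case 2: neither $q'$ nor $q''$ has a neighbour in $\mathring A\cup\mathring B$.} Then on $C$ the vertices $q',q''$ see only $x$ and $y$. Combining the edges $xq'$, $yq''$ with the permitted extra adjacencies and with $xy$, one finds a vertex $p\in S_{k-1}(z)$ (namely $q'$ or $q''$) with $px,py\in E(G)$ and $p$ adjacent to no other vertex of $C$; in particular $pxy$ is a triangle, and if one of $A,B$ had length $2$, say $A=xay$, then $G[\{p,x,y,a\}]$ would be a diamond, so both $A$ and $B$ have length at least $3$. Let $a_1$ be the neighbour of $x$ on $A$ and $b_1$ the neighbour of $x$ on $B$; these are distinct non-adjacent vertices of $S_k(z)$. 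Take a chordless unimodal $a_1b_1$-path $Q'$ as above, with $S_{k-1}(z)$-endvertices $r'$ and $r''$ of its interior, and consider the cycle $\Gamma'=a_1\,x\,b_1\,(Q')\,a_1$. The only chords of $\Gamma'$ through $x$ are among $xr'$ and $xr''$, which share the vertex $x$; so if both occur we get a $V$-cycle, and if $\Gamma'$ has length $4$ any chord through $x$ gives a diamond. The remaining possibilities (at most one of $xr',xr''$, and the chords $a_1r''$, $b_1r'$) are handled by the same kind of extremal surgery as in Case~1, now also invoking $p$, the edge $xy$, and the minimality of the configuration. In every branch we reach a diamond, an $X$-cycle, or a $V$-cycle, contradicting the hypotheses.

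The hard part is the bookkeeping hidden in both cases: whichever way the unimodal path hooks onto $C$, one must be able to isolate a cycle carrying \emph{exactly two} chords and, crucially, two chords that cross or share an endpoint — two parallel chords would be useless, since parallel cycles are not forbidden under the hypotheses. It is here that the extremal choice of $C$ and of the unimodal paths, together with the $K_4$- and diamond-freeness of $G$, must be used repeatedly; making Case~2 terminate — the unimodal path between $a_1$ and $b_1$ possibly having to be iterated or combined with $p$ and $y$ — is the most delicate point.
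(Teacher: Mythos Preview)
Your proposal is a strategy outline rather than a proof, and it contains a concrete error that breaks Case~2.

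In Case~2 you assert that ``one finds a vertex $p\in S_{k-1}(z)$ (namely $q'$ or $q''$) with $px,py\in E(G)$''. This is not justified. If $q'\neq q''$, $q'$ sees only $x$ on $C$, and $q''$ sees only $y$ on $C$ --- which is perfectly compatible with your hypotheses --- then no such common father $p$ exists, and the cycles $Q\cup A$ and $Q\cup B$ each carry the single chord $xy$ and nothing else; you obtain only further $1$-cycles, not a forbidden configuration. In fact the paper proves, as a separate claim, precisely that the chord endpoints \emph{never} have a common father in $S_{k-1}(z)$; so the subcase you skip is the only one that survives. Your subsequent argument (``invoking $p$, the edge $xy$, and the minimality of the configuration'') therefore rests on a vertex that does not exist.

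Even setting this aside, the ``extremal surgery'' you invoke in both cases is where all the content lies, and you have not carried it out. The paper's proof shows why this is nontrivial: one first bounds $d_C(u)\le 3$ for any $u\notin V(C)$ (and $\le 2$ if $u$ is adjacent to a chord endpoint); then one proves that each chord endpoint has a father of degree exactly~$1$ on $C$ (ruling out a long, delicate subcase where a particular $6$-vertex configuration must be analysed); and finally the contradiction is obtained not from a unimodal path between the chord endpoints, nor between their neighbours on $C$, but by taking a father $d$ of the neighbour $c$ of the chord endpoint $a$ and analysing how $d$ attaches to $C$. That last pivot --- moving from fathers of $a,b$ to a father of a vertex adjacent to $a$ on $C$ --- is the idea your sketch is missing, and it is what makes the argument terminate.
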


\begin{proof}
Let $G$ be a (diamond, X-cycle, V-cycle)-free graph and let $z \in V(G)$.
Assume by way of contradiction that there exists an integer $k$ such that $S_k(z)$ contains a 1-cycle $C$ as an induced subgraph.
Name $a$, $b$ the extremities of the unique chord of $C$. 
The cycle $C$ is edge-wise partitioned in two $ab$-path: $P^l$ and $P^r$  (for left and right path, see Figure \ref{1chord}).

\begin{figure} [h]
\centering
\begin{tikzpicture} 


\draw (1 , 0) node  [below] {$a$};
\draw (1 , 2) node [above] {$b$};

\draw (1 , 2) node {$\bullet$};
\draw (1 , 0) node {$\bullet$};


\draw   [dashed] (1,0)  to[bend left=40]  (0,1)  to[bend left=40]  (1 , 2);
\draw   [dashed] (1,2)  to[bend left=40]  (2,1)  to[bend left=40]  (1 , 0);

\draw (1 , 0) -- (1 , 2);

\draw (0 , 1) node[left]  {$P^l$};
\draw (2 , 1) node[right]  {$P^r$};

\draw (1,1) ellipse (-3.5 and 1.7);
\draw (-2 , 2) node[left]  {$S_k(z,G)$};

\end{tikzpicture}

\caption{In $S_(z,G)$, the  cycle $C$ with a unique chord $ab$ and the paths $P^r$ and $P^l$}   \label{1chord}
\end{figure}
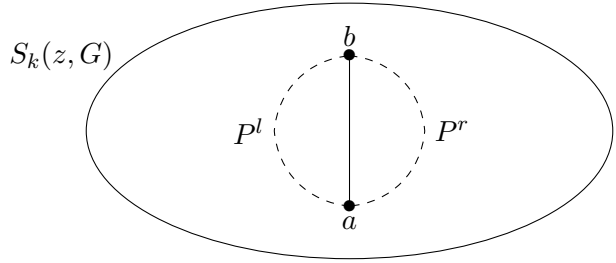

Observe that, since $G$ is V-cycle-free, no vertex of $G$ has four neighbors on an induced path. 
Moreover, no vertex $x$ of $G$ has four neighbors on an induced cycle. 
Indeed, either two consecutive neighbors are non adjacent, and then $x$ has four neighbors on an induced path, or the cycle is a $C_4$ and then, $G$ contains a diamond.

\begin{claim}\label{degreefather2}
For any $x \notin V(C)$, $d_C(x) \le 3$ and, if $x$ is adjacent to $a$ or $b$, then $d_C(x) \le 2$.
\end{claim}

\begin{proofclaim}
Let $x \notin V(C)$.

Suppose first that $x$ is adjacent to $a$ and that $d_C(x) \geq 3$.
First assume that $xb$ is an edge. 
Since $d_C(x)\geq 3$, $x$ has another neighbor $x_1$ in $C$. We can assume w.l.o.g. that $x_1$ is on $P^l$. 
If $x$ has no other neighbors on $P^l$, then $axbP^la$ is an X-cycle with chords $xx_1$ and $ab$. 
So $x$ has another neighbor $x_2$ on $P^l$ and then it has four neighbors in the chordless cycle $a P^l b a$, a contradiction.

So $xb$ is not an edge. 
Since $C \backslash \{b\}$ is an induced path, $d_C(x)=3$. 
Denote by $x_1$ and $x_2$ the two other neighbors of $x$ on $C$ distinct from $a$.
If $x_1$, $x_2$ are on $P^l$, then $aP^lx_1xx_2P^lbP^ra$ (resp. $aP^lx_1xx_2P^lba$) is a V-cycle (resp. an X-cycle) if $x_1x_2$ is not an edge (resp.~is an edge) on $a$ (resp.~with chords $ax$ and $x_1x_2$), a contradiction. 
Hence, by symmetry, we may assume that $x_1\in P^l$ and $x_2 \in P^r$. 
Since $G$ is diamond-free, $a$ is not adjacent to both $x_1$ and $x_2$. 
Assume w.l.o.g.\ $x_1a$ is not an edge. 
Thus $x_1xaP^rbP^lx_1$ is an X-cycle with chords $xx_2$ and $ab$, a contradiction.

So the second outcome of the claim holds. 
Now, if $x$ has at least 4 neighbors in $C$, then $x$ is adjacent neither to $a$ nor to $b$, and thus it has four neighbors on an chordless path, a contradiction. 
\end{proofclaim}

\begin{claim}\label{commun}
Vertices $a$ and $b$ do not have a common father.
\end{claim}

\begin{proofclaim}
Recall that, given two vertices $x,\, y$ in $S_{k-1}(z)$, $Q_{xy}$ denotes a unimodal path from $x$ to $y$. 
And interior vertices of $Q_{xy}$ are not adjacent to any vertex of $C$.

Assume by way of contradiction that there exists a vertex $x \in S_{k-1}(z)$ that is a common father to $a$ and $b$.
Let $c$ be the neighbor of $a$ on $P^r$ and $d$ be a father of $c$. 
By (\ref{degreefather2}), $d \neq x$ and, since $G$ is diamond-free, $P^l$ and $P^r$ have length at least $3$ i.e.\ $bc$ is not an edge. 

If $d$ is adjacent to $a$ then $c d Q_{dx} x b a c$ is a V-cycle on $a$.
If $d$ is adjacent to $b$ then $c d Q_{dx} b a c$ is an X-cycle with chords $ax$ and $bd$.
So $d$ is adjacent neither to $a$ nor to $b$.

Vertex $d$ has at least one neighbor $d_1$ on $\mathring P^l$, otherwise $cdQ_{dx}bP^lac$ is a V-cycle on $a$. 
Moreover, $d$ has a neighbor $d_2 \neq c$ on $\mathring P^r$ otherwise $cdQ_{dx}abP^rc$ is an X-cycle with chords $ac$ and $xb$. 
By Claim~\ref{degreefather2}, $d_C(x) \leq 3$, so $N_C(d)=\{c,d_1,d_2\}$.

If $d_1b$ is not an edge, then $d_1dcP^rbaP^ld_1$ is an X-cycle with chords $ac$ and $dd_2$. 
Otherwise $abxQ_{xd}d_1P^la$ is an X-cycle with chords $bd_1$ and $ax$, a contradiction.
\end{proofclaim}

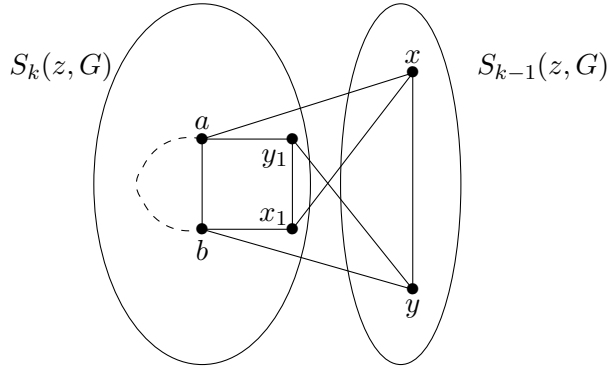
\begin{figure}[h]
\centering
\begin{tikzpicture} [scale=0.8]


\draw (0 , 1.5) node  [above] {$a$};
\draw (0 , 0) node [below] {$b$};
\draw (1.6 , 1.5) node [below left] {$y_1$};
\draw (1.6 , -0.1) node [above left] {$x_1$};

\draw (0 , 1.5) node  {$\bullet$};
\draw (0 , 0) node  {$\bullet$};
\draw (1.5 , 1.5) node  {$\bullet$};
\draw (1.5 , 0) node  {$\bullet$};


\draw (0,0) -- (0,1.5) -- (1.5,1.5) -- (1.5,0) -- (0,0);

\draw (3.5 , 2.6) node  [above] {$x$};
\draw (3.5 , -1) node [below] {$y$};

\draw (3.5 , 2.6) node  {$\bullet$};
\draw (3.5 , -1) node  {$\bullet$};

\draw (0,1.5) -- (3.5 ,2.6);
\draw (1.5,0) -- (3.5 ,2.6);

\draw (0,0) -- (3.5 ,-1);
\draw (1.5,1.5) -- (3.5 ,-1);

\draw (3.5,2.6) -- (3.5 ,-1);


\draw   [dashed] (0,1.5)  to[bend right=40]  (-1 , 1) to[bend right=40]  (-1 , 0.5)  to[bend right=40]  (0,0);


\draw (0,0.75) ellipse (1.8 and 3);
\draw (-1.3 , 2.7) node[left]  {$S_k(z,G)$};

\draw (3.3 ,  0.75) ellipse (1 and 3);
\draw (4.4 , 2.7) node[right]  {$S_{k-1}(z,G)$};

\end{tikzpicture}
\caption{The particular graph of Claim \ref{fatherdegree1}, note that $P^r=ay_1x_1b$.}
\label{particulargraph}
\end{figure}

\begin{claim}\label{fatherdegree1}
 Both $a$ and $b$ have a father of degree $1$.
\end{claim}

\begin{proofclaim}
Let $x$ and $y$ be respectively a father of $a$ and a father of $b$. 
By (\ref{commun}), $x \neq y$.
Assume by way of contradiction that say $x$ has a neighbor  $x_1 \neq x$ in $\mathring{P^r}$.
By (\ref{degreefather2}), $N_C(x)=\{a,x_1\}$.
If $y$ has no neighbor in $\mathring{P}^r$, then $axQ_{xy}ybP^ra$ is an X-cycle with chords $ab$ and $xx_1$, a contradiction.
So $y$ has a neighbor, say $y_1$, in $\mathring{P}^r$ and, by (\ref{degreefather2}), $N_C(y)=\{b,y_1\}$.

Suppose first that $x_1=y_1$.
Since $G$ is diamond-free, $bx_1$ and $ax_1$ cannot both be edges. So we may assume w.l.o.g.\ that $bx_1$ is not an edge. 
Then, $aP^rx_1xQ_{xy}yba$ is an X-cycle with chords $ax$ and $x_1y$, a contradiction. So $x_1 \neq y_1$.

Now, if $a$, $x_1$, $y_1$ appear in this order along $P^r$, then $axQ_{xy}y_1P^rbP^la$ is a V-cycle on $b$, a contradiction. 
So $a$, $y_1$, $x_1$ appear in this order along $P^r$.
If $ay_1$ is not an edge, then $axQ_{xy}yy_1P^rba$ is an X-cycle with chords $xx_1$ and $by$, a contradiction.
So $ay_1$ is an edge and, symmetrically, $bx_1$ is an edge.
If $y_1x_1$ is not an edge, then $ay_1yQ_{yx}x_1ba$ is an X-cycle with chords $ax$ and $by$, a contradiction.
So $x_1y_1$ is an edge (i.e.\ $ax_1y_1x_1b$ is a square). 
If $xy$ is not an edge, then $abyy_1x_1xa$ is an X-cycle. 
So $xy$ is an edge (see Figure \ref{particulargraph}).

Let $c$ be the neighbor of $a$ on $P_l$ and let $d$ a father of $c$. 
First assume that $cb$ is an edge. Since $G$ is diamond-free, $d$ is adjacent neither to $a$ nor to $b$.
If $x_1d$ is not an edge then $cdQ_{dx}xx_1bac$ is a X-cycle with chords $cb$ and $xa$, a contradiction.
Hence, by symmetry, $d$ is adjacent to both $x_1$ and $y_1$ and then $axQ_{xd}dx_1y_1a$ is an X-cycle with chords $dy_1$ and $xx_1$.

Hence $cb$ is not an edge. 
If $d$ is adjacent to both $x_1$ and $y_1$, then $axQ_{xd}dx_1y_1a$ is an X-cycle with chords $dy_1$ and $xx_1$.
If $d$ is adjacent neither to $x_1$ nor to $y_1$, $cdQ_{dy}ybx_1y_1ac$ is a X-cycle with chords $ab$ and $yy_1$. 
If $d$ is adjacent to $x_1$ and not to $y_1$ then $cdQ_{dx}xx_1bac$ is a X-cycle with chords $ax$ and $dx_1$. 
If $d$ is adjacent to $y_1$ and not to $x_1$ then $cdQ_{dx}xx_1y_1ac$ is a X-cycle with chords $yy_1$ and $xa$
\end{proofclaim}

By (\ref{fatherdegree1}), there exist two vertices $x$ and $y$ such that $x$ is a father of $a$, $y$ is a father of $b$ and $d_C(x)=d_C(y)=1$. 
Since $G$ is diamond-free, $P^r$ and $P^l$ cannot be both of length two, so we may assume w.l.o.g.\ that $P^l$ has length at least 3.
Let $c$ be the neighbor of $a$ on $P^l$ and $d$ be a father of $c$. 
Note that $d \neq x$ and $d \neq y$. 
If $d_C(d)=1$, then $axQ_{xd}dcP^lbP^ra$ is a V-cycle on $a$, a contradiction. 
Hence $d_C(d) \ge 2$.

Assume first $d$ has a neighbor $d_1$ in $P^r$.
If $d_1$ is the unique neighbor of $d$ in $P^r$, then $cdQ_{dy}ybP^rac$ is an X-cycle with chords $ab$ and $dd_1$ if $a \neq d_1$, or  is a V-cycle on $a$ if $d=a$ or $d=b$.
So $d$ has a second neighbor $d_2$ in $P^r$ and thus, by (\ref{degreefather2}), $N_C(d)=\{c,d_1,d_2\}$ and $\{d_1,d_2\} \subseteq \mathring P^r$.
Assume w.l.o.g.\ that $a$, $d_1$ and $d_2$ appear in this order along $P^r$.
Now, $axQ_{xd}dd_2P^rbP^la$ is an X-cycle with chords $cd$ and $ab$, a contradiction.
So $d$ has no neighbors in $P^r$ and thus has some neighbors in $\mathring P^l$.

If $d$ has exactly one neighbor  $d_1$ in $\mathring P^l$, then $axQ_{xd}dcP^lba$ is an X-cycle with chords $dd_1$ and $ac$, a contradiction.
So $d$ has at least two neighbors in $\mathring P^l$ and, by (\ref{degreefather2}), it has exactly two.
Put $N_C(d)=\{c,d_1,d_2\}$.
Now, $cdQ_{dy}ybP^lc$ is a V-cycle with chords $dd_1$ and $dd_2$, a contradiction.
\end{proof}

We are now ready to prove Theorem \ref{2-cycle}, recall that this theorem states that \textit{every (X-cycle, V-cycle)-free graph is  6-colorable.}

\medskip

\begin{proof} 
Assume by way of contradiction that there is some (X-cycle, V-cycle)-free graphs that are not $6$-colorable.
Let $G$ be  minimal with this property.
Suppose first that $G$ contains a diamond. 
Since a complete tripartite graph is 3-colourable, by Lemma \ref{nodiamond}, G admits a clique cutset $K$. 
Let $C_1$ be a connected component of $G \sm K$, and $C_2$ the union of all other components of $G \sm K$. 
Put $G_1 = G[C_1 \cup K)]$ and $G_2 = G[C_2 \cup K)]$. 
If $G_1$ and $G_2$ are both 6-colourable, then $G$ is 6-colourable, a contradiction. 
Therefore $G_1$ or $G_2$ is not 6-colourable, a contradiction to the minimality of $G$. 
So we may assume that $G$ is diamond-free i.e., $G$ is (diamond, X-cycle, V-cycle)-free. 

Let $z$ be a vertex of $G$.
Since $\chi(G) =7$, by Remark \ref{scott}, there is an integer $k$ such that $\chi(S_k(z)) \ge 4$. 
So, by Theorem \ref{1-cycle}, $S_k(z)$ contains a 1-cycle as an induced subgraph, a contradiction to Lemma \ref{2-cycleStep2}.
\end{proof}


\section[Exactly three chords]{Graphs that do not contain a cycle with exactly three chords as induced subgraph} \label{sec3-cycle}

The aim of this section is to  prove that $\m C_3$ is $\chi$-bounded (Theorem \ref{3-cycleclique4}).

The proof is divided into three parts, according to the clique number. 
First of all, we prove that every (triangle, 3-cycle)-free graph has chromatic number at most $24$. 
Below, the constant $24$ is denoted by $c$. 

For graphs with clique number exactly $3$, we prove that the chromatic number is at most $4c$. When the clique number is at least  $4$, then the chromatic number is close to the clique number. We prove that asymptotically the difference between them is at most one.

Let us state now the exacts statements of these 3 theorems. They are prove in  Subsections \ref{clique2}, \ref{clique3} and \ref{clique4} respectively.

\begin{theorem}\label{3cycleclique2}
A (triangle,\,3-cycle)-free graph  has chromatic number at most $c$.
\end{theorem}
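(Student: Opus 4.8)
The plan is to mimic the structure of the proof of Theorem~\ref{2-cycle}: decompose $G$ according to BFS levels around a fixed vertex $z$, use Remark~\ref{scott} to reduce to bounding $\chi(G[S_k(z)])$, and then show that in a (triangle, 3-cycle)-free graph each level induces a graph with small chromatic number. Since we are aiming for the constant $c=24$ and Remark~\ref{scott} costs us a factor $2$, it suffices to prove that for every $z$ and every $k$, $\chi(G[S_k(z)]) \le 12$, and then apply the remark once. Actually, to make the constant work out one would probably iterate: if one can show each level of a (triangle, 3-cycle)-free graph is $2$-degenerate (or has bounded chromatic number via some auxiliary forbidden-structure argument), the bound follows. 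So the first step is: fix $z$, fix $k$, let $H = G[S_k(z)]$, and study the structure of $H$ using unimodal paths $Q_{xy}$ connecting vertices of $S_k(z)$ through lower levels, exactly as set up in Section~\ref{secPrelCycle}.

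The core of the argument will be a sequence of claims bounding how a single vertex can attach to an induced cycle or induced path living inside a level, in the spirit of Claim~\ref{degreefather2} and the no-four-neighbours-on-a-path observation from Lemma~\ref{2-cycleStep2}. Here, since we forbid $3$-cycles (cycles with exactly three chords) rather than $2$-cycles, the right statement is that no vertex has five neighbours on an induced path of $G$ (five neighbours would create, together with appropriate segments, a cycle with exactly three chords), hence no vertex has too many neighbours on an induced cycle either. Combined with triangle-freeness, this should let me show that $H$ contains no short induced cycle with the wrong chord count, and more importantly that $H$ is sparse: the key quantitative claim will be that $G[S_k(z)]$ is $O(1)$-degenerate, say every subgraph has a vertex of degree at most $11$, which immediately gives $\chi(G[S_k(z)]) \le 12$ and hence $\chi(G) \le 24$ by Remark~\ref{scott}. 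To get degeneracy I would take a putative minimal dense subgraph $R$ of $H$, pick a vertex, look at a shortest cycle or a BFS-type structure inside $R$, and use the unimodal paths together with the triangle- and 3-cycle-free hypotheses to derive a forbidden configuration; the crossing/nesting case analysis of how the attachments of father vertices interleave along $P^l, P^r$ will reappear.

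The main obstacle, I expect, is controlling the interaction between the level $S_k(z)$ and the lower levels: the unimodal paths $Q_{xy}$ have interiors in $\bigcup_{i<k} S_i(z)$ and are guaranteed chordless and "unimodal", but a vertex of $S_k(z)$ may have several fathers in $S_{k-1}(z)$, and two vertices of $S_k(z)$ with a common father (or with fathers joined by a short path) can fail to yield the clean cycles that drive the contradictions. In the two-chord case (Claim~\ref{commun}, Claim~\ref{fatherdegree1}) this was handled by showing $a$ and $b$ have disjoint fathers and even fathers of degree~$1$ onto $C$; the analogous statements here, with one extra chord of slack, will require a more involved case analysis — essentially one must argue that whenever an attachment pattern is "bad", some cycle built from segments of an induced cycle/path in the level together with unimodal paths has exactly three chords. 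I would organize this as: (i) bound the number of neighbours any outside vertex has on an induced path/cycle of $G$; (ii) prove a "limited attachment of fathers" lemma; (iii) conclude $2$-degeneracy-up-to-a-constant of each level by a minimal-counterexample argument; (iv) assemble via Remark~\ref{scott}. The delicate bookkeeping of exactly how many chords each exhibited cycle has — making sure it is exactly $3$ and not $2$ or $4$ — is where the real work lies, and where triangle-freeness is used repeatedly to rule out the degenerate small-cycle cases.
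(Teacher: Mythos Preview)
Your outline diverges from the paper's argument at the crucial step, and the divergence is where the gap lies. You aim to show that each BFS level $G[S_k(z)]$ is $11$-degenerate (hence $12$-colourable), but you give no concrete mechanism for this: ``take a minimal dense subgraph and derive a forbidden configuration'' is a hope, not an argument. The observation that no vertex has five neighbours on an induced path is correct and does appear in the paper, but by itself it does not yield degeneracy of a level --- it only constrains how a single external vertex attaches to a fixed induced path or cycle, and says nothing about the internal density of the level graph. Nothing in your sketch explains why a level could not contain, say, an arbitrarily large $3$-regular graph of high girth.

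The paper does \emph{not} go via degeneracy. It applies Remark~\ref{scott} \emph{twice}, and at each application proves a structural exclusion lemma rather than a degree bound. First (Lemma~\ref{vcycle}): in a (triangle, $3$-cycle)-free graph every level $S_k(z,G)$ is V-cycle-free; this is a case analysis with unimodal paths of the kind you anticipated, but the target is the specific configuration ``V-cycle''. Second (Lemma~\ref{crossingcycle}): in a (V-cycle, triangle, $3$-cycle)-free graph every level is in addition X-cycle-free. Hence a level-within-a-level is (X-cycle, V-cycle)-free, and Theorem~\ref{2-cycle} bounds its chromatic number by~$6$. Two applications of Remark~\ref{scott} then give $\chi(G)\le 4\cdot 6=24$. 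The factorisation $24=2\times 12$ implicit in your plan corresponds to nothing in the actual proof; the right factorisation is $24=2\times 2\times 6$, with the base case $6$ supplied by the already-established Theorem~\ref{2-cycle} rather than by a new degeneracy estimate. If you want to salvage your outline, the missing idea is precisely this: identify V-cycles and X-cycles as the configurations to exclude level by level, and recognise that the final step is a reduction to the $2$-chord theorem rather than a direct sparsity argument.
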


\begin{theorem} \label{lemmaClique3}
A ($K_4$,\,3-cycle)-free graph  has chromatic number at most $4c$.
\end{theorem}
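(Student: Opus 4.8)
The plan is to deduce Theorem~\ref{lemmaClique3} from the triangle-free case, Theorem~\ref{3cycleclique2}, by applying the level decomposition of Remark~\ref{scott} \emph{twice}. Let $G$ be $(K_4,3\text{-cycle})$-free, fix $z\in V(G)$, and let $S_i(z)$ be the levels of a breadth-first search from $z$ in $G$. By Remark~\ref{scott} there is an index $k$ with $\chi(G[S_k(z)])\ge\lceil\chi(G)/2\rceil$. Set $H:=G[S_k(z)]$, fix $z'\in V(H)$, and let $T_j:=S_j(z',H)$ be the levels of a breadth-first search from $z'$ \emph{inside} $H$. Applying Remark~\ref{scott} to $H$ yields an index $j$ with $\chi(H[T_j])\ge\lceil\chi(H)/2\rceil\ge\chi(G)/4$. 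Now $G[T_j]$ is an induced subgraph of $G$, hence lies in $\m C_3$; if in addition $G[T_j]$ is triangle-free, then Theorem~\ref{3cycleclique2} gives $\chi(G[T_j])\le c$, so $\chi(G)\le 4\,\chi(G[T_j])\le 4c$. Thus the whole theorem reduces to the claim that \emph{every $G[T_j]$ is triangle-free}, which is where the real work lies.

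To prove that claim I would suppose, for contradiction, that $T_j$ contains a triangle $abc$ and then produce a cycle of $G$ with \emph{exactly} three chords. The cases $j\le 1$ are immediate ($T_0=\{z'\}$, and a triangle in $T_1=N_H(z')$ together with $z'$ is a $K_4$), so $j\ge 2$; in particular $a,b,c$ have fathers in $T_{j-1}$, and since they also lie in $S_k(z)$ they have fathers in $S_{k-1}(z)$. The construction rides on two stocks of ``detours'': unimodal paths of $H$ between vertices of a level of $H$ (whose deep interior has no neighbour in the level above it, and, $H$ being induced, no such neighbour in $G$ either), and unimodal paths of $G$ through $S_0(z)\cup\cdots\cup S_{k-1}(z)$, whose interiors avoid $S_k(z)$. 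The idea is that if one reroutes the triangle ``the long way'' — replacing the edge $ab$ by a unimodal path $Q$ of $H$ from $a$ to $b$ while keeping $bc$ and $ca$ — then the cycle $a\,Q\,b\,c\,a$ has $ab$ as a chord, and the adjacencies of $c$ to the two fathers lying on $Q$ supply $0$, $1$, or $2$ further chords, so the chord count is forced into $\{1,2,3\}$: when it is $3$ we have our contradiction, and when it is $1$ or $2$ one modifies the construction (switching to a nearby unimodal path through $T_{j-1}$, or pushing a detour below $S_k(z)$) and re-runs the analysis. A minimal choice of the detours together with a case distinction on the finitely many relevant adjacencies — in the style of the claims inside the proof of Lemma~\ref{2-cycleStep2}, where fathers are shown to have small degree on the cycle and every adjacency pattern is seen to create a forbidden structure — should guarantee that some stage produces a cycle with exactly three chords.

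The main obstacle is precisely this triangle-freeness claim. Unlike the $2$-cycle situation, where Lemma~\ref{2-cycleStep2} shows that a single level of a breadth-first search already belongs to $\m C_1$, here a single level of a breadth-first search of a $(K_4,3\text{-cycle})$-free graph need not be triangle-free, which is exactly what forces the descent into a level of a second search and the subsequent exclusion of triangles there. The delicate point is the chord bookkeeping across these two interleaved breadth-first-search structures: one must make the chord count of the rerouted cycle land on exactly $3$ in every configuration (rather than $1$, $2$, or $\ge 4$), choosing detours minimally so as not to spawn unwanted chords, and reaching for the extra room below $S_k(z)$ when a first reroute produces too few chords. This is considerably more involved than the $\m C_2$ analogue, because there is now a whole triangle, rather than a single chord, to be ``blown up'' into a cycle with a prescribed number of chords.
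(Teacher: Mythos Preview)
Your reduction via a double application of Remark~\ref{scott} is sound: if every second-level layer $T_j$ were triangle-free, Theorem~\ref{3cycleclique2} would indeed give $\chi(G)\le 4c$. But this triangle-freeness claim is where the entire content lies, and your sketch of it is not a proof. Already the chord bookkeeping is wrong: in the rerouted cycle $a\,Q\,b\,c\,a$ you count only $ab$ and the possible edges $ca',cb'$, but $a$ may also be adjacent to $b'$ and $b$ to $a'$ (both fathers lie in $T_{j-1}$, hence at distance one from $T_j$), so the chord count ranges over $\{1,\dots,5\}$, not $\{1,2,3\}$. More seriously, in the bad configurations --- say $a',b',c'$ pairwise distinct and each of $a,b,c$ adjacent only to its own father among $\{a',b',c'\}$ --- every single-edge reroute produces a cycle with exactly one chord, and you give no concrete modification that then forces exactly three. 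Appealing to a second detour through $S_{k-1}(z)$ does not obviously help, because the $G$-fathers of $a,b,c$ in $S_{k-1}(z)$ may again see several of $a,b,c$, re-opening the same under/overcount. The analogy with Lemma~\ref{2-cycleStep2} is misleading: there the target $1$-cycle already carries a chord and the fathers are attached to a long cycle, so forcing a $2$-cycle is a matter of one extra chord; here a bare triangle offers no cycle to hang chords on, and you must manufacture all three chords from scratch.

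The paper sidesteps this by working one BFS layer higher and with richer target structures. It first proves (Lemma~\ref{lem:dragonorbutter}) that any $(K_4,3\text{-cycle})$-free graph with $\chi>2c$ contains an induced \emph{dragonfly} or \emph{butterfly}; the argument is a short minimality trick on an inclusion-minimal triangle-hitting set $T$, using Theorem~\ref{3cycleclique2} once on $G\setminus T$ and once on $G[T]$ to locate three interlocking triangles. It then shows (Lemmas~\ref{lem:nodragonfly} and~\ref{lem:butterfly}) that no \emph{single} level $S_k(z)$ of a $(K_4,3\text{-cycle})$-free graph contains a dragonfly or a butterfly, via a finite case analysis on how two fathers attach to the seven (resp.\ five) named vertices. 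One application of Remark~\ref{scott} then yields $\chi(G)\le 4c$. The gain is that a dragonfly or butterfly already contains a cycle with two built-in chords, so a single father plus one unimodal path generically supplies the third; this is exactly the leverage your bare-triangle approach lacks.
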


\begin{theorem}\label{3-cycleclique4}
A (3-cycle)-free graph  has chromatic number at most $\max (4c, \omega(G) +1)$.
\end{theorem}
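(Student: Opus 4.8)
The plan is to bootstrap from Theorem~\ref{lemmaClique3} (the $K_4$-free case) to arbitrary clique number by a level-decomposition argument analogous to the one used for $\m C_2$, exploiting that the neighborhood of a well-chosen vertex, or the levels $S_k(z)$, inherit a $K_4$-free-like structure. Concretely, I would fix a $3$-cycle-free graph $G$, let $\omega=\omega(G)$, and assume $\omega\ge 4$ (otherwise Theorem~\ref{lemmaClique3} already gives $\chi(G)\le 4c$). The first step is to understand how large cliques interact with the rest of $G$: I expect a claim saying that for a maximum clique $K$ (or at least a clique of size $\omega$), no vertex outside $K$ can have many neighbors in $K$, because a vertex with, say, $\ge 4$ neighbors in a large clique together with the clique produces a cycle with exactly three chords. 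This is the kind of local forbidden-configuration analysis that pervades the paper's other proofs.

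Next I would set up an induction on $\omega(G)$. The goal would be to find in $G$ either a small cutset (a clique cutset, or a cutset contained in a bounded union of cliques) so that the pieces have either smaller clique number or are handled by Theorem~\ref{lemmaClique3}, or else to produce directly a vertex $z$ such that $G\setminus N[z]$, or some level $S_k(z)$, has clique number $\le \omega-1$ while $G[N(z)]$ is $(\omega-1)$-bounded (a clique's neighborhood graph is itself $\omega(G)-1$ at most, but we need more: that it is still $3$-cycle-free, which it is as an induced subgraph). The cleanest route, mirroring Remark~\ref{scott} and Lemma~\ref{2-cycleStep2}, is: pick $z$, decompose into levels $S_i(z)$; show each $G[S_i(z)]$ is $3$-cycle-free (trivially, as induced subgraph) and, crucially, that $\omega(G[S_i(z)])\le\omega(G)-1$ or that these levels fall under the $K_4$-free regime — but that last point is false in general, so instead one iterates: $\chi(G)\le 2\max_i\chi(G[S_i(z)])$, and within a level one uses the structural claim about large cliques to peel off a clique cutset or reduce $\omega$. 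Keeping careful track of constants, each reduction step should cost nothing in the additive $+1$ term until one reaches $\omega=3$ or $\omega\le 4$, where the base cases contribute the $4c$; hence the bound $\max(4c,\omega(G)+1)$.

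The construction of the asymptotically tight examples is the easy companion part: exhibit graphs $G$ with $\omega(G)$ arbitrarily large, no cycle with exactly three chords, and $\chi(G)=\omega(G)+1$. A natural candidate is an odd-hole-like or Mycielski-flavored construction blown up by a large clique — e.g. take $C_5$ and substitute a clique $K_t$ for each vertex (the \emph{blow-up} $C_5[K_t]$), which has $\omega=2t$, $\chi=\lceil 5t/2\rceil=2t+\lceil t/2\rceil$; that is too big. Better: take $K_t$ plus one extra vertex forming a suitable odd structure, or more precisely a graph built from a clique $K_{\omega-1}$ and an odd cycle attached so that $\chi=\omega$ fails by exactly one, while every cycle has $0,1,2$ or $\ge 4$ chords; one must verify no cycle acquires exactly three chords, which is a finite check on the gadget.

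The hard part will be the inductive decomposition: proving that a $3$-cycle-free graph with $\omega\ge 5$ always admits either a clique cutset or a vertex whose removal (with its neighborhood, in the right way) strictly decreases the clique number without blowing up the chromatic number beyond the additive constant. The difficulty is that, unlike in the $K_4$-free setting, large cliques can appear anywhere, and one must show they cannot be "glued" to the rest of $G$ in a way that both preserves $3$-cycle-freeness and forces high chromatic number; the forbidden-subgraph casework ensuring that a vertex outside a maximum clique has at most a bounded number (likely at most $3$) of neighbors in it, and that two maximum cliques overlap in a controlled way, is where the real work lies. Once that local picture is pinned down, the global coloring argument via Remark~\ref{scott} and Theorems~\ref{3cycleclique2}–\ref{lemmaClique3} should go through routinely.
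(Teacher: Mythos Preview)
Your plan has a genuine gap that would prevent you from reaching the bound $\max(4c,\omega(G)+1)$. The level-decomposition route via Remark~\ref{scott} costs a factor of two each time it is invoked: $\chi(G)\le 2\max_i\chi(G[S_i(z)])$. Even if every level had clique number at most $\omega(G)-1$, iterating down to $\omega=3$ would give a bound of order $2^{\omega}$, not $\omega+1$. Your sentence ``each reduction step should cost nothing in the additive $+1$ term'' is exactly where the argument breaks: Remark~\ref{scott} is multiplicative, not additive, and there is no mechanism in your outline that avoids this doubling. Moreover, levels need not have smaller clique number (as you yourself note), so the induction does not even get started cleanly.

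The paper's proof is entirely different and does \emph{not} use levels or Remark~\ref{scott} once $\omega\ge 4$. It takes a minimal counterexample $G$, fixes a maximum clique $K$, and analyzes how $N(K)$ attaches to $K$. The key structural fact is that any $u\in N(K)$ has $d_K(u)\in\{1,\omega-1\}$ (otherwise one finds a 3-cycle on five vertices). One then shows that chordless paths in $G\setminus K$ between vertices of $N(K)$ (``$N(K)$-connections'') are severely constrained, forcing all of $N(K)$ to lie in a single set $U_i=S_i\cup T_i$ attached to one vertex $x_i$ of $K$. Further analysis shows $T_i$ is stable, every vertex of $T_i$ is complete to $S_i$, and in fact any two vertices of $T_i$ have identical neighborhoods. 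Two such twins contradict minimality: color $G$ minus one twin and copy its color to the other. Your intuition that ``a vertex outside $K$ cannot have many neighbors in $K$'' is correct and is the seed of the real argument, but the payoff is a twin-vertex contradiction, not an inductive or level-based reduction.

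For the tight examples, the paper uses the Haj\'os join of two copies of $K_k$, which is straightforward to verify is 3-cycle-free; your proposed blow-ups of $C_5$ do not work (as you observed), and no Mycielski-style gadget is needed.
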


Note that Theorem \ref{3-cycleclique4} says that, if a 3-cycle-free graph has a  large enough clique  (of size at least 96), then $\chi(G) \le \omega(G)+1$.
Moreover the  Haj\'os join of two cliques shows this bound is tight. Let us recall what the Haj\'os join of two cliques is and prove it is 3-cycle-free. 

Let us now describe the construction of the Haj\'os join of two $K_{k}$. 
Take two disjoint copies $H_1$ and $H_2$ of $K_{k-1}$, add a vertex $x$ complete to $H_1$ and $H_2$ and two adjacent vertices $a$ and $b$, such that $a$ is complete to $H_1$ and $b$ is complete to $H_2$. 
The obtained graph is the  Haj\'os join of $K_{k}$ and $K_{k}$ and it is easy to check that it has clique number $k$ and chromatic number $k+1$.
Now, let us show it is 3-cycle-free.
An $ax$-path with interior vertices in $H_1$ is either chordless or has at least two chords. 
Similarly, a $bx$-path with interior vertices in $H_2$ is either chordless or has at least two chords. 
So a cycle going through both $H_1$ and $H_2$ is either chordless, or has exactly two chords, or has more than four chords. 
Hence the graph is 3-cycle-free.


\subsection{Clique number $2$: proof of Theorem~\ref{3cycleclique2}} \label{clique2}

Recall that Theorem~\ref{3cycleclique2} states that \emph{a (triangle,\,3-cycle)-free graph  has chromatic number at most $c=24$}.

To prove this result, we need the  two following  lemmas. 

\begin{lemma}\label{vcycle}
Let $G$ be a (triangle,\,3-cycle)-free graph. For every $z \in V(G)$ and every integer $k$, $S_k(z)$ is  (V-cycle, triangle, 3-cycle)-free.
\end{lemma}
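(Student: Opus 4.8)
The plan is to argue by contradiction and reduce everything to finding a $3$-cycle in $G$ itself, exploiting the unimodal-path machinery described in Section~\ref{secPrelCycle}. Fix $z\in V(G)$ and an integer $k$. Since $S_k(z)$ is an induced subgraph of $G$, it is automatically $3$-cycle-free and triangle-free (these are hereditary properties), so the only thing to prove is that $S_k(z)$ is V-cycle-free. Suppose for contradiction that $S_k(z)$ contains a V-cycle $C$ as an induced subgraph, with V-chords $e_1=uw$ and $e_2=uw'$ sharing the apex vertex $u$. Write $C$ as the union of three internally disjoint paths between $u$ and the ``third'' vertex on the cycle, meeting at $u$, $w$, $w'$; concretely $C$ decomposes into an arc from $w$ to $w'$ avoiding $u$ and the two short arcs through $u$, so that $C$ together with $e_1,e_2$ has the structure of a theta-like configuration. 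The goal is to use a vertex of $S_{k-1}(z)$ — a common father, or fathers of suitable vertices of $C$ — reached by unimodal paths whose interiors are disjoint from and anticomplete to $S_k(z)\supseteq V(C)$, and to attach such a path to $C$ so as to create a cycle with exactly three chords in $G$.

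The key steps, in order, would be: (1) Record the basic degree restrictions forced by triangle-freeness and by the absence of V-cycles and $3$-cycles: e.g.\ a vertex outside $C$ can have only limited neighbourhood on the chordless subpaths of $C$, and no vertex of $S_{k-1}(z)$ can have too many neighbours on $C$, arguing as in Claim~\ref{degreefather2} of the previous proof but now using that any extra chord would push the chord count to three. (2) Show, via unimodal paths $Q_{xy}$ between fathers, that certain vertices of $C$ cannot share a father, or conversely that a carefully chosen father vertex $d$ of a well-chosen vertex $c\in V(C)$ must exist with a controlled neighbourhood $N_C(d)$. (3) In each case, splice the unimodal path $Q_{xd}$ (or $Q_{xy}$) into the V-cycle $C$: the V-cycle already contributes two chords ($e_1,e_2$), so if the spliced path creates exactly one new chord — which the degree bounds from step~(1) force — we obtain an induced (or at least chord-controlled) cycle with exactly three chords in $G$, contradicting $G\in\m C_3$. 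A short parity/location case analysis (which of the three arcs of $C$ the neighbours of $d$ land on, and whether certain pairs are adjacent) closes the argument, exactly in the flavour of the case analysis in the proof of Lemma~\ref{2-cycleStep2}.

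I expect the main obstacle to be bookkeeping the number of chords of the cycle one builds: unlike the $2$-cycle setting, here we must land on \emph{exactly} three chords, so we cannot be cavalier — we need both a lower bound (the two V-chords survive, plus at least one chord from the attached path) and an upper bound (no further chords appear), and the upper bound is where the degree claims from step~(1) do the real work and where the case split genuinely proliferates. A secondary subtlety is that $C$ need not be an induced \emph{cycle} of $G$ on its own once we add vertices of lower levels — but since the interiors of unimodal paths are anticomplete to $S_k(z)$ and $C\subseteq S_k(z)$, the only possible extra adjacencies are among $V(C)$ itself (forbidden, $C$ is induced in $S_k(z)$) or between $C$ and the endpoints of the unimodal paths in $S_{k-1}(z)$, which is precisely what steps~(1)–(2) control. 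Modulo that, the argument is a contradiction-by-construction of a $3$-cycle, and once Lemma~\ref{vcycle} is in hand one presumably feeds it, together with the $\m C_1$ result (Theorem~\ref{1-cycle}) and Remark~\ref{scott}, into the proof of Theorem~\ref{3cycleclique2}.
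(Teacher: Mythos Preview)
Your overall strategy is exactly the paper's: assume a V-cycle $C\subseteq S_k(z)$, control how fathers in $S_{k-1}(z)$ attach to $C$, and splice unimodal paths onto $C$ to manufacture a $3$-cycle in $G$. So the approach is right.

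That said, what you have written is an outline, not a proof, and the paper's actual execution is considerably more intricate than your sketch suggests. Two points where your plan is too optimistic:

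\begin{itemize}
\item Your step~(3) heuristic ``both V-chords survive, the splice adds exactly one more'' is not how most of the paper's constructions work. In many of the cycles the paper builds (e.g.\ the final contradiction $x_1xQ_{xy}y_1P_{y_1b}P_{ba}P_{bx_1}$ with chords $ax,\,by,\,ab$) only \emph{one} of the two V-chords appears, and the other two chords come from the father attachments. Getting exactly three chords is a genuinely delicate balance, and you have not shown how to achieve it in any case.
\item Your step~(1) degree bounds are understated. Unlike the $2$-cycle proof (where $d_C\le 3$), here a vertex $u\notin V(C)$ can have $d_C(u)=4$, and the paper has to pin down that configuration precisely ($N_C(u)=\{b,c,y_1,y_2\}$ with $y_1,y_2$ the neighbours of $a$ on $P_{ab},P_{ca}$) and then separately rule it out for fathers. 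This drives a chain of six or seven claims (degree of arbitrary vertices, degree of fathers of important vertices, degree of fathers of \emph{neighbours} of important vertices, then showing a father of $a$ has degree \emph{exactly} two) before the endgame. None of this is visible in your outline, and ``a short parity/location case analysis'' does not reflect what is actually required.
\end{itemize}

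In short: right plan, but the substantive work --- the specific claims constraining $N_C$ of fathers and the varied $3$-cycle constructions --- is missing, and your chord-counting heuristic would mislead you if you tried to carry it out as stated.
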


\begin{lemma}\label{crossingcycle}
Let $G$ be a (V-cycle,\,triangle,\,3-cycle)-free graph. For every $z \in V(G)$ and every integer $k$, $S_k(z)$ is (X-cycle, V-cycle,\,triangle,\,3-cycle)-free.
\end{lemma}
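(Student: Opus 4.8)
# Proof plan for Lemma~\ref{crossingcycle}

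The plan is to argue by contradiction: suppose $G$ is (V-cycle, triangle, 3-cycle)-free but for some $z$ and some $k$ the graph $S_k(z)$ contains an induced X-cycle $C$. Since $G$ is triangle-free and V-cycle-free, $C$ must genuinely be an X-cycle (its only short cycles would be forbidden otherwise), so $C$ has two crossing chords $e_1 = a_1a_2$ and $e_2 = b_1b_2$ with $a_1, b_1, a_2, b_2$ appearing in this cyclic order. These four vertices split $C$ into four chordless arcs; call them $P_1$ (from $a_1$ to $b_1$), $P_2$ (from $b_1$ to $a_2$), $P_3$ (from $a_2$ to $b_2$), $P_4$ (from $b_2$ to $a_1$). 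The idea is to use unimodal paths to connect up the fathers of the ``special'' vertices $a_1, a_2, b_1, b_2$ living in $S_{k-1}(z)$ and to derive either a smaller forbidden configuration or, crucially, a cycle with exactly three chords inside $G$ — which is the contradiction we want.

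First I would set up the degree-bounding claims, in the spirit of Claim~\ref{degreefather2} in Lemma~\ref{2-cycleStep2}: since $G$ is V-cycle-free, no vertex of $G$ has four neighbours on an induced path, and (using triangle-freeness to kill the $C_4$ case) no vertex has four neighbours on an induced cycle either. From this one gets that any $x \notin V(C)$ has $d_C(x) \le 3$, with the bound dropping to $2$ near the chord endpoints. Then I would analyse the fathers in $S_{k-1}(z)$ of $a_1, a_2, b_1, b_2$: first rule out that two of these four vertices share a father (a common father $x$ of, say, $a_1$ and $a_2$ would, together with the arcs of $C$ and the two crossing chords, close up into a cycle with exactly three chords — one needs to count the chords carefully in each of the few sub-cases of where, if anywhere, $x$ has extra neighbours on $C$). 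The heart of the argument is then to show each of $a_1, a_2, b_1, b_2$ has a father of $C$-degree exactly $1$: if a father $x$ of $a_1$ has an extra neighbour $x_1$ on some arc, I would build a path using $x$, the unimodal path from $x$ to a degree-considered father of another special vertex, and pieces of $C$, and count chords to land on a V-cycle, a triangle, or a 3-cycle. This is the step I expect to be the main obstacle: the bookkeeping is heavier than in Lemma~\ref{2-cycleStep2} because an X-cycle has two chords to begin with, so every detour one adds must contribute exactly the right increment to the chord count, and there are more symmetry classes of arc-configurations to handle.

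Finally, armed with degree-$1$ fathers $w_1, w_2, w_3, w_4$ of $a_1, b_1, a_2, b_2$ respectively (all distinct, none adjacent to $C$ except at its own child), I would finish by looking at the neighbour $c$ of $a_1$ on one of the incident arcs and a father $d$ of $c$, and push through a case analysis on $d_C(d)$ and the location of $N_C(d)$, exactly as in the last paragraph of the proof of Lemma~\ref{2-cycleStep2}: in each case, splicing $d$, a unimodal path to the appropriate $w_i$, and arcs of $C$ together produces a cycle whose chord count is forced to be $1$, $2$ (from the crossing chords), $3$, or creates a triangle — and the value $3$, or any of the forbidden values, contradicts the hypothesis that $G$ is (V-cycle, triangle, 3-cycle)-free. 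Since every branch yields a contradiction, no such $C$ exists, proving $S_k(z)$ is X-cycle-free; it is V-cycle-, triangle-, and 3-cycle-free as an induced subgraph of $G$, which is all these by hypothesis (and by Lemma~\ref{vcycle} for the V-cycle part if $G$ is only assumed triangle- and 3-cycle-free, though here V-cycle-freeness of $G$ is given directly).
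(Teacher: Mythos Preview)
Your overall framework --- unimodal paths, degree-bounding claims, and a case analysis on fathers of the four important vertices --- matches the paper's. But you have followed the template of Lemma~\ref{2-cycleStep2} too literally in one decisive place: you aim to show each important vertex has a father of $C$-degree exactly~$1$, whereas the paper proves the opposite --- every such father has $C$-degree exactly~$2$.

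The reason is chord arithmetic. An X-cycle already carries two chords, so joining two degree-$1$ fathers of important vertices by a unimodal path and an arc of $C$ typically yields a cycle with only the two original chords $ac$ and $bd$; that is not forbidden in $G$. A \emph{third} chord has to come from somewhere, and in the paper it comes from the second $C$-neighbour of each father. The paper's chain of claims is: $d_C\le 3$ for any external vertex, with neighbours spread over distinct intervals when $d_C=3$; fathers of important vertices have $d_C\le 2$; if \emph{one} important vertex has a degree-$1$ father then \emph{all} do; but all having degree-$1$ fathers is impossible (and this is where the neighbour-of-$a$-and-its-father analysis you planned for your final step actually lives --- inside the degree-$1$ branch, as the way to kill it, not after it). Hence every father of an important vertex has degree exactly~$2$, and the final contradiction is obtained by playing the second neighbours $x_1,y_1,\dots$ of these degree-$2$ fathers against each other to manufacture the third chord.

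So your step~3 as written will not go through: when you assume a father of $a_1$ has a second neighbour and try to derive a contradiction directly, you will find configurations that are not immediately forbidden. The endgame needs to be restructured around those second neighbours of degree-$2$ fathers, not around a neighbour of $a_1$ and its father.
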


Before we prove these two lemmas, let us explain how they imply Theorem~\ref{3cycleclique2}.
Suppose there exists a (triangle,\,3-cycle)-free graph $G$ with $\chi(G) \geq 25$.
Let $z$ be a vertex of $G$. 
By Remark~\ref{scott}, there exists an integer $k$ such that $\chi(G[S_k(z,G)]) \geq 13$.
Put $H= G[S_k(z,G)]$. 
By Lemma~\ref{vcycle}, $H$ is (V-cycle,\,triangle,\,3-cycle)-free.

Let $x$ be a vertex of $H$. 
By Remark~\ref{scott}, there exists an integer $\ell$ such $\chi(G[S_\ell(x,H)]) \ge 7$.
So, by Theorem~\ref{2-cycle}, $G[S_\ell(x,H)]$ contains an  X-cycle as an induced subgraph (it cannot contain a V-cycle since it is an induced subgraph of $H$ that is V-cycle-free) which contradicts~\ref{crossingcycle}.


\subsubsection{Proof of Lemma~\ref{vcycle}}

\textit{Recall that Lemma~\ref{vcycle} states that, if  $G$ is a (triangle, 3-cycle)-free graph and $z$ is a vertex of $G$, then for every integer $k$, $S_k(z,G)$ is (V-cycle, triangle, 3-cycle)-free.}

\medskip

\begin{proof}
Let $G$ be a (triangle, 3-cycle)-free graph and $z$ a vertex of $G$. 
Assume by way of contradiction that there exists an integer $k$ such that $S_k(z,G)$ contains an induced V-cycle $C$. 

\begin{figure} [h] 

\centering
\begin{tikzpicture} 
\draw (2 , 0) node  [below] {$a$};
\draw (1 , 1.7) node [above] {$b$};
\draw (3 , 1.7) node [above] {$c$};

\draw (2 , 0) node {$\bullet$};
\draw (1 , 1.7) node {$\bullet$};
\draw (3 , 1.7) node {$\bullet$};

\draw [dashed] (2 , 0) to[bend left=80] (1 , 1.7);
\draw [dashed] (1 , 1.7) to[bend left=50] (3 , 1.7);
\draw [dashed] (3 , 1.7) to[bend left=80] (2 , 0);

\draw (2 , 0) -- (1 , 1.7);
\draw (2 , 0) -- (3 , 1.7);

\draw (0.9 , 0.7) node[left]  {$P_{ab}$};
\draw (2 , 2.1) node[above]  {$P_{bc}$};
\draw (3.1 , 0.7) node[right]  {$P_{ca}$};

\draw (2,1) ellipse (-3.5 and 2);

\draw (-1.1 , 1.9) node[left]  {$S_k(z,G)$};

\end{tikzpicture}
\caption{The V-cycle $C$ in $S_k(z,G)$.} \label{fig:vcycle}
\end{figure}
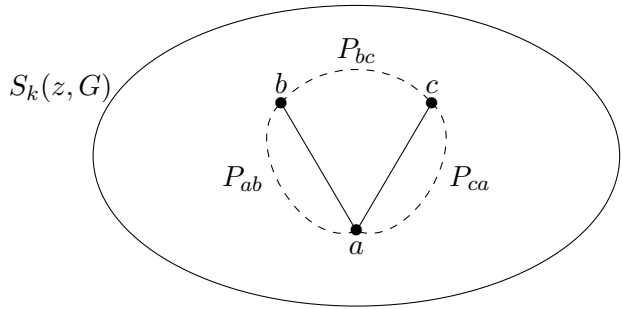

Let $a$ be the unique vertex of $C$ of degree $4$ and let $b,c$ be the two vertices of $C$ of degree $3$. 
Vertices $a$, $b$ and $c$ are called the \emph{important} vertices of $C$.
We denote by $P_{ab}$ the path from $a$ to $b$ contained in $C\backslash \{ ab, ac \}$ that avoids $c$. 
Paths $P_{bc}$ and $P_{ca}$ are defined similarly (see Figure \ref{fig:vcycle}). 
$P_{ab}$, $P_{bc}$ and $P_{ca}$ are called the \textit{intervals} of $C$.
By abuse of notation, $P_{ab}$ will sometimes denote $V(P_{ab})$. 
Also, $P_{ab}$ can be referred to as $P_{ba}$ (and analogously for $P_{bc}$ and $P_{ca}$). 
Moreover, if a vertex $x$ is in $V(P_{ab})$, then the path $xP_{ab}b$ (resp.\ $xP_{ab}a$) can be referred to as $P_{xb}$ or $P_{bx}$ (resp.\ $P_{xa}$ or $P_{ax}$).
Given two vertices $x$ and $y$ in the same interval, the \emph{external path from $x$ to $y$} consists in the path from $x$ to $y$ in $C \sm \{ab, ac\}$ passing through $a, \,b$ and $c$.
If $x_1$ and $x_2$ are two vertices of a path $P$ that have a common father $u$, we say that $x_1$ and $x_2$ are \textit{consecutive neighbors of $u$ along $P$} if $u$ has no neighbors in $\mathring x_1P \mathring x_2$.

Note that adjacent vertices of $C$ cannot have a common father otherwise $G$ would contain a triangle. Also note that if a vertex has 5 neighbors on an induced path, there is a 3-cycle. 

\medskip

The proof consists in studying how a vertex not in $C$ can attach on $C$ and then using unimodal paths to get contradictions.

\medskip

\begin{claim}\label{fact2}
If a vertex $u\notin V(C)$ satisfies $d_C(u) \geq 3$ and all but at most one neighbors of $u$ are contained in an interval of $C$, then $G$ contains a $3$-cycle.
\end{claim}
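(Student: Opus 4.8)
The plan is to exhibit, starting from $C$ and $u$, a single cycle of $G$ that has exactly three chords, i.e.\ a $3$-cycle. Write $e_1=ab$ and $e_2=ac$ for the two chords of the V-cycle $C$ (so $a$ is its important vertex, of degree $4$), and let $I\in\{P_{ab},P_{bc},P_{ca}\}$ be an interval containing all but at most one of the neighbours of $u$. List the neighbours of $u$ that lie on $I$ as $y_1,\dots,y_m$, in the order in which they occur along $I$; note $m\ge 2$. Because $G$ is triangle-free the $y_i$ are pairwise non-adjacent, so along $I$ there is a vertex strictly between any two of them, and each of $P_{ab},P_{ac},P_{bc}$ has length at least $2$ (in fact $P_{ab}$ and $P_{ac}$ have length at least $3$). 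The idea is that $e_1$ and $e_2$ can be retained as chords of a new cycle obtained from $C$ by deleting one open subarc of $I$ and re-routing through $u$, and that this subarc can be chosen so that $u$ contributes exactly one further chord.

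If every neighbour of $u$ lies on $I$ then $m=d_C(u)\ge 3$. Let $D$ be obtained from $C$ by deleting the open subarc of $I$ strictly between $y_2$ and $y_m$ and inserting the path $y_2 u y_m$ in its place. This subarc is non-empty, lies in the interior of $I$, and contains $y_3,\dots,y_{m-1}$ but neither $y_1$ nor any of $a,b,c$; hence $D$ is a cycle in which $y_1,a,b,c$ all survive. I would then check that the chords of $D$ are exactly $e_1$, $e_2$ and $uy_1$: the endpoints of $e_1$ (resp.\ of $e_2$) remain and are joined in $D$ by a subpath of length at least $2$, so $e_1,e_2$ are chords; the only surviving neighbour of $u$ apart from its two cycle-neighbours $y_2$ and $y_m$ is $y_1$; and no other edge of $G$ joins two non-consecutive vertices of $D$. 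So $D$ is a $3$-cycle.

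Otherwise $u$ has exactly one neighbour $w\notin I$; then $m=d_C(u)-1\ge 2$ and $w$ lies in the interior of one of the other two intervals or coincides with one of $a,b,c$. Let $D$ now be obtained from $C$ by deleting the open subarc of $I$ strictly between $y_1$ and $y_m$ (again non-empty, containing $y_2,\dots,y_{m-1}$, and meeting none of $a,b,c$ or $w$) and inserting $y_1 u y_m$. As before $a,b,c$ survive, hence so do $e_1,e_2$ as chords, and $w$ survives because $w\notin I$; the only surviving neighbour of $u$ other than its cycle-neighbours $y_1,y_m$ is $w$. Thus $D$ has exactly the three chords $e_1,e_2,uw$, and $G$ contains a $3$-cycle.

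The construction itself is short; the part that needs care is verifying, in each case, that $D$ is genuinely a cycle (no vertex reused) carrying no chord beyond the three listed. That verification rests on three facts immediate from the set-up: $C$ is chordless except for $e_1$ and $e_2$; the intervals $P_{ab},P_{bc},P_{ca}$ pairwise meet only in $\{a,b,c\}$; and $u\notin V(C)$. From these, the edges of $G$ with both endpoints in $V(D)$ are precisely the surviving edges of $C$ (each of which is a cycle-edge of $D$ or is incident with a deleted vertex), the two chords $e_1,e_2$, and the edges joining $u$ to its neighbours, so the chord count of $D$ is exact. I expect this bookkeeping, made slightly delicate only by the three shared endpoints of the intervals, to be the sole obstacle.
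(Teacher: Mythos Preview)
Your argument is correct and follows exactly the paper's approach: you choose two neighbours of $u$ in the interval $I$ so that the external arc of $C$ between them carries precisely one further neighbour, and the resulting cycle through $u$ then has chords $ab$, $ac$ and that extra $u$-edge. The paper compresses your two cases into a single sentence (asserting the existence of such $u_1,u_2,u_3$ without justification), whereas you spell out the choice and the chord bookkeeping, but the construction is identical.
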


\begin{proofclaim}
Let $u$ be a vertex not in $V(C)$ such that: $d_C(u) \geq 3$ and all but at most one neighbor of $u$ is contained in an interval of $C$.
So, there exists two vertices   $u_1$ and  $u_2$ in $N_C(u)$ such that $u_1$ and $u_2$ are in the same interval of $C$ and $u$ has exactly one neighbor $u_3$ on the external path $P$ from $u_1$ to $u_2$. 
Then $u_2uu_1Pu_2$ is a 3-cycle with chords $ab,\, ac$ and $uu_3$ (note that, since $G$ is triangle-free, $u_1u_2$ is not an edge).
\end{proofclaim}

\begin{claim} \label{deg3}
If $u \notin V(C)$ and $d_C(u)=3$, then $u$ has exactly one neighbor in each interval i.e.\ it has one neighbor in each of $\mathring P_{ab}$, $\mathring P_{bc}$ and  $\mathring P_{ac}$.
\end{claim}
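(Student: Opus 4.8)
The plan is to derive everything from a single application of~(\ref{fact2}). First I would observe that $u$ has at most one neighbour in each of the three closed intervals $P_{ab}$, $P_{bc}$, $P_{ca}$: if $u$ had two neighbours inside one of them, say $P_{ab}$, then at least two of the three vertices of $N_C(u)$ lie in $P_{ab}$, which is exactly the hypothesis ``$d_C(u)\ge 3$ and all but at most one neighbour of $u$ is contained in an interval'' of~(\ref{fact2}); so~(\ref{fact2}) would produce a $3$-cycle, contradicting that $G$ is (triangle,\,$3$-cycle)-free.

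Next I would rule out $u$ being adjacent to an important vertex. Suppose $u\sim a$. Since $a$ lies in both $P_{ab}$ and $P_{ca}$, the previous observation forces $u$ to have no neighbour in $\mathring P_{ab}\cup \mathring P_{ca}$; moreover $u\sim b$ would put two neighbours ($a$ and $b$) in $P_{ab}$, and $u\sim c$ would put two neighbours in $P_{ca}$, so $u\not\sim b$ and $u\not\sim c$. Hence the two neighbours of $u$ other than $a$ both lie in $\mathring P_{bc}$, i.e.\ $u$ has two neighbours in $P_{bc}$ — a contradiction. The symmetric argument, using $b\in P_{ab}\cap P_{bc}$ and $c\in P_{bc}\cap P_{ca}$, gives $u\not\sim b$ and $u\not\sim c$. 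Consequently all three neighbours of $u$ lie in $\mathring P_{ab}\cup \mathring P_{bc}\cup \mathring P_{ca}$, a disjoint union, with at most one in each open interval by the first paragraph; since $d_C(u)=3$, there is exactly one in each, which is the assertion.

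The only point that needs a little care is that~(\ref{fact2}) gets invoked in situations where one of the two co-interval neighbours of $u$ is an important vertex $a$, $b$ or $c$. This causes no difficulty, because the proof of~(\ref{fact2}) uses only that two neighbours of $u$ share an interval and (by triangle-freeness) are non-adjacent, so no extra hypothesis is required. Beyond that, the whole argument is bookkeeping about which closed intervals contain which important vertices, and I do not expect any real obstacle.
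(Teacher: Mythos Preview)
Your argument is correct and is precisely the unpacking of the paper's one-line proof ``This is immediate by~(\ref{fact2}).'' The paper leaves implicit exactly the case analysis you spell out (two neighbours in one closed interval, or a neighbour equal to an important vertex forcing the remaining two into $P_{bc}$), so there is no substantive difference in approach.
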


\begin{proofclaim}
This is immediate by (\ref{fact2}).
\end{proofclaim}

\begin{claim}\label{fact3}
If $u \notin V(C)$, then $u$ has at most $3$ neighbors on $aP_{ab}bP_{bc}c$.
\end{claim}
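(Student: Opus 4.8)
The plan is to assume for contradiction that some vertex $u \notin V(C)$ has at least $4$ neighbors on the path $aP_{ab}bP_{bc}c$ and derive a contradiction using the earlier structural facts, chiefly Claim~\ref{fact2} and the two preliminary observations (no vertex has $5$ neighbors on an induced path without creating a $3$-cycle, and adjacent vertices of $C$ have no common father because $G$ is triangle-free). Write $R := aP_{ab}bP_{bc}c$; note $R$ is a chordless path of $C$ (its only potential chords would be the chords $ab$, $ac$ of $C$, neither of which has both endpoints on $R$ except $ab$, so in fact $R$ is \emph{not} chordless — $ab$ is a chord). So the first thing I would pin down is exactly which edges of $G$ join two vertices of $R$: only $ab$. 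Thus $R$ together with the edge $ab$ carries a cycle $aP_{ab}ba$, and $bP_{bc}c$ hangs off $b$.

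The main case split is on how many of $u$'s neighbors lie on $P_{ab}$ versus on $P_{bc}$ (with $b$ shared). First I would handle the case where $u$ has at least $4$ neighbors inside a single interval, say all on $P_{ab}$ (including possibly $a$, $b$): then either two consecutive ones are non-adjacent and $u$ has $4$ neighbors on an induced subpath of $P_{ab}$, forcing (via the $5$-neighbor remark applied carefully, or directly: $4$ neighbors on a chordless path already yield a $1$-cycle... no, I need a $3$-cycle) — here the right tool is Claim~\ref{fact2}: if all of $u$'s $(\ge 3)$ neighbors but at most one lie in one interval, $G$ has a $3$-cycle; since $P_{ab}$ is one interval and $R \supseteq P_{ab}$, having $4$ neighbors on $R$ with at most one outside $P_{ab}$ triggers Claim~\ref{fact2} directly. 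So the surviving case is that $u$ has exactly $3$ neighbors on $P_{ab}$ and exactly one on $P_{bc} \setminus \{b\}$ (or $2$ and $2$, splitting as $2$ on each side of $b$, etc.). I would reduce all these to Claim~\ref{fact2} by the same observation: $P_{ab}$ and $P_{bc}$ are each a single interval, so $u$ restricted to $R$ has all but at most one neighbor in $P_{ab}$, \emph{or} all but at most one in $P_{bc}$, whenever it has $\ge 3$ on $R$ and they are not split evenly; the only genuinely new configuration is $2$ neighbors strictly inside $\mathring P_{ab}$, $2$ strictly inside $\mathring P_{bc}$, and none equal to $a,b,c$.

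For that remaining configuration — call the neighbors $u_1, u_2 \in \mathring P_{ab}$ (in this order from $a$) and $u_3, u_4 \in \mathring P_{bc}$ (in this order from $b$) — I would build a cycle with exactly three chords. The candidate is $u_2 u u_3 P_{bc} c P_{ca} a P_{ab} u_2$: going $u_2 \to u \to u_3$, then along $P_{bc}$ from $u_3$ to $c$, along $P_{ca}$ from $c$ to $a$ (the external route, using the chord $ac$ only as the edge it is), and back along $P_{ab}$ from $a$ to $u_2$; its chords are $ab$, $ac$, $uu_1$ and $uu_4$ — that is four chords, not three, so I would instead route through only one of $u_1$, $u_4$: the cycle $u_2 u u_3 P_{bc} b a P_{ab} u_2$ has chords $ab$... wait, $ab$ is an edge of this cycle now. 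Let me take $u u_1 P_{ab} a c P_{cb} b P_{bc} u_3 u$: this cycle uses edge $ac$, has chords $ab$, $uu_2$, $uu_4$ (and not $uu_3$, $uu_1$ since those are cycle edges) — exactly three chords, contradiction, provided the two $P_{bc}$-neighbors and two $P_{ab}$-neighbors are positioned to make exactly those three edges chords and nothing more collapses (e.g.\ $b \notin N_C(u)$, which holds by assumption, and $u_1 \ne a$, $u_4 \ne c$). The main obstacle is bookkeeping: verifying across all sub-configurations that the chosen cycle is induced-as-a-cycle (chordless as a cycle apart from the three intended chords), that no two of $u$'s neighbors coincide with important vertices in a way that destroys a chord, and that triangle-freeness rules out the degenerate short cases. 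I expect the clean write-up to consist of: (i) one sentence invoking Claim~\ref{fact2} to kill every case except $(2,2)$ strictly interior; (ii) the explicit $3$-cycle above for the $(2,2)$ case; (iii) a remark that triangle-freeness forbids the would-be-problematic adjacencies among $u_1,u_2,u_3,u_4$.
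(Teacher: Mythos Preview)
Your plan has a genuine gap: Claim~\ref{fact2} is stated for the \emph{full} $C$-neighborhood of $u$, not for the neighborhood restricted to $R = aP_{ab}bP_{bc}c$. Its hypothesis is ``$d_C(u)\ge 3$ and all but at most one neighbors of $u$ (in $C$) lie in one interval''. Nothing in the setting of Claim~\ref{fact3} prevents $u$ from having additional neighbors on $\mathring P_{ca}$; if $u$ has, say, three neighbors on $P_{ab}$, one on $P_{bc}$, and two on $\mathring P_{ca}$, then the hypothesis of Claim~\ref{fact2} fails and your reduction does not go through. So the sentence ``invoke Claim~\ref{fact2} to kill every case except $(2,2)$'' is not valid.

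Separately, your explicit cycle for the $(2,2)$ case is malformed: in ``$u\,u_1\,P_{ab}\,a\,c\,P_{cb}\,b\,P_{bc}\,u_3\,u$'' the segment $cP_{cb}b$ followed by $bP_{bc}u_3$ traverses $P_{bc}$ from $c$ to $b$ and then back toward $c$, revisiting vertices. If you instead stop at $u_3$ without reaching $b$, then $b$ and $u_2$ are not on the cycle, $ab$ is not a chord, and you get only the single chord $uu_4$.

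The paper's proof avoids both issues by a uniform construction that exploits the chord $ac$ as a \emph{cycle edge}, so that $\mathring P_{ca}$ is never visited and any extra neighbors of $u$ there are irrelevant. Concretely: rule out $\ge 5$ neighbors on $R$ via the ``$5$ neighbors on an induced path'' observation (triangle-freeness forces $u$ to miss $a$ or $b$, making $R$ minus that endpoint chordless); then for exactly four neighbors on $R$, take two consecutive neighbors $u_1,u_2$ on $P_{ab}$ (w.l.o.g.) and form $u_1\,u\,u_2\,P_{u_2 b}\,b\,P_{bc}\,c\,a\,P_{au_1}\,u_1$. This cycle carries exactly the chord $ab$ plus the two remaining $R$-neighbors of $u$, and nothing from $P_{ca}$ interferes.
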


\begin{proofclaim}
Since $G$ is triangle-free, either $a$ or $b$ is not a neighbor of $u$. If $u$ has at least 5 neighbors in $aP_{ab}bP_{bc}c$, then $u$ has 5 neighbors on one of the chordless paths $\mr aP_{ab}bP_{bc}c$ or $aP_{ab}bP_{bc}\mr c$  which provides a 3-cycle, a contradiction.

So we may assume that $u$ has exactly 4 neighbors in $aP_{ab}bP_{bc}c$ and w.l.o.g.\  that $u$ has at least two neighbors in $P_{ab}$. 
Let $u_1$ and $u_2$ be two consecutive neighbors of $u$ along $P_{ab}$ such that $a$, $u_1$, $u_2$ appear in this order along $P_{ab}$.
Then $u_1 u u_2 P_{u_2b} b P_{bc} c a P_{au_1} u_1$  is a 3-cycle (recall that $u_1u_2$ is not an edge since $G$ is triangle-free).
\end{proofclaim}

Note that by symmetry (\ref{fact3}) also holds for $bP_{bc}cP_{ca}a$.

\medskip

The next claim states the only way a vertex can have at least four neighbors in $C$.

\begin{claim}\label{father}
If $u \notin V(C)$ and $d_C(u) \geq 4$, then $d_C(u)=4$ and $N_C(u)= \{b,c,y_1,y_2\}$ where $y_1$ is the neighbor of $a$ in $P_{ab}$ and $y_2$ is the neighbor of $a$ in $P_{ca}$.
\end{claim}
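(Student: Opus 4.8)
The plan is to take $u\notin V(C)$ with $d_C(u)\ge 4$ and to rule out every possibility for $N_C(u)$ other than $\{b,c,y_1,y_2\}$, in each case by producing an induced cycle of $G$ with exactly three chords (or a triangle), contradicting that $G$ is (triangle,\,3-cycle)-free. The recurring device is to reroute an arc of $C$ through $u$, possibly splicing in one of the chords $ab$, $ac$ as an edge of the new cycle; triangle-freeness ($u$ is not adjacent to two consecutive vertices of $C$, so $N_C(u)$ is a stable set) guarantees that such a shortcut never creates a chord between two neighbours of $u$, which is what makes the chord count controllable.

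\emph{Localization.} Applying (\ref{fact3}) to $aP_{ab}bP_{bc}c$ and to $bP_{bc}cP_{ca}a$ gives $|N_C(u)\cap(V(P_{ab})\cup V(P_{bc}))|\le 3$ and $|N_C(u)\cap(V(P_{bc})\cup V(P_{ca}))|\le 3$, so $u$ has a neighbour in $\mathring P_{ab}$ and one in $\mathring P_{ca}$. Picking $u_1\in N_C(u)\cap\mathring P_{ab}$ and $u_2\in N_C(u)\cap\mathring P_{ca}$ nearest $a$, the cycle formed by $u$ and the arc $A$ of $C$ from $u_1$ to $u_2$ through $b$ and $c$ contains all of $N_C(u)$ except possibly $a$, is otherwise chordless, and therefore has exactly $d_C(u)-2-[u\sim a]$ chords; moreover if $d_C(u)\ge 6$ then $A$ already carries five neighbours of $u$ on a chordless path, a 3-cycle. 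Hence $d_C(u)\le 5$, and $d_C(u)=5$ forces $u\sim a$.

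\emph{Eliminating $u\sim a$.} If $u\sim a$ then $u\not\sim b,c$ by triangle-freeness, and (\ref{fact3}) leaves only a handful of distributions of the remaining neighbours (for $d_C(u)=5$: two neighbours in each of $\mathring P_{ab},\mathring P_{ca}$ and none in $\mathring P_{bc}$; for $d_C(u)=4$: at most one neighbour in $\mathring P_{bc}$). In each of them, a shortcut cycle using the chord $ac$ as an edge has exactly three chords: e.g.\ for $d_C(u)=5$, writing $w_1,w_2$ and $v_1,v_2$ for the neighbours of $u$ in $\mathring P_{ab}$ and $\mathring P_{ca}$ with $w_1,v_2$ nearer $a$, the cycle $w_2\,u\,v_2\,P_{v_2c}\,c\,a\,P_{aw_2}\,w_2$ has chords exactly $ua,uv_1,uw_1$; and for $d_C(u)=4$ with one neighbour $w\in\mathring P_{ab}$ and two neighbours $v_1$ (nearer $c$), $v_2$ in $\mathring P_{ca}$, the cycle $a\,P_{ab}\,b\,P_{bc}\,c\,P_{cv_1}\,v_1\,u\,a$ has chords exactly $ab,ac,uw$. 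So from now on $u\not\sim a$ and $d_C(u)=4$.

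\emph{The last case.} I would then case-analyse $N_C(u)\cap\{b,c\}$ and the distribution of the other neighbours over the three interiors, using as workhorse the cycle $C'$ built from $u$, the edge $uw$, the subpath of $P_{ab}$ from $w$ to $b$, all of $P_{bc}$, the chord $ac$, the subpath of $P_{ca}$ from $a$ to $v$, and the edge $vu$, where $w$ and $v$ are the neighbours of $u$ in $\mathring P_{ab}$ and $\mathring P_{ca}$ nearest $a$ and nearest $c$ respectively. Its chords are exactly $ab$, the edges $ub,uc$ when present, the edges $ux$ with $x\in N_C(u)\cap\mathring P_{bc}$, the edges $ux$ with $x$ a neighbour of $u$ strictly ``inside'' (on the $b$-side of $w$ in $\mathring P_{ab}$, or on the $a$-side of $v$ in $\mathring P_{ca}$), plus the single extra chord $ay_1$ when $w=y_1$; when one of these shortcuts fails to have exactly three chords, a symmetric cycle (using $ab$ in place of $ac$), or the cycle $y_1\,u\,b\,P_{bc}\,c\,P_{ca}\,a\,y_1$ and its mirror, does. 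Running through the few possibilities, the only configuration that escapes all these cycles is $N_C(u)=\{b,c,y_1,y_2\}$, as desired. I expect the main obstacle to be exactly this bookkeeping: confirming in each configuration that the chosen cycle has \emph{precisely} three chords, which means watching the degenerate situations --- an interval of length one, a neighbour of $u$ adjacent to one of $a,b,c$, or the coincidences $w=y_1$, $v=y_2$ --- that can add or remove a chord, and switching to the appropriate variant cycle when they occur.
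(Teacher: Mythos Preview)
Your plan is correct in outline and is essentially the same strategy as the paper's: reduce to a finite list of neighbourhood patterns for $u$ and, in each pattern, exhibit an explicit cycle with precisely three chords. The paper organises the split by $|N_C(u)\cap P_{bc}|\in\{2,1,0\}$, whereas you first bound $d_C(u)\le 5$ via your arc cycle (a clean device the paper replaces by repeated appeals to (\ref{fact3})), then split on whether $u\sim a$; these two organisations cover the same ground and the cycles one writes down are close cousins.

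Two places where your sketch would need tightening. First, in the $d_C(u)=4$, $u\sim a$ subcase with distribution $(1,1,1)$ (one neighbour $w\in\mathring P_{ab}$, one $x\in\mathring P_{bc}$, one $v\in\mathring P_{ca}$), the cycle ``using $ac$ as an edge'' that you advertise does \emph{not} give three chords; a working choice is $u\,v\,P_{va}\,a\,P_{ab}\,b\,P_{bx}\,x\,u$, whose chords are exactly $ua,\,uw,\,ab$ (triangle-freeness forces $w\neq y_1$, $v\neq y_2$, so no stray chord $ay_1$ or $ay_2$ appears). Second, in the last case your workhorse cycle $C'$ has chord count $3+[w=y_1]$, so when $w=y_1$ you must switch cycles; your fallback $y_1\,u\,b\,P_{bc}\,c\,P_{ca}\,a\,y_1$ presupposes $u\sim b$, and if $u\not\sim b$ (and symmetrically $u\not\sim c$) a further variant is needed before you can conclude $N_C(u)=\{b,c,y_1,y_2\}$. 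None of this is a missing idea --- as you correctly diagnose, it is pure bookkeeping, and the paper handles it by the same kind of ad hoc cycle choices.
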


\begin{proofclaim}
Let $u \notin V(C)$ and suppose $d_C(u) \ge 4$. 

If $u$ has at least three neighbors in $P_{bc}$, then it has at least four neighbors either in $P_{ab}P_{bc}$ or in $P_{bc}P_{ac}$, which contradicts (\ref{father}).
So $u$ has at most two neighbor in $P_{bc}$.

\medskip


\vspace{1.5ex}

\noindent {\bf Case 1} : $u$ has exactly two neighbors, $u_1,u_2$ say, on $P_{bc}$.
\\
Assume w.l.o.g. that $b$, $u_1$, $u_2,c$ appear in this order along $P_{bc}$. By~(\ref{fact3}), $u$ has at most one neighbor on $aP_{ab}\mathring{b}$ and at most one neighbor on $\mathring{c}P_{ca}a$. Since $d_C(u) \geq 4$, both neighbors exists. 
Moreover both are distinct from $a$ otherwise there would be 4 neighbors on $P_{ab}P_{bc}$ or $P_{bc}P_{ca}$, contradicting (\ref{fact3}). 
Denote by $y_1$ (resp. $y_2$) the neighbor of $u$ in $P_{ab}\mathring{b}$ (resp.\ in $\mathring{c}P_{ca}$).

If $u_1 \neq b$ then $y_1 u u_1 P_{u_1c}c P_{ca} P_{ay_1}y_1$ has $3$ chords, namely $uy_2, \, uu_2$ and $ac$. 
So $u_1=b$ and, by symmetry, $u_2=c$. 
If $y_2a$ is not an edge then $y_2 u u_1 P_{u_1a} a c P_{cy_2} y_2$ is a 3-cycle with chords $ab,\,uy_1$ and $uc$. 
So $ay_2$, and by symmetry $ay_1$, are edges, so the outcome holds.

\vspace{1.5ex}

\noindent {\bf Case 2} : $u$ has exactly one neighbor, $u_3$ say on $P_{bc}$.
\\
Since $d_C(u) \geq 4$, $u$ has at least 3 neighbors on $\mathring{b}P_{ba}P_{ac}\mathring{c}$. W.l.o.g. $u$ has at least two neighbors in $P_{ab}\mathring{b}$.
By (\ref{fact3}), $u$ has exactly two neighbors, $u_1, u_2$ say, in $P_{ab} \mathring b$.
Assume that $a$, $u_1$, $u_2$ appear in this order along $P_{ab} \mathring b$.
Let $u_4$ be the neighbor of $u$ that is closest from $a$ in $\mathring P_{ca}$ ($u_4$ exists since $d_C(u) \geq 4$). 
If $u_3 \neq c$ then $u_4 u u_3 P_{u_3b} b P_{ba} P_{ac} u_4$ is a 3-cycle with chords $ab$, $uu_1$ and $uu_2$. 
So $u_3=c$. 

Note that $a$ is not a neighbor of $u$ otherwise $a,c,u$ would be a triangle. So $N_C(u)=\{u_1,u_2,c,u_4\}$, otherwise $u$ would have $5$ neighbors on the chordless path $V(C) \setminus \{a\}$, i.e. there would be 3-cycle. 
Hence $u_2ucP_{ca}P_{au_2} u_2$ is a 3-cycle with chords $uu_1, \, ac$ and $uu_4$, a contradiction.

\vspace{1.5ex}

\noindent {\bf Case 3} : $u$ no neighbor on $P_{bc}$.
\\
Since $d_C(u) \geq 4$, we may assume w.l.o.g.\ that $u$ has at least two neighbors, $u_1,u_2$ say, on $P_{ab}\mathring{b}$. 
By (\ref{fact2}), $u$ has at least two other neighbors $u_3, u_4$ on $\mathring c P_{ca}$. 
Moreover $u$ has no other neighbors in $C$ since otherwise $u$ would have 5 neighbors on the chordless path $V(C) \backslash P_{bc}$. 
By~(\ref{fact2}),  $u_1,u_2,u_3,u_4$ are distinct from $a$. Assume w.l.o.g.\ that $u_2$, $u_1$, $a$, $u_3$, $u_4$ appear in this order along $\mathring{b}P_{ba}P_{ac}\mathring{c}$. 

If $au_3$ is an edge then $u_3 u u_2 P_{u_2a} a c P_{ca}$ is a 3-cycle with chords $au_3, \, uu_4$ and $uu_1$. 
So we may assume $au_3$ is not an edge. 
If $u_2b$ is an edge then $u_2 u u_4 P_{u_4c} c P_{cb} b a P_{au}u_2$ is a 3-ycle with chords $uu_1, \, ac$ and $u_2b$. 
So $u_2b$ is not an edge and hence, $u_2 u u_3 P_{u_3c} c P_{cb} b a P_{au_2} u_2$ is a 3-cycle with chords $uu_1,  \, uu_4$ and $ac$, a contradiction.

\end{proofclaim}

\begin{claim}\label{casmechant}
Let $y$ be a father of a vertex of $C$. Then $d_C(y) \leq 3$. 
\end{claim}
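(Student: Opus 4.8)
The claim to prove is: \emph{if $y$ is a father of a vertex of $C$ (so $y \in S_{k-1}(z)$ has a neighbor in $C \subseteq S_k(z)$), then $d_C(y) \le 3$.}

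The plan is to argue by contradiction: suppose some father $y$ of a vertex of $C$ has $d_C(y) \ge 4$. By Claim~\ref{father}, the only possibility is $d_C(y) = 4$ with $N_C(y) = \{b, c, y_1, y_2\}$, where $y_1$ is the neighbor of $a$ on $P_{ab}$ and $y_2$ is the neighbor of $a$ on $P_{ca}$. The key structural obstruction is that $y$ is adjacent to both $b$ and $c$, which are the two degree-$3$ important vertices of $C$; since $G$ is triangle-free, $bc$ is not an edge (they sit at the ends of the nonadjacent interval $P_{bc}$, and in fact $P_{bc}$ has length $\ge 2$ because its endpoints are nonadjacent). First I would record that $a$ is a vertex of $C$ at distance $k$ from $z$, so $a$ has a father $x \in S_{k-1}(z)$; by Claim~\ref{degreefather2}-analogue reasoning (or directly: adjacent vertices of $C$ have no common father since $G$ is triangle-free, and more to the point we will use unimodal paths) we get a unimodal path $Q_{xy}$ from $x$ to $y$ whose interior lies in lower levels and has no neighbor on $C$.

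The heart of the argument is then to build a short cycle through $y$, $Q_{xy}$, $x$, $a$, and back along $C$, and count chords. Since $y$ is adjacent to $b$, $c$, $y_1$, $y_2$ and $x$ is adjacent to $a$ (and $x \ne y$, and $x$ has no neighbor among the interior of $Q_{xy}$), consider for instance the cycle $b\, y\, Q_{yx}\, x\, a\, y_1\, \ldots$ using the interval $P_{y_1 b}$ (the portion of $P_{ab}$ from $y_1$ to $b$) together with the $ay_1$ edge: that is, $C' = a\, x\, Q_{xy}\, y\, b\, P_{by_1}\, y_1\, a$. The chords of $C'$ within $G$ are exactly the edges among its vertices that are not consecutive on $C'$: the interior of $Q_{xy}$ contributes none (it has no neighbors on $C$ and $Q_{xy}$ is chordless/unimodal), $x$ has no neighbor on $C$ other than $a$ aside from possibly things we've excluded, and on the $C$-side the chords are $y y_1$ (an edge, since $y_1 \in N_C(y)$), plus $a x$ is a cycle-edge not a chord — so one must carefully tally. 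I expect the right move is to also bring in $c$: use the cycle $a\, x\, Q_{xy}\, y\, c\, P_{cb}\, b\, P_{by_1}\, y_1\, a$ or similar, so that the known chords are $yy_1$ and $yb$ and $ac$ would be a chord only if $ac$ is an edge — but $ac$ IS an edge of $C$ (recall $a$ is the degree-$4$ vertex with both $ab$ and $ac$ chords... no wait, $ab$ and $ac$ are the two V-chords, so $ab,ac$ are chords of $C$, not cycle edges). This sign-bookkeeping is exactly the main obstacle: one must pick the cycle so that precisely three chords appear — I would exploit that $y$ touches both $b$ and $c$ (giving two "interval-skipping" chords) plus one more of $y_1, y_2$, or exploit $ab$/$ac$, to land on exactly three, contradicting $3$-cycle-freeness; and check the non-edge $bc$ rules out a triangle degenerating things, while $P_{bc}$ having length $\ge 2$ ensures the cycle is genuine.

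Concretely, the cleanest candidate I would try first: since $ay_1$ and $ay_2$ are edges (from Claim~\ref{father}), and $y b, y c, y y_1, y y_2$ are edges, look at the cycle $y_1\, a\, y_2\, y\, y_1$ — but that is a $4$-cycle $y_1 a y_2 y$ with chord... $a y$? No, $ay$ need not be an edge (indeed $a \notin N_C(y)$). So $y_1 a y_2 y y_1$ is a chordless $C_4$ sitting in $S_k(z)$, which is fine on its own. Instead, replace the edge $ay_1$ (resp. $ay_2$) by the unimodal detour through $x$: the cycle $y_1\, x\, \ldots$ won't work since $x y_1$ need not be an edge. The working construction must route $a$-to-$C$ through $x$ and $y$: take $D = a\, x\, Q_{xy}\, y\, b\, P_{ba}\, a$ where $P_{ba}$ is the interval from $b$ to $a$ in $C$ of length $\ge 2$ through $y_1$ (with $ab$ being a chord of $C$, not used). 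The chords of $D$: on the $C$-portion $b P_{ba} a$ together with the edge $yb$, the chord $y y_1$ appears; the edge $ab$ is a chord of $D$ (it joins two non-consecutive vertices $a,b$ of $D$); and we need one more. If $xy$ is an edge then $D' = a x y b P_{ba} a$ has chords $yy_1$, $ab$, and... $xb$? only if edge. Hmm — so the final obstacle is genuinely this casework on which of $ax$-to-$y$ is direct, and I would instead invoke Claim~\ref{commun}-style reasoning adapted to show that $a$ cannot have a father creating these configurations, OR, most likely, simply observe: by Claim~\ref{father} applied symmetrically, $y$ has no neighbor on $\mathring P_{bc}$, so take $D = c\, y\, Q_{yx}\, x\, a\, b\, P_{bc}\, c$: its chords are $y b$ (edge, $b$ not adjacent to $c$ on $D$), $a c$ (edge $ac$, since $a,c$ non-consecutive on $D$ — wait $xa$ and then need $a$ to $b$), and $ab$ (edge). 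That is three chords $\{yb, ac, ab\}$ — but I must verify $y_1, y_2 \notin V(D)$ so they don't create extra chords: $y_1 \in P_{ab}$ which is not on $D$, and $y_2 \in P_{ca}$ not on $D$, and $y y_1, y y_2$ are thus not chords of $D$; also $x$'s only $C$-neighbor is $a$; also the interior of $Q$ contributes nothing. Provided $P_{bc}$ has length $\ge 2$ (true) this $D$ is an induced-enough cycle with exactly $3$ chords, a $3$-cycle, contradiction. Thus $d_C(y) \le 3$.

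\begin{proof}[Plan]
Assume for contradiction a father $y$ of some vertex of $C$ has $d_C(y) \ge 4$. By Claim~\ref{father}, $d_C(y) = 4$ and $N_C(y) = \{b, c, y_1, y_2\}$ with $y_1$ the neighbor of $a$ on $P_{ab}$ and $y_2$ the neighbor of $a$ on $P_{ca}$. Since $G$ is triangle-free, $bc \notin E(G)$, so $P_{bc}$ has length $\ge 2$. Let $x \in S_{k-1}(z)$ be a father of $a$; note $x \ne y$, and a unimodal path $Q_{xy}$ has all interior vertices in lower levels with no neighbor on $C$. Consider the cycle
$$D : \quad x\, a\, b\, P_{bc}\, c\, y\, Q_{yx}\, x.$$
Its edges are: $xa$; $ab$ (a chord of $C$, hence an edge of $G$); $b P_{bc} c$; $cy$; $y$ through $Q_{yx}$ back to $x$. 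One checks $D$ is a cycle (it has length $\ge 5$ since $P_{bc}$ contributes $\ge 2$ edges and $Q_{yx}$ is nontrivial as $x \ne y$). Now count chords of $D$ in $G$: the interior of $Q_{xy}$ has no neighbor on $C$ and $x$'s only neighbor on $C$ is $a$, so the only possible chords come from edges among $\{a,b,c\} \cup V(P_{bc}) \cup \{y\}$ not on $D$. These are precisely $ac$ (an edge, being the second V-chord of $C$) and $by$ (an edge, since $b \in N_C(y)$); note $y_1, y_2 \notin V(D)$, so $yy_1, yy_2$ are not chords, and $ab$ is a genuine edge of $D$, not a chord, while $P_{bc}$ is chordless in $C$. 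Hence $D$ has exactly the chords $ac$ and $by$ — only two. So instead take
$$D' : \quad x\, a\, y_1\, P_{y_1 b}\, b\, P_{bc}\, c\, y\, Q_{yx}\, x,$$
which additionally has the chord $yy_1$ (edge), giving chords $\{ac, by, yy_1\}$ — exactly three, and one verifies no others appear ($y_2 \notin V(D')$, $x$ sees only $a$ on $C$, $Q$ is clean, $ab$ and $ay_1$ are edges of $D'$). Thus $D'$ is a $3$-cycle in $G$, a contradiction. Therefore $d_C(y) \le 3$. (The main delicate point is the precise chord tally in $D'$; one must also confirm $P_{y_1 b}$ has length $\ge 1$ and that $a \ne y_1$, both immediate, and that $D'$ is chordless apart from the listed three, using Claims~\ref{fact3} and~\ref{father} to rule out further $C$-incidences of $y$ and the emptiness of interior$(Q)$ on $C$.)
\end{proof}
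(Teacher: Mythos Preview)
Your final cycle $D' = x\, a\, y_1\, P_{y_1 b}\, b\, P_{bc}\, c\, y\, Q_{yx}\, x$ does \emph{not} have exactly three chords: you assert that ``$ab$ and $ay_1$ are edges of $D'$'', but only $ay_1$ is an edge of $D'$. On $D'$ the vertex $a$ is flanked by $x$ and $y_1$, while $b$ is reached only after traversing $P_{y_1 b}$ (of length $\ge 1$, since $y_1 \ne b$). Hence $ab$ is a chord of $D'$, and $D'$ has (at least) the four chords $ab$, $ac$, $by$, $yy_1$. A cycle with four chords is perfectly allowed in a $3$-cycle-free graph, so no contradiction arises. Your earlier cycle $D$ genuinely has only the two chords $ac$ and $by$; the attempt to manufacture a third chord by routing through $y_1$ inevitably turns $ab$ from a cycle-edge into a chord, so you gain two chords, not one.

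There is a second gap: you repeatedly use that ``$x$'s only $C$-neighbor is $a$''. At this stage only Claims~\ref{fact2}--\ref{father} are available; they give $d_C(x)\le 2$ (since $a$ lies in no interval interior, ruling out $d_C(x)=3$ via Claim~\ref{deg3}, and Claim~\ref{father} rules out $d_C(x)\ge 4$), but they do \emph{not} force $d_C(x)=1$. A second neighbour of $x$ on $P_{ab}\cup P_{bc}$ would add yet another chord to $D'$.

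The paper avoids both difficulties by not using a father of $a$ at all. Instead it takes $e$, the neighbour of $b$ on $P_{ab}$ (so $e\ne y_1$, else $a y_1 b$ would be a triangle), and a father $f$ of $e$. Since $e\notin\{b,c,y_1,y_2\}$, Claim~\ref{father} forces $d_C(f)\le 3$, and one can then run a short case analysis on where the (at most two) further neighbours of $f$ sit, each case producing a $3$-cycle using $Q_{fy}$ together with pieces of $C$. The point is that bringing in a father of a \emph{non}-important vertex near $b$ gives enough freedom to land on exactly three chords; routing through a father of $a$ interacts badly with the two V-chords $ab,ac$ incident to $a$.
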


\begin{proofclaim}
 Let $y_1$ be the neighbor of $a$ in $P_{ab}$ and $y_2$ be the neighbor of $a$ in $P_{ca}$.
Let $y$ be the father of a vertex in $C$.
Suppose for contradiction that  $d_C(y) \geq 4$
By~(\ref{father}), $N_C(y)=\{b,c,y_1,y_2\}$. 

Let $e$ be the neighbor of $b$ on $P_{ab}$. 
Note that $e \neq y_1$ since otherwise $aby_1$ is a triangle. 
Let $f$ be a father of $e$. 
By~(\ref{father}), $d_C(f) \leq 3$. 
If $f$ has no neighbor in $P_{ac} \cup P_{ab} \sm \{e,b\}$, then $e f Q_{fy} y c P_{ca}P_{ab} e$ is a 3-cycle with chords $yy_1, \, yy_2$ and $ac$. 
So  $f$ has at least one neighbor in $P_{ac} \cup P_{ab} \sm \{e,b\}$.

Assume that $f$ has at least one neighbor $f_1$ in $P_{ab} \sm \{e,b\}$. 
Then it is the only one, otherwise $d_C(f)=3$ and all neighbors of $f$ are in the same interval contradicting~(\ref{deg3}). 
Then $e f Q_{fy} b a  P_{ae} e$ is a 3-cycle with chords $eb, \, yy_1$ and $ff_1$. 
So $f$ has no neighbors in $P_{ab} \sm \{e,b\}$.

Hence $f$ has at least one neighbor $f_1$ in $P_{ca} \sm \{a\}$ and it is the only one by (\ref{deg3}). 
If $f$ has no neighbor on $\mathring{P_{bc}}$ then $e f Q_{fy} y y_2 P_{ac}  P_{ce}  e$ has chords $yb,\, yc$ and $ff_1$. 
So, $f$ has at least one neighbor $f_2$ in $\mathring P_{bc}$, and it is the only one by (\ref{deg3}).

If $fy$ is an edge then $f_2feP_{ey_1}y_1yy_2P_{y_2c}P_{cf_2}$ is a 3-cycle with chords $fy, \, ff_1$ and $yc$. 
Otherwise $eff_2P_{f_2b}byy_2aP_{ae}$ is a 3-cycle with chords, $eb, \, ab$ and $yy_1$.
\end{proofclaim}

\begin{claim} \label{importantDeg2}
If $x$ is a father of an important vertex of $C$, then $d_C(x) \le 2$.
\end{claim}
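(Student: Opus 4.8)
The plan is to get this as a quick corollary of Claims~\ref{casmechant} and~\ref{deg3}. First recall what "$x$ is a father of an important vertex" means: $x \in S_{k-1}(z)$, hence $x \notin V(C)$ (as $C \subseteq S_k(z)$), and $x$ is adjacent to at least one of $a$, $b$, $c$. By Claim~\ref{casmechant}, $d_C(x) \le 3$, so it suffices to rule out the possibility $d_C(x) = 3$.

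So suppose for contradiction that $d_C(x) = 3$. By Claim~\ref{deg3}, $x$ then has exactly one neighbor in each of $\mathring{P}_{ab}$, $\mathring{P}_{bc}$ and $\mathring{P}_{ac}$. In particular, none of the three neighbors of $x$ in $C$ is one of the important vertices $a$, $b$, $c$, since these are endpoints of the intervals and not interior vertices of any of them. This contradicts the assumption that $x$ is adjacent to an important vertex, and therefore $d_C(x) \le 2$.

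I do not expect a real obstacle here: all the combinatorial effort has already gone into establishing Claims~\ref{fact2}, \ref{deg3}, \ref{father} and \ref{casmechant}, and the present statement is a formal consequence, treated uniformly no matter which of $a$, $b$, $c$ the vertex $x$ happens to be a father of. The only point worth double-checking is that the conclusion of Claim~\ref{deg3} is genuinely about the \emph{interiors} $\mathring{P}_{ab}, \mathring{P}_{bc}, \mathring{P}_{ac}$ (which it is, as stated), so that being a father of an important vertex — i.e.\ being adjacent to an endpoint of these intervals — is incompatible with having $d_C(x) = 3$.
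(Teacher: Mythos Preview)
Your argument is correct and is exactly the paper's own proof: apply Claim~\ref{casmechant} to get $d_C(x)\le 3$, then use Claim~\ref{deg3} to exclude $d_C(x)=3$ since a degree-$3$ attachment would put all neighbors in the interiors $\mathring P_{ab},\mathring P_{bc},\mathring P_{ac}$, missing the important vertex. The paper just states this more tersely.
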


\begin{proofclaim}
Let $x$ be a father of an important vertex of $C$. 
By (\ref{casmechant}), $d_C(x) \le 3$. 
By (\ref{deg3}), the father of an important vertex cannot have exactly three neighbors in $C$. 
So $d_C(x) \le 2$.
\end{proofclaim}

\begin{claim} \label{brother}
Let $e$ be the neighbor of $a$ on $P_{ab}$ and let $f$ be a father of $e$. 
Then $d_C(f) \le 2$.
\end{claim}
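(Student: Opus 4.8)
The plan is to prove Claim~\ref{brother} in close analogy with Claim~\ref{casmechant}, but now exploiting the fact that $e$ is the \emph{neighbor of the degree-$4$ vertex $a$} on $P_{ab}$, so that $e$ and $a$ behave almost like important vertices. Let $y_1$ be the neighbor of $a$ in $P_{ab}$ (so $y_1 = e$) and $y_2$ the neighbor of $a$ in $P_{ca}$. Suppose for contradiction that $f$ is a father of $e$ with $d_C(f) \ge 3$. By~(\ref{casmechant}) we would have $d_C(f) \le 3$, hence $d_C(f) = 3$, and by~(\ref{deg3}) $f$ has exactly one neighbor in each of $\mathring P_{ab}$, $\mathring P_{bc}$, $\mathring P_{ca}$. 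First I would note $e \notin \{a,b\}$ and that $e$ is not itself important, but $ae \in E(C)$; since $G$ is triangle-free, $f$ is not adjacent to $a$ (else $aef$ would be a triangle if $fe$ and $fa$ are both edges — wait, $f$ is adjacent to $e$ by hypothesis, so $f$ adjacent to $a$ gives triangle $aef$). So $f$ is non-adjacent to $a$.

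Let $f_1$ be the neighbor of $f$ in $\mathring P_{ab}$, $f_2$ its neighbor in $\mathring P_{bc}$, and $f_3$ its neighbor in $\mathring P_{ca}$. The idea is then to build, using a unimodal path $Q_{fx}$ to a father $x$ of $a$ (which exists, and has $d_C(x)\le 2$ by~(\ref{importantDeg2})), a cycle through $a$, $e$, $f$, $Q_{fx}$, together with an appropriate arc of $C$, and count chords. A first subcase: if $f_1 = e$ this contradicts $f$ having a neighbor in $\mathring P_{ab}$ distinct from $e$ — actually $e = y_1 \in \mathring P_{ab}$, so $f_1 = e$ is possible; then $f$ would have no \emph{other} neighbor in $\mathring P_{ab}$, consistent with~(\ref{deg3}). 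I would handle this by case analysis on whether $f_1 = e$ or $f_1 \ne e$. If $f_1 \ne e$, the path $e f Q_{fx} x a P_{ae'} \dots$ — better, consider the cycle $a\, e\, f\, f_2\, P_{f_2 c}\, c\, P_{ca}\, a$ using the chord $ac$, plus chords $ae$? No: $ae$ is an edge of $C$ here. Let me instead mimic~(\ref{casmechant}): route a cycle $e\,f\,Q_{fx}\,x\,\ldots$ and count. The cleanest approach: since $d_C(f)=3$ with one neighbor per interval, for essentially every choice of arc of $C$ connecting two of $f_1,f_2,f_3$ externally, the cycle $f_i\, f\, f_j \,(\text{arc through important vertices})$ is a chordless cycle with exactly the three chords $ab$, $ac$, and one of $f f_k$ — giving a $3$-cycle directly, \emph{unless} that arc is too short. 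So the bulk of the work is ruling out the degenerate positions of $f_1,f_2,f_3$ (adjacent to important vertices, or $f_1=e$) by instead using the unimodal path $Q_{fx}$ to $x$ and the extra structure at $a$.

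Concretely, I expect the endgame to look like this. If $f$ is adjacent to $b$ then $d_C(f)=3$ with a neighbor at $b$ on the boundary of two intervals contradicts~(\ref{deg3}) (it would force two neighbors counted in a bad way) — or more simply use that $b,f_2$-type configurations give a $3$-cycle via Claim~\ref{fact3}. The genuinely delicate case is when $f_1$, $f_3$ are close to $a$ (e.g.\ $f_1 = e$, or $f_3 = y_2$), so that the "obvious" cycle through the $ac$-chord degenerates; here one pulls in $x$, a father of $a$, and writes a cycle $f\, Q_{fx}\, x\, a\, e\, f$ — this is a cycle through $f,x,a,e$; it has chord $ x$–? and we need $xa \in E$ (yes), $ae\in E$ (yes), $ef\in E$ (yes), and we need to know the chords: $xe$? $xf$? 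By~(\ref{importantDeg2}) $d_C(x)\le 2$, and interior vertices of $Q_{fx}$ have no neighbor in $C$, so the only possible chords are among $\{xe, xf, fe'\text{-stuff}\}$; controlling these is exactly the point. The main obstacle will be the same as in Claim~\ref{casmechant}: a careful bookkeeping of which short arcs of $C$ can fail to yield a clean $3$-cycle, and for each such failure producing an alternative $3$-cycle through $x$ or through the neighbor $f$ of the specially-chosen vertex $e$. Once $d_C(f)=3$ is excluded, $d_C(f)\le 2$ follows, which is the statement.
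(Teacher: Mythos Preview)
Your proposal has the right opening (reduce to $d_C(f)=3$ via (\ref{casmechant}), then use (\ref{deg3})) and the right instinct (bring in a father $x$ of $a$ and a unimodal path $Q_{fx}$), but it never actually carries out a proof: after the setup it is a sequence of abandoned attempts (``wait'', ``No:'', ``let me instead'') and ends with ``I expect the endgame to look like this'' without producing a single concrete $3$-cycle. Two specific problems:

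\begin{itemize}
\item The case split ``$f_1=e$ versus $f_1\ne e$'' is vacuous. Since $e\in\mathring P_{ab}$ and, by~(\ref{deg3}), $f$ has exactly one neighbor in $\mathring P_{ab}$, that neighbor is $e$. So $N_C(f)=\{e,f_1,f_2\}$ with $f_1\in\mathring P_{bc}$ and $f_2\in\mathring P_{ca}$; there is nothing to split on.
\item Your ``direct'' idea --- take two of $f$'s neighbors and the external arc, hoping for chords $ab$, $ac$, $ff_k$ --- does not work. To have both $ab$ and $ac$ as chords you must put $a$ on the cycle via the path $P_{ab}$ or $P_{ca}$; but $e$ is the neighbor of $a$ on $P_{ab}$, so any such cycle through $e$ and $a$ uses $ae$ as a cycle edge and then one of $ab,ac$ becomes a cycle edge too, not a chord. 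If instead you route through $f_1$ and $f_2$ and use the chords $ba,ac$ as cycle edges, the resulting cycle is chordless. So there is no cheap $3$-cycle built only from $f$'s neighbors; the father $x$ of $a$ is essential, not optional.
\end{itemize}

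The paper's proof does exactly the part you left undone. With $N_C(f)=\{e,f_1,f_2\}$ as above, take a father $x$ of $a$; by~(\ref{importantDeg2}), $d_C(x)\le 2$. If $x$ has no neighbor in $\mathring a P_{ab}P_{bc}$, then $efQ_{fx}xacP_{cb}bP_{be}e$ is a $3$-cycle with chords $ab$, $ae$, $ff_1$. Otherwise $N_C(x)=\{a,x_1\}$ with $x_1\in\mathring a P_{ab}P_{bc}$, and one checks two subcases on the position of $x_1$ relative to $f_1$, each yielding an explicit $3$-cycle through $Q_{xf}$. That is the whole argument; your plan is compatible with it, but you need to actually write down these cycles rather than gesture at them.
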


\begin{proofclaim} 
Assume for contradiction that $d_C(f)=3$.
By~(\ref{casmechant}), $f$ has a exactly one neighbor, $f_1$ say, in $\mathring P_{bc}$ and exactly one, $f_2$ say, in $\mathring P_{ca}$. 
All of them are distinct from important vertices since fathers of important vertices have at most two neighbors on $C$.

Let $x$ be a father of $a$.
If $x$ has no neighbor on $\mathring a P_{ab}P_{bc}$, then  $efQ_{fx}acP_{cb}P_{be} e$ is a $3$-cycle with chords $ab, ae$ and $ff_1$. 
So $x$ has a neighbor, $x_1$ say in $\mathring a P_{ab}P_{bc}$ and, by (\ref{importantDeg2}), $N_C(x)=\{a,x_1\}$.
If $x_1 \in \mathring P_{ab}P_{bf_1}$, 
then $axQ_{xf}f_1P_{f_1b}P_{ba}a$ is a 3-cycle with chords $ab, ef$ and $xx_1$. 
So $x_1 \in f_1P_{f_1c}c$ and then $axQ_{xf}ff_1P_{f_1c}P_{ca}a$ is a 3-cycle with chords $ac$, $ff_2$ and $xx_1$ a contradiction.
\end{proofclaim}


\begin{claim}\label{fatherdeg1}
Let $x$ be a father of $a$. Then $d_C(x)=2$.
\end{claim}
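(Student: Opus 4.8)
By Claim~\ref{importantDeg2}, applied to the important vertex $a$, we already have $d_C(x)\le 2$, so it suffices to rule out $d_C(x)=1$. Assume for contradiction that $N_C(x)=\{a\}$; the goal is to exhibit a cycle with exactly three chords, contradicting that $G$ is $3$-cycle-free. Write $e$ and $e'$ for the neighbours of $a$ on $P_{ab}$ and on $P_{ca}$ (these are the vertices $y_1$, $y_2$ of the claim). Since $P_{ab}$ and $P_{ca}$ have length at least $3$ (length $1$ would double an edge and length $2$ would create a triangle with the corresponding chord), we have $e\ne b$ and $e'\ne c$. Also $k\ge 2$: a $V$-cycle has at least $8$ vertices, so $C$ cannot lie in $S_0(z)$, nor in $S_1(z)$, since then $z$ would be a father of $a$ adjacent to all of $C$, contradicting \ref{importantDeg2}. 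Hence every vertex of $C$ has a father. Fix a father $f$ of $e$; then $f\ne x$, $f$ is not adjacent to $a$ (triangle-freeness, as $ae\in E(G)$), and $d_C(f)\le 2$ by Claim~\ref{brother}.

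The engine of the proof is to build a cycle running through $x$, through $a$, through $f$, and through a large arc of $C$, closed up by a unimodal path back to $x$, and then to check that it carries exactly three chords. Concretely: (i) if $d_C(f)=1$, the cycle $x\,a\,P_{ac}\,c\,P_{cb}\,b\,P_{be}\,e\,f\,Q_{fx}\,x$ (running from $a$ around all of $C$ to $e$, i.e.\ using every edge of $C$ but $ae$, so that the two chords of $C$ together with $ae$ become its chords) has exactly the three chords $ab$, $ac$, $ae$; (ii) if $d_C(f)=2$, say $N_C(f)=\{e,g\}$, and $g\notin\mathring P_{ca}$, the cycle $a\,c\,P_{cb}\,b\,P_{be}\,e\,f\,Q_{fx}\,x\,a$ has exactly the three chords $ab$, $ae$, $fg$; (iii) if $d_C(f)=2$, $N_C(f)=\{e,g\}$ and $g\in\mathring P_{ca}\setminus\{e'\}$, the cycle $f\,g\,P_{gc}\,c\,P_{cb}\,b\,P_{be}\,e\,a\,x\,Q_{xf}\,f$ has exactly the three chords $ab$, $ac$, $fe$. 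In every case the unimodal path contributes no chord, being chordless and having its interior in levels below $k$, hence with no neighbour in $S_k\supseteq V(C)$; checking that the listed edges are chords and that there are none besides them is the routine part. Any of these three cycles is a $3$-cycle in $G$, a contradiction.

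Therefore every father of $e$ has exactly two neighbours on $C$, namely $e$ and $e'$; by the $b\leftrightarrow c$ symmetry, which exchanges $P_{ab}$ with $P_{ca}$ and $e$ with $e'$, the same holds for every father of $e'$. Thus $e$ and $e'$ have exactly the same fathers, and each such father $f$ satisfies $N_C(f)=\{e,e'\}$. This very rigid configuration is excluded by one further construction of the same flavour: taking such an $f$, a father $\gamma$ of $c$ (so $d_C(\gamma)\le 2$ by \ref{importantDeg2}) and the neighbour $e_1\ne b$ of $e$ on $P_{ab}$ distinct from $a$, the cycle $f\,e\,e_1\,P_{e_1b}\,b\,P_{bc}\,c\,\gamma\,Q_{\gamma x}\,x\,a\,e'\,f$ has only the chords $ab$, $ac$, $ae$ — unless $f$ is adjacent to $x$, to $\gamma$, or to the (bounded) interior of $Q_{\gamma x}$, or $\gamma$ has a further neighbour on $C$ inside $\{b\}\cup\mathring P_{ab}\cup\mathring P_{bc}$. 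Each such exception, together with the symmetric ones obtained by swapping $b$ and $c$, forces still more structure, and after a bounded chain of such reductions a cycle with exactly three chords becomes unavoidable. Carrying out this last step cleanly — juggling all the simultaneous attachment constraints so that the final cycle has exactly three chords rather than four — is the technical heart of the argument; everything else is the routine chord-counting of the displayed cycles, and I expect that to be the main obstacle.
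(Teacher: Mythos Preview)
Your cases (i)–(iii) are correctly argued, but the proof has a genuine gap in the residual case $N_C(f)=\{e,e'\}$: you flag the construction through a father $\gamma$ of $c$ as incomplete, and it really is. The exceptions you list (adjacencies of $f$ to $x$, to $\gamma$, to the interior of $Q_{\gamma x}$, and a second $C$-neighbour of $\gamma$) do not collapse after a ``bounded chain of reductions''; each one branches further, and you give no mechanism forcing termination. The sentence ``I expect that to be the main obstacle'' is an accurate self-assessment: this \emph{is} the obstacle, and it is not overcome.

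The missing idea is to use a father $y$ of $b$ rather than of $c$. The paper does not even need your case (iii): after (i) and (ii) establish that every father $f$ of $e$ has $d_C(f)=2$ with second neighbour $f_1\in\mathring P_{ca}$, one looks at the cycle $b\,y\,Q_{yf}\,f\,e\,a\,P_{ac}\,P_{cb}\,b$, whose chords are $ab$, $ac$, $ff_1$. If this is not already a $3$-cycle, $y$ must have a second $C$-neighbour $y_1$ lying on this cycle; then the cycle $a\,x\,Q_{xy}\,y\,b\,P_{bc}\,P_{ca}\,a$ (chords $ab$, $ac$, $yy_1$) forces $y_1=e$. But now $y$ is itself a father of $e$ whose second $C$-neighbour is $b\notin\mathring P_{ca}$, contradicting what was just proved about fathers of $e$. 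This self-referential move --- pushing $y$ into the class of fathers of $e$ and feeding it back into the structural constraint already obtained --- is precisely what your approach via a father of $c$ lacks, and it is why the paper's argument closes in two lines where yours stalls.
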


\begin{proofclaim}
By (\ref{importantDeg2}), $d_C(x) \le 2$. 
So we may assume by way of contradiction that $d_C(x)=1$.
Let $e$ be the neighbor of $a$ on $P_{ab}$ and let $f$ be a father of $e$. 
Finally let $y$ be a father of $b$.

If $d_C(f)=1$, then $axP_{xf}eP_{eb}P_{bc}P_{ca}a$ is a 3-cycle with chords $ae$, $ab$, $ac$. 
So $d_C(f) \ge 2$ and thus, by (\ref{brother}),  $d_C(f)=2$. 
If the second neighbor $f_1$ of $f$ is on $P_{ab}P_{bc}$, then $axQ_{xf}eP_{eb}P_{bc}ca$ is a 3-cycle with chords $ae, ab$ and $ff_1$. 
So $f_1 \in \mathring{P}_{ca}$. 

Note that $eb$ is not an edge since otherwise $aeb$ is a triangle. 
If $y$ has a neighbor in $\mathring b P_{bc}P_{ca}e$, then $byQ_{yf}feaP_{ac}P_{cb}b$ is a 3-cycle with chords $ab$, $ac$ and $ff_1$. 
So $y$ has a neighbor, say $y_1$, in $\mathring b P_{bc}P_{ca}e$ and by (\ref{importantDeg2}), $N_C(y)=\{b,y_1\}$.
If $y_1 \neq e$, then $axQ_{xy}bP_{bc}P_{ca}$ is a 3-cycle with chords $ab, ac$ and $yy_1$. 
Hence $y_1=e$ which contradicts (\ref{brother}).
\end{proofclaim}

\medskip

We now have proved enough claims to finish the proof. 
Let $x$ be a father of $a$. 
By (\ref{fatherdeg1}), $d_C(x)=2$. 
Let $x_1$ be the neighbor of $x$ distinct from $a$ on $C$.
 Let $e$ be the neighbor of $a$ on $P_{ab}$ and let $f$ be a father of $e$. 
Finally let $y$ be a father of $b$. 
By symmetry we may assume that $x_1 \in P_{bc}P_{ca} \mathring a$. 

If $d_C(y)=1$, then  $axQ_{xy}bP_{bc}P_{ca}$ would be 3-cycle with chords $ab, ac$ and $xx_1$.
So $d_C(y) \ge 2$ and by (\ref{importantDeg2}), $d_C(y)=2$. 
Let $y_1$ be the neighbor of $y$ distinct from $b$ on $C$.

Assume that both $x_1, y_1$ are on $P_{ca}$. 
If $a, y_1,x_1$ appears in this order along $P_{ac}$ then $x_1 P_{x_1a}P_{ab}yQ_{yx}x_1$ is a 3-cycle with chords $ax$, $ab$ and $yy_1$ ($x_1 \neq c$ since otherwise $acx$ is a triangle). 
So $a,x_1,y_1$ appear in this order along  $P_{ac}$ and $x_1 \neq y_1$. 
In particular $y_1a$ is not an edge and so $axQ_{xy}yy_1P_{y_1c}P_{cb}P_{ba}a$ is a 3-cycle with chords $ab$, $ac$ and $yb$. 
Moreover, if both $x_1, y_1$ are on $P_{bc}$ then $axQ_{xy}bP_{bc}a$ is a 3-cycle with chords $ab, \, xx_1$ and $yy_1$.
So, either $x_1 \in \mathring{P_{bc}}$ and $y_1 \in \mathring{P_{ca}}$ or $x_1 \in \mathring{P_{ca}}$ and  $y_1 \in \mathring{P_{bc}}$.

If $x_1 \in \mathring{P_{bc}}$ and $y_1 \in \mathring{P_{ca}}$ and then $x_1xQ_{xy}y_1P_{y_1a}P_{ab}P_{bx_1}$ is a 3-cycle with chords $ax, \, by$ and $ab$. 
Thus $x_1 \in \mathring{P_{ca}}$ and  $y_1 \in \mathring{P_{bc}}$ and then $x_1xQ_{xy}y_1 P_{y_1b}  P_{ba} P_{bx_1}$  is a 3-cycle with chords $ax, \, by$ and $ab$, a contradiction that put an end to the proof. 
\end{proof}


\subsubsection{Proof of Lemma~\ref{crossingcycle}}

\textit{Recall that Lemma~\ref{crossingcycle} states that, if  $G$ is a (triangle, 3-cycle, V-cycle)-free graph and $z$ is a vertex of $G$, then for every integer $k$, $S_k(z,G)$ is X-cycle-free.}

\begin{proof}
Let $G$ be a (triangle, 3-cycle, V-cycle)-free graph, $z$ a vertex of $G$ and suppose for contradiction that there exists an integer $k$ such that $S_k(z,G)$ contain an X-cycle $C$ as an induced subgraph.

\begin{figure} [h] \label{figcrossing}
\centering
\begin{tikzpicture} 
\draw (0 , 0) node  [below] {$a$};
\draw (0 , 2) node [above] {$b$};
\draw (2 , 2) node [above] {$c$};
\draw (2 , 0) node [above] {$d$};

\draw (0 , 0) node  {$\bullet$};
\draw (0 , 2) node {$\bullet$};
\draw (2 , 2) node {$\bullet$};
\draw (2 , 0) node {$\bullet$};;

\draw [dashed] (0 , 0) to[bend left=50] (0 , 2);
\draw [dashed] (0 , 2) to[bend left=50] (2 , 2);
\draw [dashed] (2 , 2) to[bend left=50] (2 , 0);
\draw [dashed] (2 , 0) to[bend left=50] (0 , 0);

\draw (0 , 0) -- (2 , 2);
\draw (2 , 0) -- (0 , 2);

\draw (-0.4 , 1) node[left]  {$P_{ab}$};
\draw (1 , 2.4) node[above]  {$P_{bc}$};
\draw (2.4 , 1) node[right]  {$P_{cd}$};
\draw (1 , -0.4) node[below]  {$P_{da}$};

\draw (1,1) ellipse (4 and 2.3);

\draw (-2.7 , 1.9) node[left]  {$S_k(z,G)$};
\end{tikzpicture}
\caption{The X-cycle $C$ in $S_k(z,G)$.}
\end{figure}
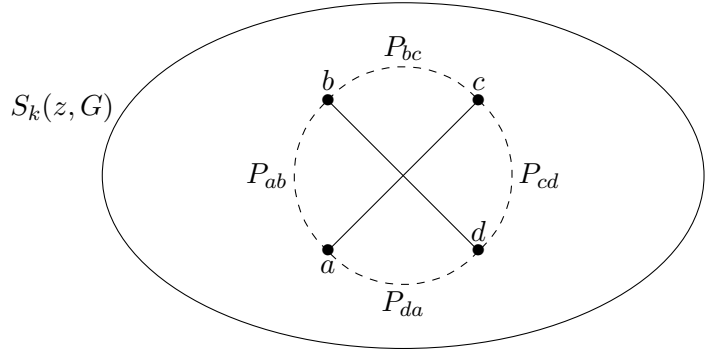

Let $ac$ and $bd$ be the two chords of $C$ and assume that  $a, \, b,\, c, \, d$, appear in this order along $C$.
 Vertices $a, \, b,\, c$, and $d$ are called \emph{important vertices} of $C$. 
Two important vertices that do not form a chord of $C$ are said to be \emph{consecutive}.
An \emph{interval} is an induced path on $C \setminus \{ ac,bd\}$ between two consecutive important vertices. 
They are denoted by $P_{ab}, P_{bc}, P_{cd}$ and $P_{da}$  (see Fig.~\ref{figcrossing}). 
Note that two intervals share at most one vertex.
By abuse of notation, $P_{ab}$ will sometimes denotes $V(P_{ab})$. 
Also, $P_{ab}$ can ve referred to as $P_{ba}$  (and analogously for $P_{cb}$, $P_{dc}$ and $P_{da}$).
Given two vertices $x$ and $y$ in the same interval, the \emph{external path from $x$ to $y$} consists in the path from $x$ to $y$ in $C \sm \{ac, bd\}$ passing through $a, \, b, \, c$ and $d$.

\medskip

The proof is divided in two parts. First we prove that no neighbor of the graph has degree larger than $3$ on $C$. We then study more specifically fathers of important vertices and prove that they are neither of degree $3$, nor $2$ nor $1$.

\begin{claim}\label{fact2c}
If a vertex $u\notin V(C)$ satisfies $d_C(u) \geq 3$ and all but at most one neighbors of $u$ are contained in an interval of $C$, then $G$ contains a $3$-cycle.
\end{claim}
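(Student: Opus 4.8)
The plan is to mimic the proof of Claim~\ref{fact2}: from $u$ together with a suitable external path of $C$ I will build a cycle carrying \emph{exactly} three chords, namely the two chords $ac$ and $bd$ of the $X$-cycle together with one extra chord incident to $u$; as $G$ is $3$-cycle-free, this is a contradiction.

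By hypothesis there is an interval $I$ of $C$ containing all but at most one of the neighbours of $u$, and since $d_C(u)\ge 3$ at least two neighbours of $u$ lie in $I$. Write $v_1,\dots,v_m$ ($m\ge 2$) for the neighbours of $u$ in $I$, listed in the order in which they occur along $I$. I want to choose two neighbours $u_1,u_2$ of $u$, both in $I$, such that the external path $P$ from $u_1$ to $u_2$ (the one passing through $a,b,c,d$) contains in its interior exactly one further neighbour $u_3$ of $u$. If $u$ has a neighbour $w\notin I$, take $u_1:=v_1$ and $u_2:=v_m$: then $w$ lies on $P$, whereas every other neighbour of $u$ is $v_1$, $v_m$, or a $v_i$ lying strictly between $v_1$ and $v_m$ along $I$, hence not on $P$; so $u_3:=w$ works. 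Otherwise all neighbours of $u$ lie in $I$, so $m=d_C(u)\ge 3$; take $u_1:=v_1$ and $u_2:=v_{m-1}$, and observe that $v_m$ then lies in the interior of $P$ and is the only vertex of $N(u)$ on $P$ besides $u_1$ and $u_2$; so $u_3:=v_m$ works. In both cases $u_1\neq u_2$, and since $u_1,u_2\in N(u)$ and $G$ is triangle-free, $u_1u_2$ is not an edge.

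Next I would verify that $C':=u_2uu_1Pu_2$ is a cycle with exactly the three chords $ac$, $bd$, $uu_3$. Indeed $P$ passes through all four important vertices, the only edges of $G$ inside $V(C)$ are the edges of the cycle $C$ together with $ac$ and $bd$, and $P$ is obtained from $C\sm\{ac,bd\}$ by removing the interior of the subpath of $I$ between $u_1$ and $u_2$; hence $P$ is chordless apart from $ac$ and $bd$, so $ac$ and $bd$ are chords of $C'$. The only neighbours of $u$ on $C'$ are $u_1$ and $u_2$, which give edges of $C'$, and $u_3$, which gives the chord $uu_3$; moreover $u_3$ is adjacent to neither $u_1$ nor $u_2$, since otherwise $u$ would lie in a triangle. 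So $C'$ has exactly three chords, i.e.\ it is a $3$-cycle, which is the desired contradiction.

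I expect the only delicate point to be the second paragraph: checking that choosing $u_1,u_2$ as the extreme (resp.\ almost extreme) neighbours of $u$ in $I$ really leaves exactly one neighbour of $u$ on the external path and introduces no hidden chord. This reduces to a short case analysis governed by the linear order of the $v_i$ along $I$ and by triangle-freeness, so once the cases are set up correctly the argument is routine, exactly as for Claim~\ref{fact2}.
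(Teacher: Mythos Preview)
Your argument is correct and follows essentially the same route as the paper: choose $u_1,u_2$ in the common interval so that exactly one further neighbour $u_3$ of $u$ lies on the external path $P$, and exhibit $u_2uu_1Pu_2$ as a cycle with the three chords $ac$, $bd$, $uu_3$. Your version is simply more explicit about how to pick $u_1,u_2,u_3$ (and incidentally corrects a copy--paste slip in the paper, which lists the chords as $ab,ac$---the V-cycle chords of Claim~\ref{fact2}---instead of $ac,bd$).
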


\begin{proofclaim}
Let $u$ be a vertex not in $V(C)$ such that: $d_C(u) \geq 3$ and all but at most one neighbors of $u$ are contained in an interval of $C$.
So, there exists two vertices   $u_1$ and  $u_2$ in $N_C(u)$ such that $u_1$ and $u_2$ are in the same interval of $C$ and $u$ has exactly one neighbor $u_3$ on the external path $P$ from $u_1$ to $u_2$. 
Then $u_2uu_1Pu_2$ is a 3-cycle with chords $ab,\, ac$ and $uu_3$ (note that, since $G$ is triangle-free, $u_1u_2$ is not an edge).
\end{proofclaim}

\begin{claim}\label{father3c}
Every vertex of $u \notin V(C)$, satisfies $d_C(u) \le 3$. Moreover, if $d_C(u)=3$, then no interval contains at least two neighbors of $u$.
\end{claim}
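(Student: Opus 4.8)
The plan is to mirror the structure already used in Claim~\ref{father} for the V-cycle case, but now exploiting the stronger hypothesis that $G$ is also V-cycle-free, which should make the argument cleaner and give the uniform bound $d_C(u)\le 3$. First I would fix $u\notin V(C)$ and suppose $d_C(u)\ge 4$, aiming for a contradiction. The key preliminary observations are the analogues of what was noted before the claims in Lemma~\ref{vcycle}: since $G$ is triangle-free, $u$ cannot be adjacent to two consecutive vertices of $C$, and since $G$ has no $3$-cycle, $u$ has at most $4$ neighbors on any chordless subpath of $C$ (five neighbors on a chordless path already force a $3$-cycle), and moreover by~(\ref{fact2c}) at most three neighbors can lie (up to one exception) inside a single interval.

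Next I would do a case analysis on how the $\ge 4$ neighbors of $u$ distribute among the four intervals $P_{ab},P_{bc},P_{cd},P_{da}$, using the chords $ac$ and $bd$ to build short cycles. The main tool is: if $u$ has neighbors $u_1,u_2$ in one interval and exactly one neighbor $u_3$ on the external path $P$ joining them, then $u_2uu_1Pu_2$ is a $3$-cycle with chords among $\{ac,bd,uu_3\}$ (this is precisely~(\ref{fact2c})). So the interesting configurations are those where the neighbors of $u$ spread over two or more intervals with at most one neighbor per external path. In those cases I would pick two neighbors $u_1,u_2$ lying in two ``opposite'' intervals (say one in $P_{ab}$ and one in $P_{cd}$, or one in $P_{bc}$ and one in $P_{da}$) and form the $4$-vertex cycle $u_1 u u_2 \,(\text{one chord of }C)\, u_1$, then argue it is a $3$-cycle once the remaining two neighbors of $u$ on $C$ (of which there are at least two since $d_C(u)\ge4$) are taken into account as the third chord, after ruling out that they create extra chords — here the V-cycle-freeness is used to forbid the degenerate $3$-neighbors-on-a-path pattern that caused the extra cases in Lemma~\ref{vcycle}. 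Throughout, the triangle-free hypothesis keeps $u$ from being adjacent to two consecutive $C$-vertices, so endpoints of intervals can be handled uniformly.

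For the ``moreover'' part, I would suppose $d_C(u)=3$ with two of the three neighbors, $u_1,u_2$, inside the same interval $I$. By triangle-freeness $u_1u_2$ is not an edge, and the third neighbor $u_3$ lies on the external path $P$ from $u_1$ to $u_2$; then~(\ref{fact2c}) applies verbatim and gives a $3$-cycle with chords $\{ac,bd\}$ restricted appropriately together with $uu_3$ — a contradiction. (One must check the external path genuinely contains $u_3$ and that the resulting cycle has exactly three chords, which follows because $C$ is chordless apart from $ac,bd$ and $u$ has no further neighbors on $C$.)

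The hard part will be the bookkeeping in the degree-$4$ case: there are several sub-configurations depending on which intervals the four neighbors fall into and whether any of them coincide with important vertices $a,b,c,d$, and each one requires exhibiting a specific cycle and verifying its chord-count is exactly three. I expect the cleanest route is to first reduce, via~(\ref{fact2c}) and the ``no five neighbors on a chordless path'' remark, to the case where $u$ has exactly one neighbor on each of at least three intervals plus possibly an important vertex; then a single cycle through two opposite intervals together with one chord of $C$ does the job. As in the preceding lemmas, the V-cycle-free and triangle-free hypotheses are what collapse the case analysis down to something manageable, so the proof should not be much longer than that of~(\ref{father}).
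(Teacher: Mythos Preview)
Your handling of the ``moreover'' part is fine and matches the paper: it is a direct application of~(\ref{fact2c}).

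For the bound $d_C(u)\le 3$, however, you are setting yourself up for a case analysis that the paper avoids entirely with a two-line argument exploiting the V-cycle-free hypothesis directly. The observation you are missing is this: if a vertex $v$ has four neighbours $v_1,v_2,v_3,v_4$ (listed consecutively along $P$) on a path $P$ that carries \emph{at most one} chord, then the cycle $vv_1Pv_4v$ already has the two chords $vv_2$ and $vv_3$; hence it is a V-cycle if $v_1Pv_4$ is chordless and a $3$-cycle otherwise --- either way a contradiction. Now since $ac$ is an edge and $G$ is triangle-free, $u$ misses at least one of $a,c$; deleting that vertex from $C$ leaves a path whose only possible chord is $bd$, and $u$ still has all four of its neighbours on it. That is the whole proof.

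Your proposed route --- distributing the four neighbours among the four intervals and building ad hoc cycles --- could perhaps be pushed through, but the reduction you sketch (``reduce to one neighbour per interval plus possibly an important vertex'') does not follow from~(\ref{fact2c}) alone: that claim only fires when \emph{all but one} of the neighbours lie in a single interval, so configurations such as two neighbours in $P_{ab}$ and two in $P_{cd}$ are not immediately excluded. Also, your ``$4$-vertex cycle $u_1uu_2\,(\text{one chord of }C)\,u_1$'' with $u_1\in P_{ab}$ and $u_2\in P_{cd}$ does not exist as stated --- neither $ac$ nor $bd$ joins a vertex of $P_{ab}$ to a vertex of $P_{cd}$. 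The paper's argument sidesteps all of this bookkeeping by never looking at the interval structure at all; the V-cycle-free hypothesis does the work in one shot rather than merely ``collapsing the case analysis''.
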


\begin{proofclaim}
Let us first prove a fact. If a vertex $v$ has (at least) 4 neighbors on a path $P$ that has at most one chord, then $G$ contains a V-cycle or a 3-cycle. 
Indeed let $v_1,v_2,v_3,v_4$ be consecutive neighbors of $v$ on $P$. Then $C'=vv_1Pv_4v$ has chords $vv_2$ and $vv_3$. If $v_1Pv_4$  is chordless then $C'$ is a V-cycle otherwise $C'$ is a 3-cycle.

Assume that a vertex $u \notin C$ satisfies $d_C(u) \geq 4$. Since $G$ is triangle free, $u$ is not adjacent to both $a$ and $c$. By symmetry we can assume that $u$ is not adjacent to $a$, then $u$ has 4 neighbors on the path $\mathring{a}P_{ab}P_{bc}P_{cd}P_{da}\mathring{a}$ with at most one chord, contradicting the fact proved above. 

If $d_C(u)=3$, then (\ref{fact2c}) ensures that the neighbors of $u$ are in distinct intervals.
\end{proofclaim}

\medskip

So any vertex $x \notin V(C)$ satisfying $d_C(x)=3$ is adjacent to at most one important vertex. Indeed two important vertices either are opposite (which would create a triangle), or are in a same interval (which would contradict~(\ref{father3c})). 

\medskip

\begin{claim}\label{not2adj1}
Two adjacent vertices of $C$ do not both have a father of degree one on $C$.
\end{claim}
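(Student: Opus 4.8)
The plan is to argue by contradiction in the now-familiar style of this section: suppose two adjacent vertices $p$ and $q$ of the X-cycle $C$ both have a father of degree $1$ on $C$; call these fathers $x$ (a father of $p$) and $y$ (a father of $q$), with $d_C(x)=d_C(y)=1$. First I would dispose of the case $x=y$ immediately: a common father of two adjacent vertices of $C$ would create a triangle $xpq$, contradicting triangle-freeness, so $x\neq y$. Now I would use the unimodal path $Q_{xy}$ between $x$ and $y$ in $S_{k-1}(z)\cup\cdots$, whose interior is anticlavsectionless of $C$ and is anchored only at $x$ and $y$. The cycle obtained by closing up $x\,Q_{xy}\,y\,q\,(\text{one of the two arcs of }C\text{ from }q\text{ back to }p)\,p\,x$ should then be a cycle whose chords we can count exactly; the edge $pq$ of $C$ together with the chords of $C$ that have both endpoints on the chosen arc will be the chords of this new cycle. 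The main point is that, since $x$ and $y$ have degree exactly $1$ on $C$ and $Q_{xy}$ is internally clean, the only possible chords of the constructed cycle come from $C$ itself (the edge $pq$, plus whatever chords of $C$ lie inside the chosen arc), so by picking the arc carefully we can force the chord count to be exactly $3$, or $2$ (a V-cycle), contradicting the hypothesis that $S_k(z)$ is $(\text{3-cycle},\text{V-cycle})$-free.

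Concretely, $C$ has exactly two chords $ac$ and $bd$, and the two arcs of $C$ between $p$ and $q$ partition the chord-endpoints. I would split into cases according to where the adjacent pair $\{p,q\}$ sits relative to the important vertices $a,b,c,d$: either $\{p,q\}$ lies strictly inside a single interval (then $pq$ is a chord of $C$ — impossible since $C$ is chordless — so actually $p,q$ consecutive on $C$ forces them to be genuine consecutive vertices of the cycle), or one of $p,q$ is an important vertex, or both are. In each case the two $p$-to-$q$ arcs of $C$ carry a controlled number of chord-endpoints of $\{a,c\}$ and $\{b,d\}$; closing $x\,Q_{xy}\,y\,q\,(\text{arc})\,p\,x$ through the arc containing all four important vertices gives a cycle with chords $ac$, $bd$, and $pq$ — exactly three — hence a 3-cycle, and closing through the shorter arc (if it contains exactly two of the chord-endpoints forming one chord) gives either a 3-cycle or a V-cycle after also accounting for $pq$. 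I would also need the small observation, already used repeatedly above, that $Q_{xy}$ has length at least $1$ and that $p,q$ are not adjacent to any interior vertex of $Q_{xy}$, which is exactly the defining property of unimodal paths.

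The part I expect to require the most care is the bookkeeping of which chords survive in each constructed cycle, together with the degenerate situations where $p$ or $q$ is important and so coincides with an endpoint of a chord of $C$ — in that case one of $ac,bd$ may become an edge of the new cycle rather than a chord, changing the count, and one must check that $pq$ and the remaining chord then still total two or three. A secondary subtlety is ensuring the chosen arc of $C$ is itself chordless as a subpath (it is, since $C$ is a chordless cycle except for $ac$ and $bd$, so any arc missing at least one endpoint of each chord is an induced path), so that no unexpected chord appears. Once the case analysis is set up, each individual case is a one-line contradiction of the form ``\dots is a 3-cycle (resp.\ V-cycle) with chords \dots, a contradiction,'' entirely parallel to the arguments in Claims~\ref{casmechant}--\ref{fatherdeg1}.
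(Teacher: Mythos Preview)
Your core idea is correct and is exactly the paper's argument: since $p$ and $q$ are consecutive on $C$, the long arc of $C$ from $q$ back to $p$ (what the paper calls the \emph{external path}) contains every vertex of $C$, so the cycle $p\,x\,Q_{xy}\,y\,q\,(\text{long arc})\,p$ has precisely the three chords $pq$, $ac$, $bd$, and is therefore a $3$-cycle. The paper's entire proof is this one sentence.

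You are, however, overcomplicating it considerably. No case analysis on the position of $\{p,q\}$ relative to the important vertices is needed, and the worry in your last paragraph is unfounded. The long arc always works: even if $p$ or $q$ coincides with an important vertex, the edges $ac$ and $bd$ are chords of $C$ (not edges of $C$), so their endpoints are never consecutive along the long arc, and they cannot become edges of the new cycle. Likewise $pq$ is an edge of $C$, hence distinct from both $ac$ and $bd$, so the three chords are genuinely distinct. There is no need to consider the short arc or to aim for a V-cycle at all.
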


\begin{proofclaim}
Let $u$ and $v$ be two adjacent vertices of $C$. 
Suppose for contradiction that $u$ ( resp.\ $v$) admits a father $u'$ (resp.\ $v'$) such that $d_C(u')=1$ (resp.\ $d_C(v')=1$).
We denote by $P$ the external path from $v$ to $u$. 
Then $uu'Q_{u'v'}vuP$ is a 3-cycle with chords, $uv, \, ac$ and $bd$.
\end{proofclaim}

\begin{claim}\label{no2father3}
Let $x$ be a father of an important vertex. Then $d_C(x) \leq 2$.
\end{claim}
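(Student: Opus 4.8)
The plan is to argue by contradiction: suppose $x$ is a father of an important vertex, say of $a$, with $d_C(x)=3$. By (\ref{father3c}) the three neighbors of $x$ on $C$ lie in three distinct intervals, and one of them is $a$; since $x$ is adjacent to an important vertex, it cannot be adjacent to a second important one (opposite ones would make a triangle with $x$, consecutive ones would put two neighbors of $x$ in a common interval). So $N_C(x) = \{a, x_1, x_2\}$ with $x_1 \in \mathring P_{bc}$ and $x_2$ in exactly one of $\mathring P_{cd}$, $\mathring P_{da}$ (the two intervals meeting $a$ are $P_{ab}$ and $P_{da}$, so by the placement of the three intervals the non-$a$ neighbors sit in $P_{bc}$ together with one of $P_{cd}$ or $P_{da}$; up to the symmetry of the X-cycle we may fix which).

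The core of the proof will then be to produce a forbidden structure. The natural objects to play against $x$ are the neighbors of $a$ inside $C$ and their fathers, exactly as in (\ref{brother}) and (\ref{fatherdeg1}): let $e$ be the neighbor of $a$ on $P_{ab}$ (or on $P_{da}$), pick a father $f$ of $e$, and form the unimodal path $Q_{xf}$ whose interior misses $C$. By closing up $Q_{xf}$ through appropriate intervals of $C$ one gets a cycle through $x$, $f$, $a$, and a stretch of $C$; the chords available are $xx_1$, $xx_2$, $ac$, $bd$, and the short chord $ae$ — and I expect that in every case the count of chords lands on exactly three (or on a V-cycle after a small case split on whether $f$ has extra neighbors on $C$), contradicting the hypothesis. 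When $f$ itself has degree $\ge 2$ on $C$, its second neighbor $f_1$ lies in a single interval by (\ref{father3c})/(\ref{fact2c}), and one reroutes the closing path to absorb or exploit $f_1$, again reaching a $3$-cycle or a V-cycle.

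I would organize the case analysis by the location of $x_2$ (in $\mathring P_{cd}$ versus $\mathring P_{da}$) and, within each, by $d_C(f) = 1$ versus $d_C(f) = 2$ and by the location of $f$'s second neighbor; this mirrors the bookkeeping already done in Lemma~\ref{vcycle}. One should also first record the easy observation that, because $x$ is a father of $a$ which lies in $S_k(z,G)$, $x$ lies in $S_{k-1}(z,G)$, so unimodal paths $Q_{xf}$, $Q_{xy}$ between $x$ and other fathers exist with interiors disjoint from $C$; and that $e \ne x_1$ since $a,e$ adjacent would otherwise force a triangle-type configuration handled by triangle-freeness.

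The main obstacle I anticipate is not any single step but the sheer number of subcases: unlike the earlier claims, here $x$ already uses all three of its allowed $C$-neighbors, so there is little slack, and several closing cycles will come out chordless (hence harmless) unless one picks the father $f$ and the interval through which to close very carefully. Getting a uniform choice — probably always closing through the interval \emph{not} containing $x_1$ or $x_2$, so that $ac$ or $bd$ is forced to be a chord while $xx_1,xx_2$ contribute at most one more — is the delicate point, and I would expect one genuinely stubborn configuration (e.g.\ when $f$ is adjacent to an endpoint of a chord) to require an auxiliary vertex, namely a father of $e$'s neighbor further along, exactly as (\ref{fatherdeg1}) bootstraps off (\ref{brother}).
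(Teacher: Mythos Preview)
Your overall plan --- pick a father $f$ of the neighbour $e$ of $a$ on one of the two intervals through $a$, run the unimodal path $Q_{xf}$, and close it through $C$ --- is precisely what the paper does in its main step. Two points, however, need correcting.

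First, a small but real error in the setup: $x_2$ cannot lie in $\mathring P_{da}$. Since $a$ itself belongs to $P_{da}$ (as well as to $P_{ab}$), Claim~(\ref{father3c}) forbids a second neighbour of $x$ in either of those intervals; the two remaining neighbours of $x$ are therefore forced into the two intervals not containing $a$, so necessarily $x_1\in\mathring P_{bc}$ and $x_2\in\mathring P_{cd}$. There is no ``up to symmetry'' choice here.

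Second, your anticipated ``stubborn configuration'' and its fix are not what actually happens. The paper's organisation is: (i) first dispose of the case where $ab$ (symmetrically $ad$) is an edge by bringing in a father $y$ of the \emph{opposite} important vertex $c$ --- not of a vertex further along $P_{ab}$ as you guess --- and closing $axQ_{xy}yc$ through $P_{cb}P_{ba}$ or through $P_{cd}P_{da}$ according to where $y$'s second neighbour lies, picking up exactly the three chords $ac$, $xx_i$, $yy_1$; then (ii) check separately that $bx_1$ and $dx_2$ are not edges; and only then (iii) run the $e,f$ argument. The decisive mechanism in (iii) is exactly the V-cycle trap you allude to: the cycle $efQ_{fx}xx_2P_{x_2c}caP_{ad}dbP_{be}e$ carries just the two chords $xa$ and $ae$, both incident to $a$, so V-cycle-freeness forces $f$ to contribute two further chords; by (\ref{father3c}) these land in $P_{cd}$ and in $\mathring P_{da}$, and then one explicit 3-cycle through $axQ_{xf}f$ finishes. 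So your instincts about the endgame are right, but the case structure and the choice of auxiliary vertex are different from what you sketch.
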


\begin{proofclaim}
Assume by contradiction that a father $x$ of $a$ satisfies $d_C(x)=3$. 
By~(\ref{father3c}), $x$ has exactly one neighbor, $x_1$ say, on $\mathring P_{bc}$, and exactly one neighbor, $x_2$ say, on $\mathring P_{cd}$. Indeed $a$ is in both $P_{ab}$ and $P_{da}$ and~(\ref{father3c}) ensures that there is no two neighbors of $x$ in the same interval.
Note that it implies that neither $bc$ nor $cd$ are edges. 

First assume that $ab$ is an edge. 
Let $y$ be a father of $c$. 
If $d_C(y)=1$, then $axQ_{xy}cP_{cb}dP_{da}$ is a 3-cycle with chords $ab, ac$ and $xx_1$.
So $d_C(y)\ge2$.
If $d_C(y) \ge 3$, then by~(\ref{father3c}), $d_C(y)=3$ and $y$ must have a vertex in $\mathring P_{ab}$ which is impossible since we assumed that $ab$ is an edge.
So $d_C(y)=2$.
If $y_1$ is on $P_{ab}P_{bc}$, then $axQ_{xy}cP_{cb}P_{ba}a$ is a 3-cycle with chords $ac, \,  xx_1$ and $yy_1$,
and if $y_1$ is on $P_{ad}P_{dc}$, then $axQ_{xy}cP_{cd}P_{da}a$ is a 3-cycle with chords $ac, \, xx_2$ and $yy_1$, a contradiction.
So in the following we assume that $ab$, and by symmetry $ad$, are  not edges. 

If $bx_1$ is an edge then $axx_1P_{x_1c}P_{cd}bP_{ba}a$ is a 3-cycle with chords $bx_1, \, xx_2$ and $ac$. 
So $bx_1$, and by symmetry $dx_2$, are not edges.

Let $e$ be the neighbor of $a$ on $P_{ab}$ and $f$ be a father of $e$. 
The cycle $C'=efQ_{fx}x_2P_{x_2c}aP_{ad}bP_{be}e$ has chords $xa$ and $xe$ so it must admit other chords otherwise it is a V-cycle. 
We already showed that $dx_2$ nor $bc$ are edges and that the only neighbors of $x$ in $C$ are $a$, $x_1$ and $x_2$. 
So others chords are due to neighbors of $f$. 
Moreover, $f$ must have at least two neighbors that create chords in $C'$, otherwise $C'$ would be a 3-cycle. 
So, by (\ref{father3c}), $f$ has  one neighbor  on $P_{cd}\mathring{d}$ and one neighbor  on $\mathring{d}P_{da}$.
Let $f_1$ be the neighbor of $f$ in $P_{cd}\mathring{d}$. Then $axQ_{xf}eP_{eb}dP_{dc}a$ is a 3-cycle with chords $ae, xx_2$ and $ff_1$ (note that $ad$ is not an edge since $f$ has a neighbor on $\mathring{P_{ađ}}$).
\end{proofclaim}

\begin{claim}\label{abfather2}
If $ab$ is an edge, fathers of $a$ have degree exactly two. 
\end{claim}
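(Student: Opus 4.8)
The plan is to argue by contradiction, ruling out the only possibility left open by~(\ref{no2father3}): that some father $x$ of $a$ has $d_C(x)=1$. Once this is excluded, every father of $a$ has degree exactly $2$ on $C$ (it has degree at least $1$ trivially, and at most $2$ by~(\ref{no2father3})). The crucial observation is that $a$ and $c$ are adjacent in $G$, since $ac$ is a chord of $C$. Applying~(\ref{not2adj1}) to the adjacent pair $\{a,c\}$, the father $x$ of $a$ prevents $c$ from having any father of degree $1$ on $C$; hence, by~(\ref{no2father3}), every father of $c$ has degree exactly $2$ on $C$.

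Next I would fix such a father $w$ of $c$, write $N_C(w)=\{c,w_1\}$, and note that $w_1\ne a$ (otherwise $w,a,c$ would form a triangle, as $ac\in E(G)$). The core of the proof is to consider the cycle
$$C'\;=\;x\,a\,P_{da}\,d\,b\,P_{bc}\,c\,w\,Q_{wx}\,x,$$
obtained by concatenating the edge $xa$, the interval $P_{da}$ traversed from $a$ to $d$, the chord $bd$, the interval $P_{bc}$ traversed from $b$ to $c$, the edge $cw$, and a unimodal path $Q_{wx}$ from $w$ back to $x$. This is a genuine cycle: the interiors of $P_{da}$ and $P_{bc}$ are disjoint subsets of $V(C)$, the interior of $Q_{wx}$ lies in $S_0(z)\cup\dots\cup S_{k-2}(z)$ and is therefore disjoint from everything else, and $x\ne w$ since otherwise $x$ would be adjacent to both $a$ and $c$ and $\{x,a,c\}$ would be a triangle.

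It then remains to count the chords of $C'$, which is where the only real care is needed. Since $ab\in E(G)$ while $a$ and $b$ are non-consecutive on $C'$, $ab$ is a chord of $C'$; likewise $ac$ is a chord; the chord $bd$ of $C$ has become an edge of $C'$ and so contributes nothing; $x$ has $a$ as its unique neighbour on $C$ and, by the standard property of unimodal paths used throughout this section, the interior of $Q_{wx}$ has no neighbour on $C$, so these create no chord; and since the only vertices of $C$ absent from $C'$ lie in $\mathring P_{cd}$, the sole remaining candidate chord is $ww_1$, which is present exactly when $w_1\notin\mathring P_{cd}$. Therefore $C'$ has either precisely the two chords $ab,ac$, which share the vertex $a$ and make $C'$ a V-cycle, or precisely the three chords $ab,ac,ww_1$, making $C'$ a 3-cycle. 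In both cases this contradicts the assumption that $G$ is (V-cycle, 3-cycle)-free, which finishes the proof. I expect the chord bookkeeping for $C'$ — relying on the facts that the X-cycle $C$ has no chords other than $ac$ and $bd$, both incident to important vertices, and that $d_C(x)=1$ and $d_C(w)=2$ — to be the only slightly technical point, and I do not anticipate any genuine obstacle.
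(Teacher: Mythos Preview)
Your overall strategy---building the cycle $C'$ through $P_{da}$, the chord $bd$, $P_{bc}$, and a unimodal path joining fathers of $a$ and $c$---is exactly the route the paper takes. But your chord bookkeeping for $C'$ has a genuine gap: you overlook the possibility that $cd$ is an edge of $G$. When $P_{cd}$ has length~$1$, both $c$ and $d$ lie on $C'$ yet are not consecutive there (on $C'$, $d$ is adjacent to $b$ and to its neighbour in $P_{da}$, while $c$ is adjacent to $w$ and to its neighbour in $P_{bc}$), so $cd$ is an extra chord of $C'$. In that case, $\mathring P_{cd}=\emptyset$, so $w_1$ is forced onto $C'$ and $ww_1$ is a chord as well; hence $C'$ carries the \emph{four} chords $ab,\,ac,\,cd,\,ww_1$ and is neither a V-cycle nor a $3$-cycle. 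Your contradiction disappears precisely here. The paper's proof meets this head-on: it observes that $C'$ avoids being a V-cycle or a $3$-cycle only if both $cd$ is an edge \emph{and} the father of $c$ has a second neighbour on $C'$, and then spends the rest of the argument (locating that second neighbour, introducing the neighbour $e$ of $a$ on $P_{ad}$ and its father, and finally a father of $b$) disposing of this residual configuration.

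A smaller point: your appeal to~(\ref{not2adj1}) for the pair $\{a,c\}$ is not justified by that claim's proof, which relies on the external path and therefore needs the two vertices to lie in a common interval; $a$ and $c$ do not. The conclusion you draw ($d_C(w)=2$) can in fact be read off from $C'$ itself (if $d_C(w)=1$ then $C'$ has chords $ab,ac$ and possibly $cd$, giving a V-cycle or a $3$-cycle), so this is easily repaired---but it does not rescue the main gap above.
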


\begin{proofclaim}
Assume by contradiction that a father $x$ of $a$ satisfies $d_C(x) \neq 2$.
So,  by~(\ref{no2father3}), $d_C(x)=1$ . 
Let $z$ be a father of $c$.
The cycle $C'=axQ_{xz}cP_{cb}dP_{da}$ has chords $ac, \, ab$ which, with no additional chords, provides a V-cycle. By~(\ref{no2father3}), $d_C(z) \leq 2$, so $C'$ has at most one other chord due to neighbors of $z$. 
Moreover, if $cd$ is an edge, it is a chord of $C'$. 
Since $C'$ cannot be a V-cycle nor a 3-cycle, $cd$ is an edge and $z$ has another neighbor $z_1$ on $C'$. 
Since both $ab$ and $cd$ are edges, $z_1 \in P_{bc}$ or in $z_1 \in P_{da}$.

Assume first that $z_1 \in {P_{bc}}$. 
If $z_1 \neq b$, then $axQ_{xz}z_1P_{z_1c}dP_{da}$ is a V-cycle with chords $zc$ and $ac$. 
So $z_1=b$.
 Let $e$ be the neighbor of $a$ in $P_{ad}$ (note that $e \neq d$ otherwise $abd$ is a triangle) and let $f$ be a father of $e$. 
 Since $d_C(x)=1$, $d_C(f) \geq 2$  by~(\ref{not2adj1}). 
Such a neighbor, called $f_1$, is unique, otherwise, since $ab$ and $cd$ are edges,  at least two neighbors of $f$ would be in the same interval, contradicting~(\ref{father3c}). 
Note that $f_1 \neq a$ otherwise $aef_1$ is a triangle. 
Then $e f Q_{fz} b P_{bc} d P_{de}$ is a 3-cycle with chords $ff_1, \,zc$ and $bd$. 
So $z_1 \notin P_{bc}$.

Thus, $z_1 \in P_{ad}$. 
Note that $z_1 \neq a$ since otherwise $acz$ is a triangle. 
If $az_1$ is not an edge then $axQ_{xz}z_1P_{z_1d}cP_{cb}a$ is a 3-cycle with chords $ac, \,bd$ and $cz$. 
So $az_1$ is an edge. 
Let $y$ be a father of $b$. 
We have $d_C(y) \leq 2$ by (\ref{no2father3}). 
Then $d_C(y) =2$ by~(\ref{not2adj1}) since $d_C(x)=1$ and $ab$ is an edge. 
Moreover, $ay$ is not an edge otherwise $aby$ is a triangle. 
So $y$ has a neighbor $y_1$ in $\mathring b P_{bc} d P_{dz_1}$.
Therefore $byQ_{yz}z_1P_{z_1d}cP_{cb}$ is a 3-cycle with chords $zc, \, bd$ and $yy_1$.
\end{proofclaim}

\begin{claim}\label{oneimpliesall}
If an important vertex has a father of degree one on $C$, then every father of every important vertex has degree one on $C$.
\end{claim}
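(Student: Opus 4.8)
The plan is to prove the contrapositive-style statement by symmetry propagation: assume some important vertex, say $a$, has a father $x$ of degree one on $C$, and deduce that every father of every important vertex has degree one on $C$. By the symmetry of the X-cycle (the automorphism group acts transitively on important vertices), it suffices to show two things: first, that if $a$ has a father of degree one, then $b$ (a consecutive important vertex) also has a father of degree one; and second, that this "degree-one father" property, once established for all important vertices, forces \emph{every} father (not just one) of every important vertex to have degree one. Iterating the first claim around the cycle $a \to b \to c \to d \to a$ will then give the property for all four important vertices, and the second claim upgrades "some father" to "every father".

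First I would handle the propagation step. Suppose $x$ is a father of $a$ with $d_C(x)=1$, and let $y$ be a father of $b$; I want to show $d_C(y)=1$. By~(\ref{no2father3}), $d_C(y)\le 2$, so suppose for contradiction $d_C(y)=2$, with second neighbor $y_1 \ne b$ on $C$. The idea is to build a cycle through $x$, a unimodal path $Q_{xy}$, and part of $C$, realizing it as an X-cycle or 3-cycle for a contradiction. Concretely, $axQ_{xy}y$ together with an appropriate sub-path of $C$ from $y$ back to $a$ forms a cycle whose chords include $ac$ and $bd$ (since $a,b,c,d$ lie on it in the right order), plus possibly $ab$ if that is an edge, plus the chord $yy_1$ if $y_1$ lands in the interior. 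Careful case analysis on where $y_1$ sits — which interval it is in, and whether $ab$ is an edge (invoking~(\ref{abfather2}) to pin down father degrees when $ab$ is an edge) — should in each case produce either a 3-cycle (three chords exactly) or, when we can arrange for exactly two chords, an X-cycle. One will likely need to pass through an auxiliary vertex such as the neighbor $e$ of $a$ on an interval and a father $f$ of $e$, exactly as in~(\ref{abfather2}), to finish the stubborn subcase where $y_1$ is adjacent to $a$. The key enabling facts are~(\ref{fact2c}),~(\ref{father3c}),~(\ref{not2adj1}),~(\ref{no2father3}) and~(\ref{abfather2}).

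Second I would prove the upgrade step: once every important vertex has \emph{some} father of degree one, then \emph{every} father of every important vertex has degree one. Fix an important vertex $a$ with a degree-one father $x$, and let $x'$ be any father of $a$; by~(\ref{no2father3}) $d_C(x') \le 2$, so suppose $d_C(x')=2$ with second neighbor $x'_1$. Using the degree-one father $y$ of some other important vertex (one consecutive with $a$, guaranteed to exist by the propagation step applied around the whole cycle), the cycle $a x' Q_{x' y} y$ plus a sub-path of $C$ is again a candidate X-cycle with chords among $ac$, $bd$, and $x'x'_1$; since $d_C(y)=1$ contributes no extra chord, I expect exactly two or exactly three chords depending on the position of $x'_1$, yielding an X-cycle or 3-cycle contradiction in every case. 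The main obstacle I anticipate is the same one that pervades this whole section: controlling precisely which chords appear on these glued cycles, because a unimodal path $Q$ contributes no chords but the segment of $C$ we reattach may contribute $ab$, $cd$, etc. as edges, so the bookkeeping of "exactly two chords vs. exactly three chords" must be done interval-by-interval, and the degenerate cases where some interval has length one (making an important vertex adjacent to another vertex of $C$) or where $x'_1$ coincides with $e$ or with an important vertex have to be excluded separately, typically by the triangle-freeness of $G$ and by~(\ref{brother})-type arguments about neighbors of $a$ on an interval.
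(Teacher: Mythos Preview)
Your overall strategy---glue a unimodal path onto an arc of $C$ and count chords---is the paper's. But there is a genuine error in what you count as a contradiction. You repeatedly aim for ``an X-cycle or 3-cycle contradiction''; an X-cycle is \emph{not} a contradiction here. In Lemma~\ref{crossingcycle} the ambient graph $G$ is only assumed to be (V-cycle, triangle, 3-cycle)-free; the whole point of the lemma is to show $S_k(z,G)$ contains no X-cycle, so producing an X-cycle in $G$ proves nothing. The only usable contradictions are V-cycles, triangles, and 3-cycles.

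This bites precisely in the case you call ``stubborn''. If $y_1\in\mathring P_{ab}$, the naive cycle $axQ_{xy}bP_{bc}P_{cd}P_{da}a$ has just the two chords $ac$ and $bd$---an X-cycle, no contradiction. The paper's fix is to route through whichever of $\{b,y_1\}$ lies closer to $a$ on $P_{ab}$, so that the other becomes a third chord and one gets a 3-cycle; the residual case is that $y_1$ is the neighbor of $a$ on $P_{ab}$. There the paper does \emph{not} pass to a father of a non-important neighbor $e$ of $a$ as you propose---such a father only satisfies $d_C(f)\le 3$ by~(\ref{father3c}), leaving two uncontrolled chords. Instead it brings in a father $z$ of the important vertex $d$: by~(\ref{no2father3}) $d_C(z)\le 2$, and the cycle $dzQ_{zy}y_1P_{y_1b}P_{bc}P_{cd}d$ carries the two chords $yb$ and $bd$ meeting at $b$ (a V-cycle) plus at most one more from $z$ (a 3-cycle)---a valid contradiction either way.

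Two smaller points. Your two-step plan is redundant: the propagation step as you actually phrase it in detail (``let $y$ be an arbitrary father of $b$; show $d_C(y)=1$'') already yields the full claim after cycling $a\to b\to c\to d\to a$, so no separate ``upgrade'' is needed. And~(\ref{abfather2}) is used the other way round from what you write: since $a$ has a degree-one father, $ab$ and $ad$ cannot be edges at all.
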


\begin{proofclaim}
Assume w.l.o.g.\ that a father $x$ of $a$ satisfies $d_C(x)=1$. 
We show that it implies that every father of $b$ are of degree one in $C$ which, by symmetry, prove the claim.

Let $y$ be father of $b$ and assume for contradiction that  $d_C(y) \neq 1$. 
Note that by~(\ref{abfather2}), neither $ab$ nor $ad$ are edges. 
By~(\ref{no2father3}), $d_C(y)=2$.
Let $y_1$ be the neighbor of $y$ on $C$ distinct from $b$.
Let $b_1$ be the element of $\{ y_1,b\}$ that is nearest from $a$ in $P_{ab}$ and which is distinct from $a$ and let $b_2$ the other one. 
Such a vertex exists since $b$ satisfies the conditions. 
If $ab_1$ is not an edge, then $axQ_{xy}b_1 P_{b_1b}P_{bc}P_{cd}P_{da}$ has $3$ chords $ac,bd$ and $yb_2$. 
So we may assume that $ab_1$ is an edge, since $ab$ is not an edge, $b_1=y_1$. 

Let $z$ be a father of $d$. The cycle $C'=dzQ_{zy}y_1P_{y_1b}P_{bc}P_{cd}$ is, with no additionnal chord, a V-cycle with chords $yb,bd$. Since $d_C(z) \leq 2$ by~(\ref{no2father3}), there is at most one chord with extremity $z$, which provides a 3-cycle.
\end{proofclaim}

\begin{claim}\label{degree1ok}
Fathers of important vertices have degree exactly two on $C$.
\end{claim}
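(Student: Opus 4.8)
The plan is a proof by contradiction in three stages: first turn the hypothesis into a rigid global picture, then pin down the fathers of the neighbors (on $C$) of the important vertices, and finally extract a forbidden cycle by a case analysis. So suppose some father of an important vertex has degree one on $C$; by~(\ref{no2father3}) every father of an important vertex has degree at most $2$, so it suffices to obtain a contradiction from the assumption that \emph{every} father of \emph{every} important vertex has degree one on $C$, which is precisely what~(\ref{oneimpliesall}) allows us to assume. Since some father of $a$ then has degree $1$ (hence not $2$), (\ref{abfather2}) shows that $ab$ is not an edge; as the rotation $a\mapsto b\mapsto c\mapsto d\mapsto a$ is a symmetry of the X-cycle $C$ (it fixes the pair of chords $\{ac,bd\}$ and cyclically permutes the intervals), the same argument applied at $b$, $c$ and $d$ shows that none of $bc$, $cd$, $da$ is an edge either; hence each of the four intervals has length at least two.

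For the second stage, let $e$ be the neighbor of $a$ on $P_{ab}$, so that $e\in\mathring P_{ab}$ (here we use $|P_{ab}|\ge 2$), and let $f$ be a father of $e$. Since $a$ and $e$ are adjacent on $C$ and $a$ has a degree-one father, (\ref{not2adj1}) prevents $e$ from having a degree-one father, so $2\le d_C(f)\le 3$, the upper bound being~(\ref{father3c}). Moreover $f\not\sim a$ (otherwise $fea$ is a triangle), and in fact $f$ has no neighbor on $C$ that is an important vertex: such a neighbor cannot be $a$, and if it were $b$, $c$ or $d$ then $f$ would be a father of an important vertex with $d_C(f)\ge 2$, contradicting the standing assumption of this stage. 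So every neighbor of $f$ on $C$ lies in the interior of an interval, and when $d_C(f)=3$ these lie, by~(\ref{father3c}), in three pairwise distinct intervals, one being $P_{ab}$.

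The third stage is a case analysis on the positions of the one or two neighbors of $f$ other than $e$ — according to whether $d_C(f)$ equals $2$ or $3$ and to which interval interiors they belong. In each case I would build a cycle that leaves $a$ along the edge $ax$ (where $x$ is a degree-one father of $a$), follows a unimodal path to $f$ — either $Q_{xf}$, or, when it is more convenient to travel the other way around $C$, a path through the degree-one father $y$ of $b$ — passes through $f$, and returns to $a$ along a portion of $C$ chosen so that at most one of the two chords $ac$, $bd$ of $C$ remains a chord of the new cycle, while the neighbors of $f$ on $C$ supply the remaining chord(s). Carried out correctly, every case yields a cycle with exactly three chords, or a V-cycle, contradicting that $G$ is (V-cycle, 3-cycle)-free. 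Combined with~(\ref{no2father3}) this gives $d_C=2$ for every father of an important vertex.

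The part I expect to be the real obstacle is this third stage, and specifically the bookkeeping of chords. The edge $ae$ and the chords $ac$, $bd$ of $C$ are all prone to reappearing as chords of the auxiliary cycle and thereby pushing the chord count past three, so the route through $C$ must be picked with care; in several sub-cases one seems forced either to thread two unimodal paths through $f$ at once, or to use the father of the other neighbor of $a$ (its neighbor on $P_{da}$) in place of $f$, precisely in order to avoid creating a fourth chord.
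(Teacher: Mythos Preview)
Your first two stages are correct and match the paper almost verbatim: assume for contradiction via~(\ref{oneimpliesall}) that every father of every important vertex has degree one, deduce from~(\ref{abfather2}) that none of $ab,bc,cd,da$ is an edge, take the neighbor $e$ of $a$ on $P_{ab}$ and a father $f$ of $e$, and use~(\ref{not2adj1}) and~(\ref{father3c}) to get $2\le d_C(f)\le 3$ with $f$ avoiding all important vertices.

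Where your plan diverges from the paper, and where the genuine gap lies, is stage three. You propose to finish by a case split on the location of the other neighbor(s) of $f$, each time routing a cycle through a degree-one father $x$ of $a$ (or $y$ of $b$) and the unimodal path $Q_{xf}$. You yourself flag the difficulty: the stray chord $ae$ and the two chords $ac,bd$ of $C$ keep pushing the chord count to four, and no single choice of route seems to work uniformly. This is not just bookkeeping --- the direct approach really does get stuck, which is why the paper does something different.

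The paper's missing idea is this. After reducing to $d_C(f)=2$ (ruling out $3$ by one explicit cycle through a father of $b$), it brings in the \emph{other} neighbor $e'$ of $a$, on $P_{da}$, and a father $f'$ of $e'$. If $f'\neq f$, then $d_C(f')=2$ by the same argument, and the cycle $e' f' Q_{f'f} f\, e\, P_{eb}P_{bc}P_{cd}P_{de'}$ is a 3-cycle (chords $bd$ plus the second neighbors of $f$ and $f'$). Hence $f'=f$: the single vertex $f$ is a common father of both neighbors of $a$, with $N_C(f)=\{e,e'\}$. By the $a\leftrightarrow c$ symmetry there is likewise a common father $g$ of the two neighbors $w,w'$ of $c$. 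Now one unimodal path $Q_{gf}$ suffices: the cycle $w\, g\, Q_{gf}\, f\, e'\, a\, P_{ab}P_{bc}\, w$ has exactly the three chords $fe$, $gw'$ and $ac$, the final contradiction. Your hint about ``the father of the other neighbor of $a$'' is pointing in the right direction; what you are missing is that this father is forced to coincide with $f$, and that exploiting the same coincidence at $c$ is what makes the chord count come out to exactly three.
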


\begin{proofclaim}
Let us prove it by contradiction. By~(\ref{oneimpliesall}), we can assume that all fathers of all important vertices have degree one on $C$. And by (\ref{abfather2}), none of $ab,bc,cd$ and $da$ are edges. Let $u$ be a neighbor of $a$ in $P_{ab}$. Let $x$ be a father of $a$ and $y$ be a father of $u$. By (\ref{not2adj1}), $d_C(y) \neq 1$ since $d_C(x)=1$. 

By~(\ref{father3c}), $d_C(y) \leq 3$. If $d_C(y)=3$ then, by (\ref{father3c}) and (\ref{oneimpliesall}), the neighbors of $y$ are in the interior of distinct intervals. 
 Assume that $y$ has a neighbor in  $\mathring P_{cd}$ and in $\mathring P_{da}$.
Let $y_1$ be the neighbor of $y$ on $\mathring P_{cd}$. 
By~(\ref{oneimpliesall}), a father $y'$ of $b$ satisfies $d_C(y')=1$. 
Hence $by'Q_{y'y}y_1P_{y_1d}P_{da}P_{ab}$ has 3 chords: $bd$ and two chords with extremity $y$. 
It is easy to see that, since fathers of every important vertex are of degree one in $C$, we get a contradiction when $d_C(y)=3$.
So $d_C(y)=2$. 

Let $u'$ be the other neighbor of $a$ and let $z$ be a father of $u'$.
Assume first that $u'$ has a father $z$ distinct from  $y$. 
By symmetry with $y$, $d_C(z)=2$. 
So $u'zQ_{zy}uP_{ub}P_{bc}P_{cd}P_{du'}u'$ has $3$ chords: two chords are given by the other neighbors of $y$ and $z$ and the third one is $bd$. 
So $y$ is adjacent to $u'$.

Let $w$ be the neighbor of $c$ in $P_{dc}$ and $w'$ the neighbor of $c$ on $P_{cb}$.
For symmetric reason why $y$ is a father of both $u$ and $u'$, there exists a vertex $f$ that is the father of both $w$ and $w'$. 
and that is of degree 2 in $C$.
Then $w f Q_{fy} u' a P_{ab} P_{bc} w$ is a 3-cycle with chords $fw'$, $yu$ and $ac$ (recall that both $v$ and $w$ are distinct from $d$ since none of $ad, \, cd$ are edges).
\end{proofclaim}

We are now armed to finish the proof!
By~(\ref{degree1ok}), we may assume that fathers of every important vertex have exactly degree $2$ on $C$. 
Let $x$ and $y$ be some fathers of $a$ and $c$ respectively. 
Since $G$ is triangle-free, $x \neq y$. 
Let us denote by $x_1$ and $y_1$ the other neighbors of respectively $x$ and $y$. 
If $x_1$ and $y_1$ are on $P_{ab}P_{bc}$, then $axQ_{xy}cP_{cb}P_{ba}$ is a 3-cycle with chords $ac, xx_1$ and $yy_1$, a contradiction.
So $x_1$ and $y_1$ cannot both be on  $P_{ab}P_{bc}$ and, symmetrically, they cannot be on $P_{cd}P_{da}$.

So, we may assume w.l.o.g.\ that $x_1$  is on $P_{ab}P_{bc}$ and that $y_1$ is on $P_{cd}P_{da}$. 
If $x_1 \in \mathring{P_{ab}}$ then $x_1xQ_{xy}cP_{cd}P_{da}P_{ax_1}$ is a 3-cycle with chords $ac, \, ax$ and $yy_1$. 
Thus $x_1$ is on $P_{bc}$ and by symmetry $y_1$ is on $P_{da}$. 
More generally, we showed that no father of an important vertex has its second neighbor on the interior of and interval adjacent containing it.
If $ab$ and $cd$ are both not edges, then $axP_{xy}cP_{cb}dP_{da}$ is a 3-cycle with chords $ac, xx_1, yy_1$.
So either $ab$ is an edge, or $cd$ is an edge, or both are edges.

Assume w.l.o.g.\ that $ab$ is an edge. 
A father $w$ of $d$ has it second neighbor $w_1$ neither in $P_{cd}$ nor in $P_{da}$ since no father of an important vertex has its second neighbor on the interior of intervals adjacent to it. 
Since $ab$ is an edge, $w_1 \in  P_{bc}$. 
Now, a father $z$ of $b$ has a unique second neighbor $z_1$ on $\mathring{P_{ad}}$ (by applying the first part of the proof on $b, \, d$ instead of $a, \, c$). 
If $a, y_1, z_1$ appears in this order along $P_{ad}$ then $bzQ_{zy}y_1P_{y_1d}P_{dc}P_{cb}$ is a 3-cycle with chords $bd, zz_1$ and $yc$. 
So $a$, $z_1$, $y_1$ appear in this order along $P_{ad}$ and, in particular, $z_1d$ is not an edge.
Symmetrically, $b$, $x_1$, $w_1$ appear in this order along $P_{cb}$ and $cx_1$ is not an edge.
Finnaly $x_1xQ_{xz}z_1P_{z_1a}cP_{cd}bP_{bx_1}$ is a 3-cycle with chords $ab, xa$ and $zc$, a contradiction.
\end{proof}


\subsection{Clique number $3$ : proof of Theorem~\ref{lemmaClique3}} \label{clique3}

\textit{Recall that Theorem~\ref{clique3} ensures that, if  $G$ is a ($K_4$, 3-cycle)-free graph then $\chi(H) \leq 4c$.}

The proof of Theorem~\ref{lemmaClique3} is organized as follows. 
First of all, we prove (see Lemma \ref{lem:dragonorbutter}) that any ($K_4$, 3-cycle)-free graph with chromatic number at least $2c$ contains either an butterfly as an induced subgraph or a dragonfly as an induced subgraph (see Figure \ref{figDragonfly}). 
Note that the proof is based on Theorem \ref{3cycleclique2}.

We then prove that if a graph $G$ is ($K_4$, 3-cycle)-free and $x$ is a vertex of $G$, then for any integer $k$, $S_k(z,G)$ is (dragonfly,butterfly)-free (see Lemmas \ref{lem:nodragonfly} and \ref{lem:butterfly}).

At the very end, we combine these two result to get the proof of Lemma \ref{lemmaClique3}.

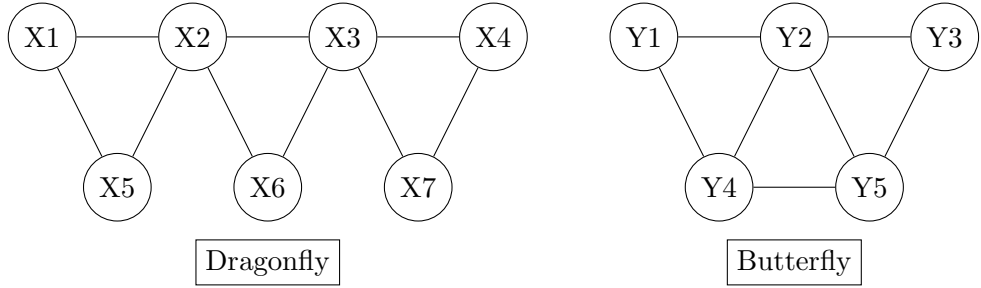
\begin{figure} [h]
\centering
\begin{tikzpicture} 
\node[draw,circle] (X1) at (0,0) {X1}; 
\node[draw,circle] (X2) at (2,0) {X2}; 
\node[draw,circle] (X3) at (4,0) {X3}; 
\node[draw,circle] (X4) at (6,0) {X4}; 
\node[draw,circle] (X5) at (1,-2) {X5}; 
\node[draw,circle] (X6) at (3,-2) {X6}; 
\node[draw,circle] (X7) at (5,-2) {X7}; 

\draw (X1) -- (X2);
\draw (X2) -- (X3);
\draw (X3) -- (X4);
\draw (X5) -- (X1);
\draw (X5) -- (X2);
\draw (X6) -- (X2);
\draw (X6) -- (X3);
\draw (X7) -- (X3);
\draw (X7) -- (X4);

\tikzstyle{quadri}=[rectangle,draw]
\node[quadri] (W) at (3,-3) {Dragonfly};

\node[draw,circle] (Y1) at (8,0) {Y1}; 
\node[draw,circle] (Y2) at (10,0) {Y2}; 
\node[draw,circle] (Y3) at (12,0) {Y3}; 
\node[draw,circle] (Y4) at (9,-2) {Y4}; 
\node[draw,circle] (Y5) at (11,-2) {Y5}; 

\draw (Y1) -- (Y2);
\draw (Y2) -- (Y3);
\draw (Y3) -- (Y5);
\draw (Y4) -- (Y1);
\draw (Y5) -- (Y2);
\draw (Y4) -- (Y2);
\draw (Y4) -- (Y5);

\tikzstyle{quadri}=[rectangle,draw]
\node[quadri] (W) at (10,-3) {Butterfly};
\end{tikzpicture}
\caption{The dragonfly and the butterfly} \label{figDragonfly}
\end{figure} 


\begin{lemma}\label{lem:dragonorbutter}
Let $G$ be a ($K_4$, 3-cycle)-free graph with $\chi(G) > 2c$. 
Then $G$ contains a dragonfly or a butterfly as an induced subgraph.
\end{lemma}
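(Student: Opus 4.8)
The plan is to peel off a triangle-free part of $G$, whose chromatic number is controlled by Theorem~\ref{3cycleclique2}, leaving behind a ``triangle-rich'' subgraph of large chromatic number, and then to extract one of the two configurations from an induced $P_4$ living in that subgraph.

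Concretely, I would fix a maximal family $\mathcal T=\{T_1,\dots,T_m\}$ of pairwise vertex-disjoint triangles of $G$ and set $F=\bigcup_i V(T_i)$. By maximality, $G\sm F$ contains no triangle, and being an induced subgraph of $G$ it is $3$-cycle-free as well, so Theorem~\ref{3cycleclique2} gives $\chi(G\sm F)\le c$. Since $\chi(G)\le\chi(G[F])+\chi(G\sm F)$ (colour the two parts with disjoint palettes), we get $\chi(G[F])>2c-c=c$. Now $G[F]$ is still $(K_4,3\text{-cycle})$-free, so $\omega(G[F])\le 3<c<\chi(G[F])$; in particular $G[F]$ is not perfect, and since $P_4$-free graphs (cographs) are perfect, $G[F]$ contains an induced path $q_1q_2q_3q_4$. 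The crucial extra feature of working inside $G[F]$ is that each $q_i$ lies in one of the disjoint triangles $T_{j_i}$, and because non-consecutive vertices of the path are non-adjacent, two of these triangles coincide only for a consecutive pair $q_i,q_{i+1}$ while distinct ones are vertex-disjoint, so no accidental identification occurs among the apex vertices attached to different $q_i$'s.

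It then remains to analyse, according to which consecutive pairs among the $q_i$ share a triangle and how the apex vertices see the remaining $q_i$'s, and to check that in every case one can select five or seven of these vertices inducing exactly a butterfly or a dragonfly (Figure~\ref{figDragonfly}). The non-edges needed for the selected set to be \emph{induced} are not free: they are forced by $K_4$-freeness (an apex cannot be adjacent to both endpoints of the edge it already dominates, and two apexes together with a path edge cannot form a $K_4$) and by $3$-cycle-freeness (a spurious adjacency between an apex and a far vertex of the path, together with an apex--apex or apex--edge chord, would produce a cycle carrying exactly three chords). I expect the main obstacle to be precisely this case analysis, and in particular the situation where the triangles sit on the two \emph{outer} edges $q_1q_2$ and $q_3q_4$ while the middle edge $q_2q_3$ lies in no triangle: there a single arbitrary induced $P_4$ is not enough, and one must either re-root the path along apex vertices to expose a new induced $P_4$ all of whose edges lie in triangles, or exploit the large chromatic number of $G[F]$ to reselect the path — which is why the hypothesis is $\chi(G)>2c$ rather than merely $\chi(G)>c+O(1)$.
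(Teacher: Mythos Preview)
Your opening move---split off a triangle-free part of chromatic number at most $c$ (Theorem~\ref{3cycleclique2}), so the rest has $\chi>c$---matches the paper, but you then take a different route that does not close.

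The paper does \emph{not} take a maximal packing of disjoint triangles; it takes an inclusion-minimal triangle \emph{hitting set} $T$. This distinction is the whole proof. Minimality of $T$ says that for every $t\in T$ there is a triangle of $G$ meeting $T$ only in $t$---a ``private'' triangle $tt't''$ with $t',t''\notin T$. Since $\chi(G[T])>c$, Theorem~\ref{3cycleclique2} forces a triangle $x_2x_3x_6$ inside $G[T]$; applying the private-triangle property to $x_2$ and to $x_3$ produces triangles $x_1x_2x_5$ and $x_3x_4x_7$ with $x_1,x_4,x_5,x_7\notin T$, automatically disjoint from $x_6$ and from each other's new vertices. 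That is already a dragonfly as a (not necessarily induced) subgraph, and a short check on the handful of possible extra edges---ruled out one by one by $K_4$-freeness or by exhibiting a $3$-cycle---yields an induced dragonfly or an induced butterfly.

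Your maximal-packing version gives no analogue of the private-triangle property: the packed triangle through $q_i$ has its other two vertices in $F$ as well, not outside. Consequently an induced $P_4$ $q_1q_2q_3q_4$ in $G[F]$ carries triangles attached to its \emph{vertices}, not to its \emph{edges}, and both target graphs need edge-triangles. Concretely, in your flagged case $j_1=j_2$, $j_3=j_4$ one gets triangles $q_1q_2a$ and $q_3q_4b$ with $a,b$ otherwise anticomplete to the picture; this six-vertex graph is $(K_4,3\text{-cycle})$-free and contains neither a dragonfly nor a butterfly, so the local analysis genuinely stalls. The situation with all $j_i$ distinct is worse still. Your proposed repairs (``re-root the path'', ``reselect using large $\chi$'') are not arguments, and the hypothesis $\chi(G)>2c$ buys nothing beyond the single inequality $\chi(G[F])>c$ already used. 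The clean fix is to switch from a maximal packing to a minimal hitting set and to look for a \emph{triangle} in $G[T]$ rather than a $P_4$.
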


\begin{proof}
All along the proof, the notations of the vertices of dragonfly and butterfly will fit with notations of Figure \ref{figDragonfly}. 
We first prove that $G$ admits a dragonfly or a butterfly as a subgraph. We then prove that it is induced.

\begin{claim} \label{dragonflysubgraph}
$G$ admits a dragonfly as a subgraph.
\end{claim}

\begin{proofclaim}
Let $T \subseteq V(G)$ be a minimal (by inclusion) subset of vertices such that $G \sm T$ is triangle-free.
By Theorem \ref{3cycleclique2}, $G \sm T$ is $c$-colorable.
If $G[T]$ is triangle-free, then $G[T]$ is $c$-colorable and thus $G$ is $2c$-colorable, a contradiction.
Thus, we may assume that $G[T]$ admits a triangle $x_2x_3x_6$.
By minimality  of $T$, $(G \sm T) \cup \{x_2\}$ admits a triangle containing $x_2$, say $x_1x_2x_5$. Similarly, $(G \sm T) \cup \{x_3\}$ contains a triangle containing $b$, say $x_3x_4x_7$.

If $\{x_1,x_5\}=\{x_4,x_7\}$, then $x_1x_2x_3x_5=K_4$, a contradiction.
So $\{x_1,x_5\} \neq \{x_4,x_7\}$.

Assume now that $|\{x_1,x_5\} \cap \{x_4,x_7\}|=1$ and, w.l.o.g., assume that $x_1=x_7$ (see Figure \ref{fig:claim}).
The cycle $C=x_1x_5x_2x_6x_3x_4x_1$ is, if no additional chords exist, a $3$-cycle with chords $x_1x_2$, $x_1x_3$ and $x_2x_3$.
So $C$ must have at least one more chord.
Since $G$ is $K_4$-free, $x_1x_6$, $x_2x_4$ and $x_3x_5$ are not edges.
There remains only 3 possible chords, namely $x_4x_5$, $x_5x_6$ and $x_4x_6$. 
If say  $x_4x_5 \in E(G)$, then $x_1x_2x_5x_4x_3x_1$ is a 3-cycle, a contradiction. 
So $x_4x_5$ is not an edge and, by symmetry, $x_5x_6$ and $x_4x_6$ are not edges.

\begin{figure}[h]
\centering
\begin{tikzpicture} \label{fig:claim}
\node[draw,circle] (X1) at (2,1.5) {X1}; 
\node[draw,circle] (X2) at (1,0) {X2}; 
\node[draw,circle] (X3) at (3,0) {X3}; 
\node[draw,circle] (X4) at (4,1.5) {X4}; 
\node[draw,circle] (X5) at (0,1.5) {X5}; 
\node[draw,circle] (X6) at (2,-1.5) {X6}; 

\draw (X1) -- (X2);
\draw (X1) -- (X3);
\draw (X2) -- (X3);
\draw (X4) -- (X1);
\draw (X5) -- (X1);
\draw (X5) -- (X2);
\draw (X6) -- (X2);
\draw (X3) -- (X4);
\draw (X6) -- (X3);
\end{tikzpicture}
\caption{Figure in the proof of Claim (\ref{dragonflysubgraph}).}
\end{figure}

So, $\{x_1,x_5\} \cap \{x_4,x_7\} = \emptyset$ and thus $G[\{x_1,x_2,x_3,x_4,x_5,x_6,x_7\}]$ contains a dragonfly as a subgraph.
\end{proofclaim}

\begin{claim} \label{induit}
$G$ contains either a dragonfly or a butterfly as an induced subgraph.
\end{claim}

\begin{proofclaim}
Observe first that, if $G$ contains a butterfly (see Figure \ref{figDragonfly} for the name of its vertices), then it is induced.
Indeed, since $G$ is $K_4$-free, if it is not induced then $y_1y_3 \in E(G)$ and then $y_1y_4y_5y_3y_2y_1$ is a 3-cycle,  a contradiction.

By (\ref{dragonflysubgraph}), $G$ admits a dragonfly as a subgraph, name it $H$ and refer to Figure \ref{figDragonfly} for the name of its vertices. 
We may assume that $H$ is not induced, otherwise we are done.
If there exists an edge with one extremity in $\{x_1,x_5\}$ and the other one in $\{x_3,x_6\}$, then $G[\{x_1,x_2,x_3,x_5,x_6\}]$ contains a butterfly as a subgraph and thus as an induced subgraph.
So there is no edges with one extremity in $\{x_1,x_5\}$ and the other one in $\{x_3,x_6\}$ and, by symmetry, there is no edges with extremities in $\{x_4,x_7\}$ and the other one in $\{x_2,x_6\}$.

So there exists some edges with one extremity in $\{x_1,x_5\}$ and one extremity in $\{x_4,x_7\}$. 
By symmetry, we may assume w.l.o.g. that $x_5x_7 \in E(G)$.
If it is the only one then, $x_1x_2x_6x_3x_4x_7x_5x_1$ is a 3-cycle, a contradiction.
So it is not the only one and thus, some of $x_1x_4$, $x_1x_7$ or $x_4x_5$ are edges of $G$.
If there is exactly one more, then in the three cases $x_1x_2x_3x_4x_7x_5x_1$ is a 3-cycle, a contradiction.
So, there is at least two more and there is actually exactly two more, otherwise $x_1x_4x_5x_7=K_4$. 
By symmetry between $x_1x_7$ and $x_4x_5$, we may assume w.l.o.g.\ that $x_4x_5 \in E(G)$.  
So one of the edges $x_1x_4$ or $x_1x_7$ exists, but in both cases the cycle $x_2x_6x_3x_4x_7x_5x_2$ is a 3-cycle, a contradiction.
\end{proofclaim}
\end{proof}


\begin{lemma}\label{lem:nodragonfly}
Let $G$ be a ($K_4$, 3-cycle)-free graph and let $z$ be a vertex of $H$. Then, for every integer $i$, $G[S_i(z)]$ is dragonfly-free.
\end{lemma}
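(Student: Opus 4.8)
The plan is to argue by contradiction in the usual level-decomposition style: suppose some level $S_i(z)$ of a $(K_4,\,3\text{-cycle})$-free graph $G$ contains an induced dragonfly $D$ on vertices $x_1,\dots,x_7$ (with the adjacencies of Figure~\ref{figDragonfly}: the path $x_1x_2x_3x_4$, together with the triangles $x_1x_2x_5$, $x_2x_3x_6$, $x_3x_4x_7$). I would first look for a short chordless cycle inside $D$ itself that can be completed through lower levels into a $3$-cycle. The natural candidate is the $7$-cycle $C = x_5x_1x_2x_6x_3x_4x_7x_5$? — no, $x_5x_7$ need not be an edge, so instead one uses unimodal paths. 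For two vertices $p,q\in S_i(z)$ we have the unimodal path $Q_{pq}$ whose interior lies strictly below level $i$, and whose only vertices with neighbours in $S_i(z)$ are its endpoints. The key observation (already used verbatim in Lemmas~\ref{vcycle},~\ref{crossingcycle}) is that attaching such a $Q_{pq}$ to a carefully chosen induced subpath of $D$ produces a cycle whose chords we can count exactly, because the interior of $Q_{pq}$ contributes none.

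The key steps, in order, are: (1) record that since $G$ is triangle-free on none of its levels — wait, $G$ itself is only $K_4$-free — so instead record that $G[S_i(z)]$ is $K_4$-free (it is a subgraph of $G$) and that in $D$ the only triangles are $x_1x_2x_5$, $x_2x_3x_6$, $x_3x_4x_7$, since any extra edge among $\{x_1,\dots,x_7\}$ would create a $K_4$ or a triangle that, combined with the path, gives a $3$-cycle — but $D$ is \emph{induced}, so this is automatic. (2) Fix a vertex $u\in S_i(z)$ that is a common father (in $S_{i-1}(z)$) of some pair among $x_1,x_3$ or $x_1,x_4$ etc.; more to the point, pick fathers $u$ of $x_1$ and $w$ of $x_4$ (they are distinct, else $u x_1 x_4$ plus the interior would be forced together, or a triangle appears if $u$ is adjacent to both — impossible since then $u,x_1,x_4$ with... actually $x_1x_4\notin E$, so $u$ could be adjacent to both; handle that sub-case separately as a short chordless cycle). (3) Form the cycle $x_1 u Q_{uw} w x_4 x_3 x_2 x_1$: its edges are the path edges $x_1x_2,x_2x_3,x_3x_4$, the two connector edges $x_1u$, $x_4w$, the unimodal path, and $x_2x_1$? — reorganize so the $3$-cycle count works: the chords of this cycle are exactly $x_1x_2$... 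I would instead take the induced $6$-cycle $x_5 x_1 u Q_{uw} w x_4 x_3$ closed via $x_3 x_6 x_2 x_5$, i.e.\ the cycle through $x_5,x_1,(\text{down and up}),w,x_4,x_3,x_6,x_2$ and back to $x_5$, and observe its chords are precisely $x_1x_2$, $x_2x_3$, $x_3x_4$ (from the internal edges of the dragonfly), giving a $3$-cycle — the desired contradiction — \emph{unless} $u$ or $w$ has an extra neighbour on $D$. (4) So the bulk of the work is a case analysis, as in the earlier lemmas, controlling $N_D(u)$ and $N_D(w)$ for every possible father $u$ of each of $x_1,x_2,x_3,x_4,x_5,x_6,x_7$: each extra neighbour either creates a triangle (hence, with the path, a $3$-cycle), or a $K_4$, or lets us re-route to a different cycle with exactly three chords.

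I expect the main obstacle to be exactly that bookkeeping in step~(4): the dragonfly is asymmetric (it has a reflection symmetry swapping $x_1\leftrightarrow x_4$, $x_5\leftrightarrow x_7$, $x_2\leftrightarrow x_3$, fixing $x_6$, but no more), so one must separately bound the attachment of fathers of the ``outer'' vertices $x_5,x_7$, the ``ends'' $x_1,x_4$, the ``middle path'' vertices $x_2,x_3$, and the apex $x_6$. The cleanest route is probably to prove first a small auxiliary claim — that no vertex of $S_i(z)$ outside $D$ has more than two neighbours in $D$ (any three neighbours in this graph would, being $K_4$- and $3$-cycle-free, force a configuration we can reject, much like Claim~\ref{father3c}) — and then a second claim that a father of an important vertex of $D$ has at most one other neighbour on $D$. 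With those two claims in hand, each of the $6$-cycle constructions above has its chord-count pinned down, and every surviving case closes by exhibiting an explicit $3$-cycle, contradicting $G\in\mathcal C_3$ and completing the proof.
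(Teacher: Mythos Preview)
Your overall architecture matches the paper's: assume a dragonfly $D$ sits in some level $S_i(z)$, pick two ``far-apart'' degree-$2$ vertices of $D$, take fathers of each in $S_{i-1}(z)$, connect them by a unimodal path, and run a case analysis on how these fathers attach to $D$. The paper does exactly this, choosing fathers $u$ of $x_5$ and $v$ of $x_7$ (your choice of $x_1$ and $x_4$ is equivalent, since swapping $x_1\!\leftrightarrow\! x_5$ and $x_4\!\leftrightarrow\! x_7$ are automorphisms of the dragonfly).

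There are, however, two concrete problems. First, the sample cycle you write down is miscounted: in the cycle $x_5\,x_1\,u\,Q_{uw}\,w\,x_4\,x_3\,x_6\,x_2\,x_5$ the edge $x_3x_4$ lies \emph{on} the cycle, so the only chords are $x_1x_2$ and $x_2x_3$ --- two chords, not three. The cycle you actually want (when both fathers have degree one on $D$) is $u\,x_1\,x_5\,x_2\,x_6\,x_3\,x_7\,x_4\,w\,Q_{wu}\,u$, whose chords are $x_1x_2$, $x_2x_3$, $x_3x_4$.

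Second, and more seriously, your proposed auxiliary claims are false. A father of $x_5$ (equivalently of $x_1$) \emph{can} have four neighbours in $D$: the set $\{x_1,x_3,x_5,x_6\}$ produces no $K_4$ (these four vertices contain no triangle) and no $3$-cycle (every cycle through such a $u$ and contained in $D\cup\{u\}$ has $0$, $1$, $2$ or $4$ chords --- check for instance $u\,x_5\,x_1\,x_2\,x_6\,x_3\,u$, which has the four chords $ux_1,\,ux_6,\,x_5x_2,\,x_2x_3$). So you cannot cap the degree at two. The paper handles this by first pinning down the \emph{exact} list of admissible neighbourhoods --- for a father $u$ of $x_5$ one gets $N_D(u)\in\{\{x_5\},\{x_5,x_1\},\{x_5,x_2\},\{x_5,x_3\},\{x_5,x_6\},\{x_1,x_3,x_5,x_6\}\}$ --- and only then pairs these possibilities for $u$ and $v$ against each other. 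The degree-$4$ case is genuine and requires its own argument (in the paper it is dispatched by descending one more level and using fathers $u',v'\in S_{i-2}(z)$ when $uv$ turns out to be an edge). Your sketch, as it stands, would stall precisely here.
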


\begin{proof}
Assume by way of contradiction that there exists an integer $i$ such that $G[S_i(z)]$ contains a dragonfly as an induced subgraph. 
Name it $H$ and refer to Figure \ref{figDragonfly} for the name of its vertices. 
Let $u$ be a father of $x_5$ and $v$ be a father of $x_7$.

The two next claims examine what are the possible neighborhoods of $u$ and $v$ in $H$.

\begin{claim} \label{x_5}
$N_H(u) \in \{ \{x_5\}, \{x_5,x_1\}, \{x_5,x_2\}, \{x_5,x_3\}, \{x_5,x_6\}, \{x_1,x_3,x_5,x_6\} \}$.
\end{claim}

\begin{proofclaim}
First note that $u$ cannot have exactly two neighbors in $\{x_1,x_2,x_3,x_6\}$. 
Indeed, $u$ cannot see both $x_1$ and $x_2$ since otherwise there is a $K_4$. 
Thus w.l.o.g. $x_6$ is a neighbor of $u$ and then $ux_5x_1x_2x_3x_6u$ is a 3-cycle.

Assume now that $ux_2$ is an edge. Since $G$ is $K_4$-free, $ux_1$ is not an edge. Both $ux_3,ux_6$ are not edges since $G$ is $K_4$-free. 
So none of $ux_3, \, ux_6$ is an edge since otherwise $u$ has exactly  two neighbors in $\{ x_1,x_2,x_3,x_6\}$. If $ux_7$ is an edge then $ux_5x_1x_2x_6x_3x_7u$ is a 3-cycle. So $ux_7$ and by symmetry $ux_4$ are not edges. 
So if $ux_2$ is an edge, then $N_H(u)= \{x_5,x_2\}$ and one of the outcome holds. 
So, we may assume from now on that $ux_2$ is not an edge.

Assume that $ux_7$ is an edge. By symmetry between $x_5$, $x_2$ and $x_7$, $x_3$, we can assume that $ux_3$ is not an edge. 
Let $S=\{x_1,x_4,x_6\}$. 
If $u$ has no neighbor on $S$ then $ux_5x_1x_2x_6x_3x_4x_7u$ is a 3-cycle. 
If $u$ has exactly one neighbor in $S$, then by symmetry between $x_1$ and $x_6$ we may assume that $ux_6$ is not an edge and thus $ux_5x_1x_2x_6x_3x_7u$ is a 3-cycle. 
So $u$ has at least two neighbors in $S$. 
If $u$ has three neighbors in $S$, then $u$ has exactly two neighbors on $\{x_1,x_2,x_3,x_6\}$, a contradiction.
So $u$ has exactly two neighbors in $S$. 
If $ux_1$ and $ux_4$ are edges, then  $ux_5x_1x_2x_6x_3x_7u$ is a 3-cycle.
So, by symmetry between $x_1$ and $x_4$, we may assume that the two neighbors of $u$ in $S$ are $x_4$ and $x_6$.
So  then  $ux_5x_1x_2x_6x_3x_7u$ is a 3-cycle, a contradiction.
So we may assume that $ux_7$,  and by symmetry $ux_4$ are not an edge..

So, $N_H(u) \subseteq \{x_5, x_1,x_3,x_6\}$ and, since we already proved that $u$ does not have exactly two neighbors in $\{x_1,x_2,x_3,x_6\}$, one of the outcome holds.
\end{proofclaim}

\begin{claim} \label{x_7}
$N_H(v) \in \{ \{x_7\}, \{x_4,x_7\}, \{x_3,x_7\}, \{x_2,x_7\}, \{x_6,x_7\}, \{x_2,x_4,x_6,x_7\} \}$.
\end{claim}

\begin{proofclaim}
By obvious symmetries in $H$, the proof is the same as the proof of (\ref{x_5}).
\end{proofclaim}

Note that by claims (\ref{x_5}) and  (\ref{x_7}), $u \neq v$.
In the rest of the proof we show that, whatever the neighborhoods of $u$ and $v$ are, one can find a 3-cycle in $H \cup Q_{uv}$ (recall that $Q_{uv}$ denote a unimodal path linking $u$ and $v$).

\medskip

Suppose first that $d_H(u) \leq 2$ and $d_H(v) \leq 2$.
\\
If $d_H(u)=d_H(v)=1$, then $ux_5x_1x_2x_6x_3x_4x_7vQ_{vu}u$ is a 3-cycle. 
So we may assume that $d_H(u)=2$ and thus, by  (\ref{x_5}),  $u$ has exactly one neighbors in $\{x_1,x_3,x_6\}$. 
Note that by (\ref{x_7}), $vx_1$ is not en edge.
If $d_H(v)=1$, then $ux_5x_1x_2x_6x_3x_7vQ_{uv}u$ is a 3-cycle. 
Moreover, it is still a 3-cycle if $vx_4$ is an edge. 
So $d_H(v)=2$ and $vx_4$ is not an edge. 
Similarly, if $ux_1$ is an edge, then $vx_7x_4x_3x_6x_2x_5uQ_{uv}v$ is a 3-cycle. 
So $ux_1$ is not an edge. 
Now, by (\ref{x_5}) and  (\ref{x_7}), both $u$ and $v$ has exactly one neighbor in $\{x_2,x_3,x_6\}$ and thus $ux_5x_2x_6x_3x_7vQ_{uv}u$ is a 3-cycle, a contradiction.

\vspace{1.5ex}

So, from now on, we assume that $d_H(u)$ and $d_H(v)$ are not both inferior to $2$. 
Hence we may assume  w.l.o.g.\ that  $d_H(u) > 2$, and thus, by~(\ref{x_5}), $N_H(u)=\{x_1,x_3,x_5,x_6\}$. 

Recall that by~(\ref{x_7}), $vx_1$ is not an edge. 
If $v$ has no neighbor in $\{x_2,x_3\}$, then $x_5x_1x_2x_3x_7vQ_{vu}u$ is a 3-cycle. 
So $v$ has at least one neighbor in $\{x_1,x_2,x_3\}$ and by~(\ref{x_7}). 
If $N_H(v) \subseteq \{ \{x_2,x_7\}, \{x_3,x_7\}\}$, then $ux_5x_2x_3x_4x_7vQ_{uv}u$ is a 3-cycle. 
So by~(\ref{x_7}), $N_H(v) = \{x_2,x_4,x_6,x_7\}$

If $uv$ is not an edge, then $ux_5x_2vx_7x_4x_3u$ is a 3-cycle with chords $x_2x_3$, $x_3x_7$ and $vx_4$, a contradiction.
So we may assume that $uv$ is an edge. 
Let $u'$ and $v'$ be fathers of respectively $u$ and $v$ and note that, since $u'$ and $v'$ are in $S_{i-2}(z)$, they have no neighbors in $H$. 
Therefore $u'ux_5x_2x_3x_7vv'Q_{v'u'}u'$ is a 3-cycle, with chords $uv$, $ux_3$ and $vx_2$, a contradiction.
\end{proof}


\begin{lemma}\label{lem:butterfly}
Let $G$ be a ($K_4$, 3-cycle)-free graph and let $z$ be a vertex of $H$. Then, for every integer $i$, $G[S_i(z)]$ is butterfly-free.
\end{lemma}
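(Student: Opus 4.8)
The plan is to follow exactly the same strategy used for Lemma~\ref{lem:nodragonfly}, transposed to the butterfly. Assume for contradiction that some $G[S_i(z)]$ contains an induced butterfly $H$ with the vertex labels $y_1,\dots,y_5$ of Figure~\ref{figDragonfly}. The butterfly has a symmetry exchanging the two ``wings'' $\{y_1,y_4\}$ and $\{y_3,y_5\}$ while fixing $y_2$, and this symmetry should be exploited throughout. Pick a father $u$ of $y_1$ and a father $v$ of $y_3$ (the two ``outermost'' degree-$1$ vertices of the two triangles $y_1y_2y_4$ and $y_2y_3y_5$); let $Q_{uv}$ be a unimodal path joining them, whose interior has no neighbor in $S_i(z)$, hence none in $H$.

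The first main step is to classify $N_H(u)$ (and by symmetry $N_H(v)$), the analogue of Claims (\ref{x_5}) and (\ref{x_7}). Using that $G$ is $K_4$-free and $3$-cycle-free one rules out most neighborhoods: $u$ cannot be adjacent to two vertices of a triangle (that would force a $K_4$ together with $y_1$ or via the triangle itself), and whenever $u$ picks up a ``bad'' combination of neighbors, closing up an appropriate path of the butterfly through $u$ produces a cycle with exactly three chords. I expect to be left with a short list such as $N_H(u)\in\{\{y_1\},\{y_1,y_2\},\{y_1,y_4\},\{y_1,y_3\},\{y_1,y_5\},\dots\}$ together with possibly one ``large'' neighborhood; the precise list will come out of the case analysis, and the symmetric statement holds for $v$ with $y_3$ in place of $y_1$.

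The second step is the endgame: go through the (few) combinations of $N_H(u)$ and $N_H(v)$ and, in each case, exhibit an induced cycle in $H\cup Q_{uv}$ (or in $H\cup Q_{uv}$ together with fathers $u',v'$ of $u,v$, as in the last paragraph of the proof of Lemma~\ref{lem:nodragonfly}) that has exactly three chords. The typical pattern is: take a long path through the butterfly from $y_1$ to $y_3$ avoiding the relevant chords, prepend $u$ and $Q_{uv}$ and append $v$; the edges $uy_1$-region and $vy_3$-region together with one interior chord of the butterfly give exactly three chords. When $uv$ is itself an edge, one passes to fathers $u',v'$ (which have no neighbor in $H$) to re-open the cycle, exactly as done at the end of Lemma~\ref{lem:nodragonfly}. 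As there, Claims (\ref{x_5})-style lists also force $u\neq v$.

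The main obstacle, as usual in these arguments, is bookkeeping rather than depth: one must make sure that in every case the constructed cycle is genuinely induced and has \emph{exactly} three chords — not two (which would be harmless) and not four or more (also harmless, but then one has proved nothing). Concretely, the delicate part is handling the ``large neighborhood'' cases for $u$ and $v$ simultaneously, where many extra edges are present and one has to choose the cycle carefully so that precisely three of the present edges become chords; this is where the $K_4$-freeness (to kill candidate chords) and the earlier structural restrictions on $N_H(u),N_H(v)$ must be used most sharply, together with passing to fathers when $uv\in E(G)$.
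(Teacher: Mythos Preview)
Your plan is the paper's plan: assume an induced butterfly $H$ in some level, pick fathers of two symmetric vertices of $H$, classify their possible neighborhoods in $H$, and then close a $3$-cycle through $H\cup Q_{uv}$ (passing to fathers $u',v'$ when $uv\in E(G)$). The one substantive difference is the choice of anchor vertices: the paper takes fathers $u,v$ of $y_4$ and $y_5$ (the degree-$3$ vertices of $H$), not of $y_1$ and $y_3$ as you propose. With that choice the classification is strikingly short --- exactly three possibilities each, namely $N_H(u)\in\{\{y_4\},\{y_2,y_4\},\{y_1,y_3,y_4\}\}$ and symmetrically for $v$ --- so the endgame is only a handful of cases. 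Your speculative list for a father of $y_1$ is off: for instance $\{y_1,y_2\}$, $\{y_1,y_4\}$, $\{y_1,y_3\}$ and $\{y_1,y_5\}$ are all killed immediately by a $3$-cycle inside $H\cup\{u\}$ (e.g.\ $uy_1y_4y_5y_3y_2u$ has chords $y_1y_2,y_2y_4,y_2y_5$ when $N_H(u)=\{y_1,y_2\}$), so the surviving list and the subsequent case split look different from what you wrote, though the argument still goes through. Anchoring at $y_4,y_5$ is the cleaner bookkeeping choice; otherwise your outline matches the paper.
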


\begin{proof}
Assume by way of contradiction that there exists an integer $i$ such that $G[S_i(z)]$ contains a butterfly as an induced subgraph. 
Name it $H$ and refer to Figure \ref{figDragonfly} for the name of its vertices. 
Let $u$ be a father of $y_4$ and $v$ be a father of $y_5$.

The two next claims examine what are the possible neighborhoods of $u$ and $v$ in $H$.

\begin{claim} \label{y_4}
$N_H(u) \in \{ \{y_4\}, \{y_2,y_4\}, \{y_1,y_3,y_4\} \}$
\end{claim}

\begin{proofclaim}
Assume first that $|N_H(u)|=2$.
If $N_H(u) = \{y_1,y_4\}$, then $uy_1y_2y_3y_5y_4u$ is a 3-cycle, a contradiction.
If $N_H(u) = \{y_3,y_4\}$, then $uy_4y_1y_2y_5y_3u$ is a 3-cycle, a contradiction.
If $N_H(u) = \{y_4,y_5\}$, then $uy_4y_1y_2y_3y_5u$ is a 3-cycle, a contradiction.
So, if $|N_H(u)|=2$, then $N_H(u)= \{y_2,y_4\}$ and one of the outcome of the theorem holds.

Assume now that $|N_H(u)|=3$. 
If $N_H(u)=\{y_2,y_3,y_4\}$, then $uy_4y_2y_5y_3u$ is a 3-cycle, a contradiction.
If $N_H(u)=\{y_1,y_3,y_4\}$, then one of the outcome of the theorem holds.
So, since $G$ is $K_4$-free, $u$ has to see $y_5$.
The third neighbor of $u$ is thus $y_1$ or $y_3$ and, by symmetry, we may assume that it is $y_1$.
Therefore $uy_4y_1y_2y_5u$ is a 3-cycle, a contradiction.

So we may assume that $|N_H(u)| \ge 4$. 
Since $G$ is $K_4$-free,  $|N_H(u)| = 4$ and $N_H(u) =\{y_1,y_3,y_4,y_5\}$. Thus $uy_1y_4y_2y_5u$ is a 3-cycle, a contradiction.
\end{proofclaim}

\begin{claim}\label{y_5}
$N_H(v) \in \{ \{y_5\}, \{y_2,y_5\}, \{y_1,y_3,y_5\} \}$
\end{claim}

\begin{proofclaim}
By obvious symmetries in $H$, the proof is the same as the proof of (\ref{y_4}).
\end{proofclaim}

\medskip

Note that by claims (\ref{y_4}) and  (\ref{y_5}), $u \neq v$.
In the rest of the proof we show that, whatever the neighborhoods of $u$ and $v$ are, one can find a 3-cycle in $H \cup Q_{uv}$ (recall that $Q_{uv}$ denote a unimodal path linking $u$ and $v$).

\vspace{1.5ex}

\noindent {\bf Case 1} : $N_H(v) = \{y_5\}$.
\\
If $N_H(u)=\{y_4\}$ then $uy_4y_1y_2y_3y_5vQ_{vu}u$ is a 3-cycle, 
a contradiction.
So  $N_H(u) \in \{ \{y_2,y_4\}, \{y_1,y_3,y_4\} \}$ and thus $uy_4y_2y_3y_5vQ_{vu}u$ is a 3-cycle, a contradiction. This completes the proof in case 1.

\vspace{1.5ex}

So from now on, we may assume that $N_H(v) \neq \{y_5\}$ and, by symmetry, that $N_H(u) \neq \{y_4\}$.

\vspace{1.5ex}

\noindent {\bf Case 2} : $N_H(v) = \{y_2,y_5\}$.
\\
If $N_H(u)=\{y_2,y_4\}$ then $uy_4y_2y_5vQ_{vu}u$ is a 3-cycle. Otherwise, we may assume that $N_H(u)=\{y_1,y_3,y_4\}$ and then $uy_1y_2y_3y_5vQ_{vu}$ is a 3-cycle, a contradiction.

\vspace{1.5ex}

So from now on, we may assume that $N_H(v) \neq \{y_2,y_5\}$ and by symmetry, $N_H(u) \neq \{y_2,y_4\}$. 
This leads to the following last case.

\vspace{1.5ex}

\noindent {\bf Case 3} : $N_H(v) = \{y_1, y_3,y_5\}$ and $N_H(u)=\{y_1,y_3,y_4\}$.
\\
If $uv$ is not an edge then $uy_1y_4y_5vy_3u$ is a 3-cycle, with chords $uy_4$, $vy_1$ and $y_3y_5$, a contradiction.
So $uv$ is an edge.
Let $u'$ and $v'$ be fathers of respectively $u$ and $v$. If $u'$ (or $v'$) is adjacent to both $u$ and $v$ we assume that $u'=v'$.
Note that since $u'$ and $v'$ are in $S_{k-2}$ they have no neighbors in $H$.
Therefore $u'uy_1y_2y_3vv'Q_{u'v'}u'$ is a 3-cycle, with chords $uv$, $uy_3$ and $vy_1$, a contradiction.
This completes the proof in Case 3.

\end{proof}

We can now give the proof of Theorem~\ref{lemmaClique3} recall that it states that every ($K_4$,3-cycle)-free graph  has chromatic number at most $4c$.

\medskip

\begin{proof}
Assume by contradiction that there exists a ($K_4$, $3$-cycle)-free graph $G$ that satisfies $\chi(G) \geq 4c+1$ and let $z$ be a vertex of $G$. 
By Remark~\ref{scott}, there exists an integer $k$ such that $S_k(z,G)$ has  chromatic number at least $2c+1$. 
So, by Lemma \ref{lem:dragonorbutter}, it must contain a dragonfly or a butterfly, which is a contradiction with Lemma \ref{lem:nodragonfly} or Lemma  \ref{lem:butterfly}.
\end{proof}


\subsection{Clique number at least $4$ : proof of Theorem~\ref{3-cycleclique4}} \label{clique4}

\textit{Recall that Theorem \ref{lemmaClique3} states that every (3-cycle)-free graph  has chromatic number at most $\max (4c, \omega(G) +1)$.}

\medskip

\begin{proof}
Consider by contradiction the smallest (in number of vertices) graph $G \in \m C_3$ such that $\chi (G) > \max (\omega(G) +1, 4c)$. By Theorem~\ref{3cycleclique2} and~\ref{lemmaClique3}, we have $\omega(G) \ge 4$.
Put $\omega(G)=\omega$. Let $K$ be a largest clique of $G$ and denote by $x_1, \dots, x_{\omega}$ the vertices of $K$.

\begin{claim} \label{mindeg}
Every vertex of $G$ is of degree at least $\omega +1$.
\end{claim}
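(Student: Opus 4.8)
The plan is to use the minimality of $G$ via a standard degeneracy-type argument. Suppose for contradiction that some vertex $v$ has degree at most $\omega$ in $G$. Consider $G' = G \setminus v$. Since $\mathcal C_3$ is hereditary, $G' \in \mathcal C_3$, and $\omega(G') \le \omega(G) = \omega$; also $G'$ has fewer vertices than $G$. By minimality of $G$, we get $\chi(G') \le \max(\omega(G')+1, 4c) \le \max(\omega+1, 4c)$. Now take a proper coloring of $G'$ with $\max(\omega+1,4c)$ colors. Since $v$ has at most $\omega$ neighbors, at most $\omega$ colors are forbidden on $v$, and $\max(\omega+1,4c) \ge \omega+1 > \omega$, so there is a free color to extend the coloring to $v$. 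This gives $\chi(G) \le \max(\omega+1, 4c)$, contradicting the choice of $G$. Hence every vertex has degree at least $\omega+1$.

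The only subtlety I would flag is making sure the bound on $\chi(G')$ really applies: we need $\omega(G') \ge 4$ as well, or rather we need $G'$ to not be a smaller counterexample. But minimality handles this cleanly: either $\chi(G') \le \max(\omega(G')+1, 4c)$ by the minimality hypothesis (if $G'$ is not a counterexample) — and in fact $G'$ cannot be a counterexample precisely because $G$ was chosen smallest. So $\chi(G') \le \max(\omega(G')+1, 4c) \le \max(\omega+1, 4c)$, using $\omega(G') \le \omega$.

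There is essentially no hard part here; this is the routine "minimum degree" reduction that opens most colorability proofs of this shape, and it is exactly the kind of claim one proves first in order to later exploit the structure around a largest clique $K$ (for instance to find unimodal paths or to bound how vertices attach to $K$). I would write it in three or four lines.

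\begin{proofclaim}
Suppose not, and let $v$ be a vertex of $G$ with $d(v) \le \omega$.
Let $G' = G \sm \{v\}$.
Then $G' \in \m C_3$ (as $\m C_3$ is hereditary), $\omega(G') \le \omega$, and $G'$ has fewer vertices than $G$.
By minimality of $G$, $\chi(G') \le \max(\omega(G')+1, 4c) \le \max(\omega+1, 4c)$.
Fix a proper coloring of $G'$ with $\max(\omega+1,4c)$ colors.
Since $d(v) \le \omega < \omega + 1 \le \max(\omega+1, 4c)$, some color is not used on $N(v)$, so the coloring extends to a proper coloring of $G$ with $\max(\omega+1,4c)$ colors.
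Hence $\chi(G) \le \max(\omega+1, 4c)$, contradicting the choice of $G$.
\end{proofclaim}
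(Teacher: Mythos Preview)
Your proof is correct and takes exactly the same approach as the paper: assume a vertex $v$ has degree at most $\omega$, use minimality to color $G\setminus\{v\}$ with $\max(\omega+1,4c)$ colors, and extend to $v$. The paper's version is simply more terse.
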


\begin{proofclaim}
If a vertex $v$ of $G$ is of degree at most $\omega$, then by minimality of $G$ we can color $G \sm \{v\}$ with $max (\omega(G) +1, 4c)$ colors and extend the coloring to $G$, a contradiction.
\end{proofclaim}

\begin{claim} \label{cliquecutset}
$G$ does not admit clique cutsets.
\end{claim}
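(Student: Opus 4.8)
$G$ does not admit clique cutsets.

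The plan is to run the standard clique-cutset argument, in the same style as the corresponding step in the proof of Theorem~\ref{2-cycle}, but now we must be careful because $G$ can have large cliques, so we cannot simply color both pieces with a constant number of colors and glue — we need the gluing to respect the bound $\max(\omega(G)+1, 4c)$. First I would suppose, for contradiction, that $G$ has a clique cutset $S$. Let $C_1$ be a connected component of $G \sm S$ and let $C_2 = V(G) \sm (C_1 \cup S)$, and set $G_1 = G[C_1 \cup S]$ and $G_2 = G[C_2 \cup S]$. Both $G_1$ and $G_2$ are induced subgraphs of $G$, hence they are in $\m C_3$, and each has strictly fewer vertices than $G$ (since $C_2 \neq \emptyset$ and $C_1 \neq \emptyset$). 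Moreover $\omega(G_i) \le \omega(G) = \omega$ for $i = 1, 2$.

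By minimality of $G$, for $i = 1, 2$ we get a proper coloring $\phi_i$ of $G_i$ using at most $\max(\omega+1, 4c)$ colors. Now the key observation is that $S$ induces a clique in both $G_1$ and $G_2$, so in each $\phi_i$ the vertices of $S$ receive $|S|$ pairwise distinct colors. Since $S$ is a clique, $|S| \le \omega$. After permuting the color set (which does not affect properness), I can assume $\phi_1$ and $\phi_2$ agree on $S$: indeed both assign $|S|$ distinct colors to the $|S|$ vertices of $S$, and there is a color permutation sending $\phi_2|_S$ to $\phi_1|_S$; apply it to all of $\phi_2$. Then $\phi_1 \cup \phi_2$ is a well-defined proper coloring of $G$ (any edge lies entirely in $G_1$ or entirely in $G_2$, since $S$ separates $C_1$ from $C_2$), using at most $\max(\omega+1, 4c)$ colors. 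This contradicts $\chi(G) > \max(\omega+1, 4c)$.

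I expect no real obstacle here; the only point requiring a word of care is that $|S| \le \omega$ so that a single color permutation suffices to reconcile the two colorings on $S$, and that every edge of $G$ really is confined to one of $G_1, G_2$ — both are immediate from $S$ being a clique cutset. The argument also implicitly uses Claim~\ref{mindeg} nowhere, so it is self-contained given only the minimality of $G$ and the hereditary nature of $\m C_3$.
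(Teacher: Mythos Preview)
Your proof is correct and follows essentially the same approach as the paper: split along the clique cutset, color each side by minimality, and glue the colorings after permuting so they agree on the cutset. You spell out more carefully than the paper why $\omega(G_i)\le\omega$ and why the permutation on $S$ is possible, but the argument is the standard one and matches the paper's.
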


\begin{proofclaim}
Assume by contradiction that $G$ has a clique cutset A. 
Let $C_1$ be a connected component of $G \sm A$, and $C_2$ the union of all others components.
By minimality of $G$, we may color $G[C_i \cup K]$ with $max (\omega(G) +1, 4c)$ colors for $i=1,2$. 
By using the same colors for the vertices of $A$ in the coloring of $G[C_1 \cup K]$ and $G[C_2 \cup K]$, we can extend the coloring to $G$, a contradiction.
\end{proofclaim}

\begin{claim} \label{degdansK}
If $u \in N(K)$, then $d_K(u) = 1$ or $\omega -1$.
\end{claim}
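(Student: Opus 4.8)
The plan is to suppose for contradiction that some $u \in N(K)$ has $2 \le d_K(u) \le \omega-2$, and to build a $3$-cycle directly inside $G[K \cup \{u\}]$ together with one extra vertex. First I would fix $u$ and relabel so that $u$ is adjacent to $x_1, \dots, x_t$ and non-adjacent to $x_{t+1}, \dots, x_\omega$, with $2 \le t \le \omega-2$; note that $\omega \ge 4$ means there are at least two neighbors and at least two non-neighbors of $u$ in $K$. Since $G$ is $K_4$-free-free only in the weak sense here --- actually $G$ may contain large cliques --- the obstruction is not a $K_4$; rather, I want to exhibit a cycle through $u$ and $K$ with exactly three chords.

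The key construction: take the cycle $C = u\, x_1\, x_{t+1}\, x_2\, x_{t+2}\, \dots$ --- more precisely, interleave neighbors and non-neighbors of $u$ so that consecutive vertices of $C$ are always adjacent (any two vertices of $K$ are adjacent, and $u$ is adjacent to the $x_i$ with $i \le t$), and so that $u$ sees exactly two vertices of $C$ other than via cycle-edges only if... Let me instead do the cleanest version: consider $C = u\, x_1\, x_{t+1}\, x_2\, x_{t+2}\, x_3 \, u$ when $t \ge 3$ and $\omega - t \ge 2$, i.e.\ the $6$-cycle on $\{u, x_1, x_2, x_3, x_{t+1}, x_{t+2}\}$ in the cyclic order $u, x_1, x_{t+1}, x_2, x_{t+2}, x_3$. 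Its edges are $ux_1, x_1x_{t+1}, x_{t+1}x_2, x_2x_{t+2}, x_{t+2}x_3, x_3u$ — all present. Its chords: among the $x_i$ all pairs are edges, giving chords $x_1x_2, x_1x_3, x_2x_3$ (the "long diagonals" within the clique part) plus $x_1x_{t+2}$, $x_{t+1}x_{t+2}$, $x_{t+1}x_3$; that is already too many. So the clean choice must keep the clique part small: use only \emph{two} neighbors and \emph{two} non-neighbors, i.e.\ the $4$-cycle $u\, x_1\, x_3\, u$ is too short. The right object is $C = u\, x_1\, x_3\, x_2\, u$? That has a chord $x_1 x_2$ and is a $K_4$ minus an edge, a diamond — not yet a $3$-cycle.

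The honest main obstacle is choosing the cycle length so the chord count is \emph{exactly} three, and this is where I expect to spend the real work. The device I would use is the standard one in this paper: let $C_0$ be a \emph{shortest} cycle through $u$ and $K$ using the edges $ux_i$ ($i \le t$) and clique edges, chosen to minimize the number of chords; then argue that if $2 \le t \le \omega-2$ one can always realize chord-count exactly $3$ by taking the cycle $u\, x_1\, x_{t+1}\, x_2\, u$ (a $4$-cycle) when this has $\le 1$ chord, or by padding with non-neighbors $x_{t+1}, \dots$ to absorb excess chords — concretely, $u\, x_1\, x_{t+1}\, x_{t+2}\, \dots\, x_{\omega}\, x_2\, u$ has, among the $x_i$, all $\binom{\omega - t + 2}{2}$ clique-pairs as potential chords, so that is the wrong direction. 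I now think the correct clean statement is: the cycle $D = u\, x_{t+1}\, x_1\, x_{t+2}\, x_2\, u$ on $5$ vertices $\{u, x_1, x_2, x_{t+1}, x_{t+2}\}$ in that cyclic order has edges $ux_{t+1}$?? — but $ux_{t+1}$ is a non-edge. So $u$ must be flanked on $C$ by two of its neighbors $x_1, x_2$, and the rest of $C$ is a path in $K$ from $x_1$ to $x_2$; if that path is $x_1\, x_2$ we get nothing, if it is $x_1\, x_3\, x_2$ we get the diamond, if it is $x_1\, x_3\, x_4\, x_2$ we get the $5$-cycle $u x_1 x_3 x_4 x_2 u$ whose chords are $x_1x_4, x_1x_2, x_3x_2$ — exactly three chords, and \emph{none} of them involves $u$ since $u \not\sim x_3, x_4$ only matters for... wait $u \sim x_2$ already a cycle-edge. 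So $u x_1 x_3 x_4 x_2 u$ is a cycle with exactly three chords $x_1x_2, x_1x_4, x_3x_2$: a $3$-cycle! This needs $t \ge 2$ (to have $x_1, x_2$) and $\omega - t \ge 2$ is \emph{not} needed; we need two further clique vertices $x_3, x_4$, i.e.\ $\omega \ge 4$, which holds. And it needs $x_3, x_4 \notin N(u)$? No — it needs that $ux_3$ and $ux_4$ are \emph{not} edges, otherwise extra chords appear; so in fact this works precisely when $u$ has a non-neighbor pair, i.e.\ $d_K(u) \le \omega - 2$; combined with $d_K(u) \ge 2$ we relabel so $x_1, x_2 \in N(u)$, $x_3, x_4 \notin N(u)$, and $u\, x_1\, x_3\, x_4\, x_2\, u$ is a cycle with exactly the three chords $x_1x_2$, $x_1x_4$, $x_3x_2$ — contradicting $G \in \m C_3$. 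Hence $d_K(u) \in \{1, \omega-1\}$ (the value $d_K(u)=\omega$ is excluded since $K$ is a largest clique and $u \in N(K)$ means $u \notin K$, so adding $u$ would give $K_{\omega+1}$). I would write up exactly this: the only real care is verifying the chord list of $u\, x_1\, x_3\, x_4\, x_2\, u$, which is immediate since the four $x_i$ form a clique and $u$ sees only $x_1, x_2$ among them.
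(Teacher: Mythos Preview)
Your proposal is correct and, once you cut through the exploratory detours, lands on exactly the same construction as the paper: with $x_1,x_2\in N_K(u)$ and $x_3,x_4\notin N_K(u)$, the $5$-cycle $u\,x_1\,x_3\,x_4\,x_2\,u$ has precisely the three chords $x_1x_2$, $x_1x_4$, $x_2x_3$. The paper's proof is just this single sentence, so in your write-up you should discard the false starts and present only this cycle and its chord list.
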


\begin{proofclaim}
Assume by way of contradiction that $u$ has at least two neighbors in $K$, say $x_1$ and $x_2$, and at least two non-neighbors, say $x_3$ and $x_4$. Then $ux_1x_3x_4x_2u$ is a 3-cycle, with chords $x_1x_2$, $x_1x_4$ and $x_2x_3$, a contradiction.
\end{proofclaim}

Define $S_i=\{u \in N(K) | N_K(u)=\{x_i\}\}$, $T_i=\{u \in N(K) | N_K(u)=V(K) \sm \{x_i\}\}$
and, for all $i = 1, \dots, \omega$, $U_i=S_i \cup T_i$.

An $uv$-path $P$ is an \textit{N(K)-connection} if no vertex of $P$ is in $K$ and $N(K) \cap P = \{u,v\}$. Note that vertices of $\mathring{P}$ have no neighbors on $K$ and that an $N(K)$-connection can be  an edge.

\begin{claim} \label{connection}
Let $P$ be an $N(K)$-connection with endvertices $u$ and $v$. 
Then there exists an integer $i$ such that $\{u,v\} \subseteq U_i$ and $\{u,v\} \not\subseteq T_i$.
\end{claim}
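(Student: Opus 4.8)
The plan is to analyze the possible attachments of an $N(K)$-connection $P = u \dots v$ to the clique $K$, using Claim~\ref{degdansK} which tells us that every vertex of $N(K)$ (in particular $u$ and $v$) lies in exactly one of the sets $S_i$ or $T_i$. So there are integers $i,j$ with $u \in U_i$ and $v \in U_j$, and I need to show $i = j$, and moreover that $u$ and $v$ are not both in $T_i$. The main tool is that a short cycle built from $P$ together with vertices of $K$ (which is a clique, hence contributes lots of chords) must \emph{not} be a $3$-cycle, since $G \in \m C_3$; the fact that $\omega \ge 4$ gives enough room in $K$ to pick the vertices needed.

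First I would dispose of the case where $P$ is a single edge $uv$: here I would argue directly by picking suitable vertices of $K$ adjacent/non-adjacent to $u$ and $v$ to form a cycle whose chord count forces $i=j$. Then, for $P$ of length at least $2$, the interior $\mathring P$ has no neighbor in $K$, which is exactly what makes the cycle's chords come only from edges within $K$ and from $u,v$ to $K$. The generic construction: if $u \in U_i$ and $v \in U_j$ with $i \ne j$, form a cycle $u P v x_? \dots x_? u$ routing through two or three vertices of $K$ chosen so that the chords are precisely the clique edges among those chosen vertices plus at most the "short chord'' $x_i x_j$ or similar. Counting, such a cycle will typically be a $3$-cycle, a contradiction. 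I would organize the argument by the four subcases according to whether $u \in S_i$ or $T_i$ and $v \in S_j$ or $T_j$: the $S_i$--$S_j$ case is cleanest (route $uPv x_j x_k x_i u$ for a third clique vertex $x_k$, getting chords $x_ix_j, x_ix_k, x_jx_k$, a $3$-cycle); the $S_i$--$T_j$ and $T_i$--$T_j$ cases need a bit more care since $v$ (or both) sees almost all of $K$, but one still picks a small routing through the one or two vertices of $K$ that $v$ misses.

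Once $i = j$ is established, the remaining point is that $u,v$ are not both in $T_i$: if $u, v \in T_i$, then both are adjacent to every vertex of $K$ except $x_i$; then taking two further vertices $x_k, x_l \in V(K) \sm \{x_i\}$ (possible since $\omega \ge 4$), the cycle $u P v x_k x_l u$ has chords $ux_l$, $v x_k$ and $x_k x_l$, hence is a $3$-cycle, a contradiction. (The same picture when $P$ is an edge: $u v x_k x_l u$ is a $4$-cycle with a chord... one has to be slightly careful and instead use $u v x_k x_i x_l u$ or pass through three missed/unmissed vertices so that exactly three chords appear.)

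The hard part, I expect, will be bookkeeping: making sure in each subcase that the cycle I write down is genuinely induced \emph{as a cycle} (i.e.\ that the only chords are the ones I claim) — this requires using that $\mathring P$ has no neighbor in $K$, that $K$ is a clique so I know exactly which of the chosen $x$'s are mutually adjacent, and that $u,v$ have the prescribed neighborhoods in $K$ and no others. The case $\omega = 4$ versus $\omega \ge 5$ may need to be separated if some construction requires three or four distinct vertices of $K$ beyond $x_i, x_j$; with $\omega \ge 4$ there should always be enough, but one must check the tightest subcase ($T_i$--$T_j$, where only $\omega - 2$ common neighbors of $u$ and $v$ are available among $K$) does not run out of vertices. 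Modulo this case analysis, each subcase is a one-line exhibition of a $3$-cycle contradicting $G \in \m C_3$.
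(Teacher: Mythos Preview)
Your overall approach is the same as the paper's: case-split on whether $u,v$ lie in $S_i$ or $T_i$, build a cycle from $P$ together with a few vertices of $K$, and check it is a $3$-cycle. There is no need to treat the case where $P$ is a single edge separately, and the worry about $\omega=4$ versus $\omega\ge5$ is unfounded: four distinct indices always suffice.

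However, your concrete cycles are wrong because you systematically confuse cycle edges with chords. In your $S_i$--$S_j$ case, the cycle $uPvx_jx_kx_iu$ uses $x_jx_k$ and $x_kx_i$ as \emph{edges of the cycle}; the only chord inside $K$ is $x_ix_j$, and since $u,v\in S_i\cup S_j$ contribute no further chords, this is a $1$-cycle, not a $3$-cycle. Likewise in your $T_i$--$T_i$ case, the cycle $uPvx_kx_lu$ has $vx_k$, $x_kx_l$, $x_lu$ as cycle edges; the genuine chords are $ux_k$ and $vx_l$, giving a $2$-cycle. The fix in both cases is to route through one more clique vertex (the paper uses four distinct indices $i,j,k,l$ throughout): for $S_i$--$S_j$ take $ux_ix_kx_lx_jvPu$ (chords $x_ix_j,x_ix_l,x_jx_k$), and for $T_i$--$T_i$ route through $x_i$, e.g.\ $ux_jx_ix_kvPu$ (chords $ux_k,vx_j,x_jx_k$). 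Once you insert this extra vertex in each subcase, your argument goes through and matches the paper's.
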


\begin{proofclaim}
Let $i$, $j$, $k$ and $l$ be 4 distinct integers in $\{1, \dots, \omega \}$. Such integers exist since $\omega\ge 4$.
\\
If $u \in T_i$ and $v \in T_j$, then $ux_jx_kx_ivPu$ is a 3-cycle, with chords $ux_k$, $vx_k$ and $x_ix_j$.
\\
If $u \in S_i$ and $v \in T_j$, then $ux_ix_kx_lvPu$ is a 3-cycle, with chords $vx_i$, $vx_k$ and $x_ix_l$.
\\
If $u \in S_i$ and $v \in S_j$, then $ux_ix_kx_lx_jvPu$  is a 3-cycle, with chords $x_ix_j$, $x_ix_l$ and $x_jx_k$.
\\
If $u \in T_i$ and $v \in T_i$, then $ux_jx_ix_kvPu$ is a 3-cycle, with chords $ux_j$, $vx_k$ and $x_jx_k$.
\end{proofclaim}

\begin{claim} \label{U_i}
There is a unique $i \in \{1, \dots, \omega\}$ for which $U_i \neq \emptyset$.
\end{claim}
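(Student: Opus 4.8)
The plan is to show that the sets $U_i$ live in a very rigid configuration: since $G$ has no clique cutset (by~(\ref{cliquecutset})) and every $N(K)$-connection forces its two endpoints into a common $U_i$ (by~(\ref{connection})), the pieces $U_1, \dots, U_\omega$ cannot be ``glued together'' by any path avoiding $K$, so they behave like distinct components of $G \setminus K$. If two of them, say $U_i$ and $U_j$ with $i \neq j$, were both non-empty, I would derive a clique cutset, contradicting~(\ref{cliquecutset}). More precisely, pick $u \in U_i$ and $v \in U_j$. Since $G$ is connected there is a $uv$-path, and shortening it we may extract an $N(K)$-connection between some vertex of $N(K)$ and another, or a path that reaches $K$; in either case, tracking the first and last vertices of $N(K)$ met along subpaths, repeated application of~(\ref{connection}) shows that $u$ and $v$ can only be joined through $K$. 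This says precisely that $K$ (or a suitable sub-clique of it) separates $U_i$ from $U_j$.

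The cleanest way to organize this: let $W = \bigcup_i U_i = N(K)$, and consider the graph $G \setminus K$. A vertex of $W$ in $U_i$ can only be connected inside $G \setminus K$ to other vertices of $U_i$ (any path in $G\setminus K$ between $W$-vertices, restricted to its first and last $W$-vertices, is an $N(K)$-connection, and intermediate $W$-vertices chain these together, all forcing the same index $i$ by~(\ref{connection})). Hence the connected components of $G \setminus K$ partition according to which $U_i$ they touch, and a component touching $U_i$ touches no other $U_j$. If at least two of the $U_i$ are non-empty, then $K$ is a cutset separating the components meeting $U_i$ from those meeting $U_j$; since $K$ is a clique, this is a clique cutset, contradicting~(\ref{cliquecutset}). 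Therefore at most one $U_i$ is non-empty. That at least one is non-empty follows because $\omega(G) \ge 4$ means $K$ is not all of $G$ (else $\chi(G) = \omega(G)$), and $G$ is connected (no clique cutset $\Rightarrow$ connected), so $N(K) \neq \emptyset$, forcing some $U_i \neq \emptyset$ by~(\ref{degdansK}).

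The main obstacle I anticipate is the bookkeeping in the ``chaining'' argument: one must be careful that a path in $G \setminus K$ between two $W$-vertices really decomposes into $N(K)$-connections sharing endpoints in $W$, and that~(\ref{connection}) can be iterated along this decomposition to conclude a single common index. The subtlety is that an $N(K)$-connection is required to have \emph{no} internal vertex in $N(K)$, so one should cut the path at every vertex of $N(K)$ it passes through; each resulting segment is an $N(K)$-connection (its internal vertices, being in $G \setminus K$ and not in $N(K)$, have no neighbour on $K$), and consecutive segments share a $W$-endpoint, so~(\ref{connection}) propagates the index. The rest is routine: assemble the partition of $V(G \setminus K)$ by component, observe each component sees exactly one $U_i$, and invoke~(\ref{cliquecutset}) to kill the possibility of two non-empty $U_i$'s.
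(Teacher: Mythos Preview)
Your argument is correct and follows essentially the same approach as the paper: both use~(\ref{connection}) to show that no two distinct $U_i$'s can be connected in $G\setminus K$, then invoke~(\ref{cliquecutset}). The paper is simply more economical: rather than decomposing an arbitrary path into chained $N(K)$-connections, it takes a \emph{minimal} path in $G\setminus K$ from $U_i$ to $U_j$, observes that minimality forces it to be a single $N(K)$-connection, and obtains the contradiction with~(\ref{connection}) in one step. Your chaining argument and the explicit existence argument are fine but unnecessary once you use minimality and note (as the paper does just after the claim) that~(\ref{mindeg}) forces $N(K)\neq\emptyset$.
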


\begin{proofclaim}
Let us argue by way of contradiction.
By (\ref{cliquecutset}), $G \sm K$ is connected, so there exists a  path $P$ in $G \sm K$ from $U_i$ to $U_j$ such that $i \neq j$. Choose $P$ subject to its minimality. It is clear that $P$ is an $N(K)$-connection and thus it contradicts (\ref{connection}).
\end{proofclaim}

By (\ref{U_i}), we may assume w.l.o.g. that $U_1 \neq \emptyset$ and, for any $i \neq 1$, $U_i = \emptyset$. 
Moreover, $S_1$ and $T_1$ both contain at least two vertices, otherwise $x_1$ or $x_2$ have degree at most $\omega$, a contradiction to (\ref{mindeg}).

\medskip

We say that a vertex $x$ is \emph{complete} to a set of vertex $S$ is $x$ is adjacent to every vertex in $S$.

\begin{claim} \label{complete}
If there exists an $N(K)$-connection from a vertex of $T_1$ to a vertex $s_1 \in  S_1$, then $s_1$ is complete to $T_1$.
\end{claim}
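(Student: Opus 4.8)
The plan is to argue by contradiction: assuming $s_1$ has a non-neighbour in $T_1$, I will exhibit a cycle with exactly three chords, contradicting $G\in\m C_3$. Throughout I use two facts available from the setup. First, any two vertices of $T_1$ are non-adjacent: otherwise, together with $K\sm\{x_1\}$ they would span a clique larger than $K$. Second, $|T_1|\ge 2$ (already noted just before the claim), and the interior of any $N(K)$-connection lies outside $N(K)$, hence has no neighbour in $K$.

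\medskip

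For the setup, among all $N(K)$-connections from a vertex of $T_1$ to $s_1$ (one exists by hypothesis) fix a shortest one, $P=p_1p_2\cdots p_k$ with $p_1\in T_1$ and $p_k=s_1$; minimality makes $P$ chordless (a chord would yield a shorter connection), and $k\ge 2$ since $T_1\cap S_1=\emptyset$. Next I pick $t'\in T_1\sm\{p_1\}$ with $t's_1\notin E(G)$: such a vertex exists, for if $s_1$ were adjacent to every vertex of $T_1\sm\{p_1\}$ then, as $|T_1|\ge 2$, the edge from $s_1$ to one such vertex would be a one-edge $N(K)$-connection to $s_1$, forcing $P$ to be a single edge and $p_1s_1\in E(G)$, whence $s_1$ would be complete to $T_1$, a contradiction. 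This argument also shows $k\ge 3$ when $p_1s_1\notin E(G)$, while $k=2$ (so $P$ has no interior vertex) when $p_1s_1\in E(G)$. Finally let $x_1$ be the unique neighbour of $s_1$ in $K$ and let $x_2,x_3$ be two further vertices of $K$ (possible as $\omega\ge 4$); then $x_1x_2x_3$ is a triangle, $s_1$ is adjacent to $x_1$ but not to $x_2,x_3$, and $p_1,t'$ are adjacent to $x_2,x_3$ but not to $x_1$.

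\medskip

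The core step is to examine the cycle $D=t'x_2p_1p_2\cdots p_kx_1x_3t'$. Its listed vertices are pairwise distinct and consecutive ones are adjacent, so $D$ is a genuine cycle. Going over all pairs of its vertices, one checks that $x_1x_2$, $x_2x_3$ and $x_3p_1$ are always chords; that $P$ being chordless and its interior having no neighbour in $K$ forbids every other chord among $\{x_1,x_2,x_3\}\cup V(P)$; and that the only potential further chords are edges $t'p_i$ with $2\le i\le k-1$. If $t'$ has no such neighbour, $D$ has exactly the three chords $x_1x_2,x_2x_3,x_3p_1$, the desired contradiction. Otherwise, if $t'p_j\in E(G)$ with $2\le j\le k-1$, then $t'p_jp_{j+1}\cdots p_k$ is an $N(K)$-connection from $T_1$ to $s_1$ that is strictly shorter than $P$ unless $j=2$; so by minimality the only surviving bad configuration is $t'\sim p_2$ (which in particular needs $k\ge 3$). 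In that last case I discard most of $P$ and use instead the six-cycle $D_2=t'p_2p_1x_2x_1x_3t'$: since $p_2$ has no neighbour in $K$, $p_1$ and $t'$ lie in $T_1$, and $x_1x_2x_3$ is a triangle with $x_1$ non-adjacent to $T_1$, a direct check shows $D_2$ has exactly the three chords $t'x_2$, $x_2x_3$ and $p_1x_3$ — again a contradiction. (When $p_1s_1\in E(G)$, $k=2$, so $P$ has no interior vertex, the bad configuration cannot occur, and $D$ already closes the argument.) I expect the chord bookkeeping for $D$ — isolating the single exceptional adjacency $t'\sim p_2$ and handling it separately via $D_2$ — to be the only delicate point; the rest is a routine verification from the definitions of $S_1$, $T_1$, $N(K)$-connection and the maximality of $K$.
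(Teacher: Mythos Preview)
Your proof is correct, and the overall strategy---pick a shortest $N(K)$-connection, build a cycle through $K$ and a non-neighbour $t'\in T_1$, and count chords---is the same as the paper's. The difference is in how you handle potential edges from $t'$ to the interior of $P$. You use the minimality of $P$ as a connection \emph{to $s_1$}, which only rules out $t'p_j$ for $j\ge 3$ and forces you into the separate $D_2$ case when $t'p_2\in E(G)$. The paper instead invokes Claim~\ref{connection} directly: any edge $t'p_j$ with $p_j$ in the interior of $P$ yields an $N(K)$-connection $t'p_jPp_1$ between two vertices of $T_1$, which is already forbidden. This kills all such edges at once (including $j=2$) and makes the $D_2$ case unnecessary. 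Your route works, but you are not using the full strength of Claim~\ref{connection}, only the consequence that $T_1$ is stable.
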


\begin{proofclaim}
Let $P$ be a minimal $N(K)$-connection from $s_1$ to $T_1$. 
Denote by $t_1 \in T_1$ the second endvertex of $P$. 
Assume by way of contradiction that there exists a vertex $t_2 \in T_1 \sm \{t_1\}$ that is not adjacent to $s_1$. 
Then there is no edge linking $t_2$ with a vertex of $P$, otherwise there would be an $N(K)$-connection from $t_1$ to $t_2$, contradicting (\ref{connection}). 
So,  $s_1Pt_1x_2t_2x_3x_1s_1$ is a 3-cycle with chords $x_1x_2$, $t_1x_3$ and $x_2x_3$, a contradiction.

So $s_1$ is complete to $T_1 \setminus \{t_1\}$ and, by minimality of $P$, $s_1$ is adjacent to $t_1$.
\end{proofclaim}

\begin{claim} \label{twin}
$N(T_1) \subseteq S_1 \cup K$.
\end{claim}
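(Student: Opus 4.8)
The plan is to show that any vertex $w \notin S_1 \cup K$ cannot have a neighbor in $T_1$. So suppose for contradiction that $w \notin S_1 \cup K$ has a neighbor $t \in T_1$; I would like to derive a $3$-cycle. First observe that by~(\ref{twin})'s hypothesis and~(\ref{degdansK}), $w$ has either no neighbor in $K$ (if $w \notin N(K)$) or, if $w \in N(K)$, then by~(\ref{U_i}) we have $w \in U_1 = S_1 \cup T_1$; since $w \notin S_1$ this forces $w \in T_1$. I would handle these two cases separately.

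In the case $w \in T_1$: here $w$ and $t$ are two adjacent vertices of $T_1$, and both are complete to $K \setminus \{x_1\}$. Pick three distinct indices $j,k,\ell \in \{2,\dots,\omega\}$ (possible since $\omega \ge 4$). Then the cycle $w\,x_j\,x_1\,x_k\,t\,w$ (using that $w,t$ are adjacent, $x_1 \in N(w)^c \cap N(t)^c$ in the relevant sense — actually $x_1$ is a non-neighbor of both) has chords $w x_k$, $t x_j$, and $x_j x_k$; with care in choosing which vertices sit where, this is a $3$-cycle, contradicting $G \in \m C_3$. I would double-check the chord count: the only possible additional edges among $\{w,x_j,x_1,x_k,t\}$ are $x_1 x_j, x_1 x_k$ (but $x_1$ is non-adjacent to $x_j, x_k$? no — $x_1 x_j, x_1 x_k$ are edges of $K$), so I must instead route the cycle to avoid $x_1$ having two cycle-neighbors; the clean choice is $w\,x_j\,x_k\,x_\ell\,t\,w$ is not a cycle since $w t$ is an edge giving length $5$ — let me instead use that $w$ is adjacent to $x_j, x_k, x_\ell$ and $t$ is adjacent to $x_j, x_k, x_\ell$ as well, so $w\,t$ together with the clique gives a $K$-minor-like structure; concretely $w x_j x_k x_\ell t w$ has chords $w x_k$, $w x_\ell$ (wait $x_\ell$ is on the cycle)... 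The correct approach: take the $5$-cycle $w\,x_j\,x_k\,x_\ell\,t\,w$? That has $w$ adjacent to $x_j$ and $t$ on the cycle, plus chords $w x_k, w x_\ell, t x_j, t x_k, x_j x_\ell$ — too many. So instead I should take a $4$-cycle $w x_j t x_k w$ which has the single chord $x_j x_k$, i.e.\ a diamond ($2$-cycle), which is fine — no contradiction. Hence the genuine contradiction in this case should come from combining with the existence of $s_1 \in S_1$: by~(\ref{connection}) and~(\ref{U_i}) there is no $N(K)$-connection between $T_1$ and itself through a vertex outside, but $w$ itself lies in $T_1$, so I must instead use~(\ref{complete})-style reasoning together with the two vertices of $S_1$ guaranteed after~(\ref{U_i}).

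In the case $w \notin N(K)$: then $w\,t$ is the start of a path from $w$ into $N(K)$; since $G \setminus K$ is connected by~(\ref{cliquecutset}) and $w \notin N(K)$, a shortest path from $w$ to $N(K) \setminus \{t\}$ inside $G \setminus K$ that avoids $t$, prefixed by the edge $t w$, would be an $N(K)$-connection with one endpoint $t \in T_1$; by~(\ref{connection}) the other endpoint lies in $U_1$ and the pair is not contained in $T_1$, hence the other endpoint is in $S_1$, so we get an $N(K)$-connection from $T_1$ to $S_1$, and by~(\ref{complete}) that $s_1 \in S_1$ is complete to $T_1$. I would then pick a second vertex $t' \in T_1 \setminus \{t\}$ (exists since $|T_1| \ge 2$) and a second vertex $s_1' \in S_1 \setminus \{s_1\}$ (exists since $|S_1| \ge 2$), and use the known complete-to-$K\setminus\{x_1\}$ adjacencies of $t, t'$ together with $s_1$ complete to $T_1$ to exhibit a cycle with exactly three chords — e.g.\ a cycle through $s_1, t, x_j, x_k, t', $ and back, where the chords are $s_1 t'$ (forced by completeness), $t x_k$ or $t' x_j$, and $x_j x_k$. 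The main obstacle, and where I expect to spend the real effort, is this last bookkeeping: pinning down exactly which among the many edges inside $K \cup \{s_1, t, t'\}$ are present and choosing the cyclic order so that precisely three of them become chords (not two, not four), using $K_4$-freeness is unavailable here since $\omega \ge 4$, so all the work is in the $3$-cycle count rather than in structural dichotomies.

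Once the contradiction is reached in both cases, no vertex outside $S_1 \cup K$ is adjacent to $T_1$, i.e.\ $N(T_1) \subseteq S_1 \cup K$, which is the claim.
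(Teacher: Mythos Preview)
Your proposal has the right outline but two genuine gaps.

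First, the case $w\in T_1$ is a red herring: by the paper's convention $N(T_1)\subseteq V(G)\setminus T_1$, so this case does not arise. Even if you insist on treating it, the single-edge path $wt$ is an $N(K)$-connection with both endpoints in $T_1$, which (\ref{connection}) forbids directly; your long detour through cycle-chord bookkeeping is unnecessary and never reaches a conclusion.

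Second, and more seriously, in the case $w\notin N(K)$ you correctly aim to build an $N(K)$-connection from $t$ through $w$ to some $s_1\in S_1$ and then invoke (\ref{complete}), but you never justify why a path from $w$ to $N(K)$ avoiding $t$ exists: this needs that $t$ is not a cutvertex, i.e.\ (\ref{cliquecutset}). The fatal gap, however, is what comes after: having found $s_1$ complete to $T_1$, you try to exhibit a $3$-cycle using only $s_1,t,t',x_j,x_k$ and \emph{discard the path through $w$}. This cannot work. Your proposed $5$-cycle $s_1\,t\,x_j\,x_k\,t'\,s_1$ has exactly two chords ($t x_k$ and $t' x_j$; recall $T_1$ is stable and $s_1$ sees only $x_1$ in $K$), not three, and no rearrangement of these five vertices gives a $3$-cycle. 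The contradiction must use the path $P$ itself: since $w\notin N(K)$, the path $t\,w\,\cdots\,s_1$ has length $\ge 2$, so the cycle $t\,P\,s_1\,x_1\,x_2\,x_3\,t$ is well-defined and has exactly the three chords $s_1 t$ (from completeness), $x_1 x_3$ (clique edge), and $t x_2$ (since $t\in T_1$). All other potential chords vanish because interior vertices of $P$ miss $K$, $s_1$ sees only $x_1$ in $K$, and $t$ misses $x_1$. This is the missing idea in your argument.
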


\begin{proofclaim}
Assume by contradiction that there exists $t_1 \in T_1$ such that  $N(t_1) \nsubseteq S_1 \cup K$. 
Since $t_1$ is not a cutvertex by~(\ref{cliquecutset}), consider a minimal path $P'$ from $N(t_1) \sm (S_1 \cup K)$ to $N(K)$ in $G \sm \{ t_1 \}$. 
Call $t_1'$ and $x$ the extremities of $P$ with $t_1' \in N(t_1) \sm (S_1 \cup K)$ and put $P=t_1Px$.
Observe that if $t_1x$ is not an edge, then $t_1Px$ is an $N(K)$-connection and that in both cases there exists an $N(K)$-connection linking $t_1$ and $x$. 
So, by (\ref{connection}), $x \notin T_1$. 
Hence, by (\ref{complete}), $x$ is complete to $T_1$ and in particular $xt_1$ is an edge.
 Finally $t_1Ps_1x_1x_2x_3t_1$ is a 3-cycle with chords $s_1t_1$, $x_1x_3$ and $t_1x_2$, a contradiction.
\end{proofclaim}

\begin{claim} \label{final}
For any vertex $t \in T_1$, $N(t)=S_1 \cup K \sm \{x_1\}$.
\end{claim}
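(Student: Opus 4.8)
The plan is to leverage the structure already established. Since $t\in T_1$, we have $K\setminus\{x_1\}\subseteq N(t)$ and $x_1\notin N(t)$, while by~(\ref{twin}) we have $N(t)\subseteq S_1\cup K$; hence $N(t)\subseteq S_1\cup(K\setminus\{x_1\})$, and it suffices to prove that $t$ is complete to $S_1$.

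First I would record two preliminary facts. Put $A=N(T_1)\cap S_1=\{s\in S_1 : s$ has a neighbour in $T_1\}$. If $s\in A$, then the edge from $s$ to its $T_1$-neighbour is an $N(K)$-connection, so~(\ref{complete}) gives that $s$ is complete to $T_1$; in particular every vertex of $T_1$ is adjacent to every vertex of $A$. Moreover $A\neq\emptyset$: otherwise~(\ref{twin}) forces $N(t)\subseteq K$ for every $t\in T_1$, so $N(t)=K\setminus\{x_1\}$ has size $\omega-1$, contradicting~(\ref{mindeg}) (as $\omega\ge 4$). Thus it is enough to prove $A=S_1$, since then $t$, being in $T_1$, is adjacent to all of $A=S_1$. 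Also recall that by~(\ref{degdansK}) and~(\ref{U_i}) we have $N(K)\setminus K=S_1\cup T_1$.

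Now suppose for contradiction that some $s_0\in S_1\setminus A$. Then $s_0$ has no neighbour in $T_1$, and $s_0\notin A$; since $G\setminus K$ is connected by~(\ref{cliquecutset}) and $A\neq\emptyset$, choose a shortest path $Q=q_0q_1\cdots q_m$ in $G\setminus K$ with $q_0=s_0$ and $q_m\in A$, so $m\ge 1$. Then $Q$ is induced, $q_i\notin A$ for $i<m$, and no internal $q_i$ lies in $T_1$ (if $q_i\in T_1$ with $0<i<m$, then $q_{i-1}\in N(T_1)\setminus K\subseteq S_1$ by~(\ref{twin}), whence $q_{i-1}\in A$, contradicting $q_{i-1}\notin A$; for $i=1$ this would say $s_0\in A$). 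Consequently every $q_i$ (including $q_m\in S_1$) satisfies $N_K(q_i)\subseteq\{x_1\}$, and every $q_i$ lying in $S_1$ has no $T_1$-neighbour. Let $\ell$ be the largest index in $\{0,\dots,m-1\}$ with $q_\ell x_1\in E(G)$ (well defined, since $q_0=s_0\in S_1$ gives $q_0x_1\in E(G)$), fix any $t_1\in T_1$, and use the clique vertices $x_2,x_3$. Then $q_mt_1\in E(G)$ (as $q_m\in A$), $t_1x_2,t_1x_3\in E(G)$ and $t_1x_1\notin E(G)$ (as $t_1\in T_1$), $s_0t_1\notin E(G)$, and $q_mx_1\in E(G)$. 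Consider the cycle
$$C=x_1\,q_\ell\,q_{\ell+1}\cdots q_{m-1}\,q_m\,t_1\,x_2\,x_3\,x_1.$$
Its vertices are pairwise distinct, and its chords are exactly $x_1x_2$, $x_1q_m$ and $x_3t_1$: maximality of $\ell$ kills all chords $x_1q_i$ with $\ell<i\le m-1$; (\ref{twin}) together with $q_i\notin A$ kills all chords $t_1q_i$ with $\ell\le i\le m-1$; $N_K(q_i)\subseteq\{x_1\}$ kills all chords $x_2q_i$ and $x_3q_i$; and $Q$ being induced kills all chords inside the segment $q_\ell\cdots q_m$. Hence $C$ is a cycle with exactly three chords, contradicting $G\in\m C_3$. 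Therefore $A=S_1$, so $t$ is complete to $S_1$ and $N(t)=S_1\cup K\setminus\{x_1\}$.

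The main obstacle is controlling the chords of $C$: a priori an internal vertex $q_i$ of $Q$ could be adjacent to $x_1$, adding unwanted chords. The device that resolves this is to re-root the cycle at $q_\ell$, the \emph{last} vertex on the $s_0$-side of $Q$ adjacent to $x_1$, rather than at $s_0$ itself; this keeps $x_1q_\ell$ a genuine cycle edge while forbidding $x_1q_i$ for every later $i$. The degenerate configurations ($m=1$, or $\ell=m-1$) should be checked but cause no trouble.
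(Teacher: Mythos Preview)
Your proof is correct, and in fact it is more careful than the paper's own argument at the crucial step. The paper writes ``So $t$ is adjacent to at least one vertex in $S_1$ and thus, by~(\ref{complete}), $t$ is complete to $S_1$'', but~(\ref{complete}) points the other way: from an $N(K)$-connection between $T_1$ and some $s_1\in S_1$ one concludes that $s_1$ is complete to $T_1$, not that the $T_1$-endpoint is complete to $S_1$. What is actually needed is that \emph{every} $s\in S_1$ has an $N(K)$-connection to $T_1$ (equivalently $A=S_1$ in your notation), and this does not follow immediately from~(\ref{complete}); the paper glosses over this. Your shortest-path construction together with the re-rooting device at $q_\ell$ supplies exactly this missing piece by exhibiting an explicit $3$-cycle whenever $S_1\setminus A\neq\emptyset$. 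The chord count is clean: maximality of $\ell$ handles $x_1q_i$, claim~(\ref{twin}) combined with $q_i\notin A$ for $i<m$ handles $t_1q_i$, and $N_K(q_i)\subseteq\{x_1\}$ handles $x_2q_i,\,x_3q_i$. One cosmetic point: the line ``every $q_i$ lying in $S_1$ has no $T_1$-neighbour'' should be restricted to $i<m$ (since $q_m\in A$ certainly has $T_1$-neighbours), but you only ever use it for $\ell\le i\le m-1$, so nothing is affected.
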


\begin{proofclaim}
Let $t \in T_1$.
By~(\ref{connection}), $T_1$ is a stable set. 
So if $t$ is not adjacent to any vertex of $S_1$, $N(t)=K \sm \{x_1\}$, a contradiction to (\ref{mindeg}). 
So $t$ is adjacent to at least one vertex in $S_1$ and thus, by~(\ref{complete}), $t$ is complete to $S_1$.
\end{proofclaim}

Let $t_1$ and $t_2$ be two distinct vertices in $T_1$ (remind that they exist because  if $|T_1|=1$, then $d(x_2)=\omega$, contradicting~(\ref{mindeg})). 
By~(\ref{final}), $N(t_1)=N(t_2)=S_1 \cup K \sm \{x_1\}$.
By minimality of $G$, $G \sm \{t_2\}$ admits a proper coloring $\gamma$ with $\max (4c, \omega +1)$ colors. Since $t_1t_2 \notin E(G)$ and $N(t_1)=N(t_2)$, $\gamma$ can be extended to $G$ by giving  to $t_2$ the same color as $t_1$.
\end{proof}

\end{document}